\newcommand\vldbdoi{XX.XX/XXX.XX}
\newcommand\vldbpages{XXX-XXX}
\newcommand\vldbavailabilityurl{URL_TO_YOUR_ARTIFACTS}
\newcommand\vldbvolume{18}
\newcommand\vldbpagestyle{empty}
\newtheorem{example}{Example}
\newtheorem{myLemma}{Lemma}
\newtheorem{myDef}{Definition}
\begin{document}

\title{\textsc{Birdie}:  Natural Language-Driven Table Discovery \\Using Differentiable Search Index}
 
\author{Yuxiang Guo}
\affiliation{%
  \institution{Zhejiang University}
}
\email{guoyx@zju.edu.cn}

\author{Zhonghao Hu}
\affiliation{%
  \institution{Zhejiang University}
}
\email{zhonghao.hu@zju.edu.cn}

\author{Yuren Mao}
\affiliation{%
  \institution{Zhejiang University}
}
\email{yuren.mao@zju.edu.cn}

\author{Baihua Zheng}
\affiliation{%
  \institution{Singapore Management University}
}
\email{bhzheng@smu.edu.sg}

\author{Yunjun Gao}
\affiliation{%
  \institution{Zhejiang University}
}
\email{gaoyj@zju.edu.cn}

\author{Mingwei Zhou}
\affiliation{%
  \institution{Zhejiang Dahua Technology Co., Ltd}
}
\email{zhoumingwei_hz@163.com}

\begin{abstract}
Natural language (NL)-driven table discovery  identifies relevant tables from large table repositories based on NL queries. 
While current deep-learning-based methods using the traditional dense vector search pipeline, i.e., \textit{representation-index-search}, achieve remarkable accuracy, they face several limitations that impede further performance improvements: (i) the errors accumulated during the table representation and indexing phases affect the subsequent search accuracy; and (ii) insufficient query-table interaction hinders effective semantic alignment, impeding accuracy improvements.
In this paper, we propose a novel framework \textsc{Birdie}, using a differentiable search index. It unifies the indexing and search into a single encoder-decoder language model, thus getting rid of error accumulations.
\textsc{Birdie} first assigns each table a prefix-aware identifier and leverages a large language model-based query generator to create synthetic queries for each table. 
It then encodes the mapping between synthetic queries/tables and their corresponding table identifiers into the parameters of an encoder-decoder language model, enabling deep query-table interactions.
During search, the trained model directly generates table identifiers for a given query.
To accommodate the continual indexing of dynamic tables, we introduce an index update strategy via parameter isolation, which mitigates the issue of catastrophic forgetting.
Extensive experiments demonstrate that \textsc{Birdie} outperforms state-of-the-art dense methods by 16.8\% in accuracy, and reduces forgetting by over 90\% compared to other continual learning approaches.

\end{abstract}
\maketitle

\pagestyle{\vldbpagestyle}
\begingroup\small\noindent\raggedright\textbf{PVLDB Reference Format:}\\
Yuxiang Guo, Zhonghao Hu, Yuren Mao, Baihua Zheng, Yunjun Gao, Mingwei Zhou.  \textsc{Birdie}: Natural Language-Driven Table Discovery Using
Differentiable Search Index. PVLDB, \vldbvolume(X): \vldbpages, 2025.\\
\href{https://doi.org/\vldbdoi}{doi:\vldbdoi}
\endgroup
\begingroup 
\renewcommand\thefootnote{}\footnote{\noindent
This work is licensed under the Creative Commons BY-NC-ND 4.0 International License. Visit \url{https://creativecommons.org/licenses/by-nc-nd/4.0/} to view a copy of this license. For any use beyond those covered by this license, obtain permission by emailing \href{mailto:info@vldb.org}{info@vldb.org}. Copyright is held by the owner/author(s). Publication rights licensed to the VLDB Endowment. \\
\raggedright Proceedings of the VLDB Endowment, Vol. \vldbvolume, No. 8\ %
ISSN 2150-8097. \\
\href{https://doi.org/\vldbdoi}{doi:\vldbdoi} \\
}\addtocounter{footnote}{-1}\endgroup

\ifdefempty{\vldbavailabilityurl}{}{
\vspace{0.3cm}
\begingroup\small\noindent\raggedright\textbf{PVLDB Artifact Availability:}\\
The source code, data, and/or other artifacts have been made available at \url{https://github.com/ZJU-DAILY/BIRDIE}.
\endgroup
}

\section{Introduction}
\label{sec:intro}
Tables are a prevalent format for data storage across governmental institutions~\cite{TUS}, businesses~\cite{Auto-BI}, and the Web~\cite{TURL}. They contain vast amounts of information that can drive  decision-making~\cite{ARM-Net}. However, the sheer volume of tabular data complicates the process for users  to locate relevant tables in large repositories or data lakes~\cite{LakeBench}. In response, the data management community has developed table discovery methods that allow users to search for tables based on various query formats, such as keywords~\cite{AdelfioS13,GoogleSearch}, base tables~\cite{Deepjoin,starmine}, and natural language queries~\cite{Solo,OpenDTR}. Natural language (NL) queries, in particular, are user-friendly and empower non-technical users to express their needs more precisely. As an example shown in Figure~\ref{fig:exm1}, assume that a user wants to know who starred in the movie ``on golden pond''. NL-driven table discovery aims to identify $T_1$ from the large table repository, as it contains a cell to answer this query.
Once table $T_1$ is retrieved, tools like NL2SQL~\cite{FinSQL} or Large Language Models~\cite{ReAcTable} can be used to formulate a response to the query.

\begin{figure}[t]
  \centering
  \includegraphics[width=1\linewidth]{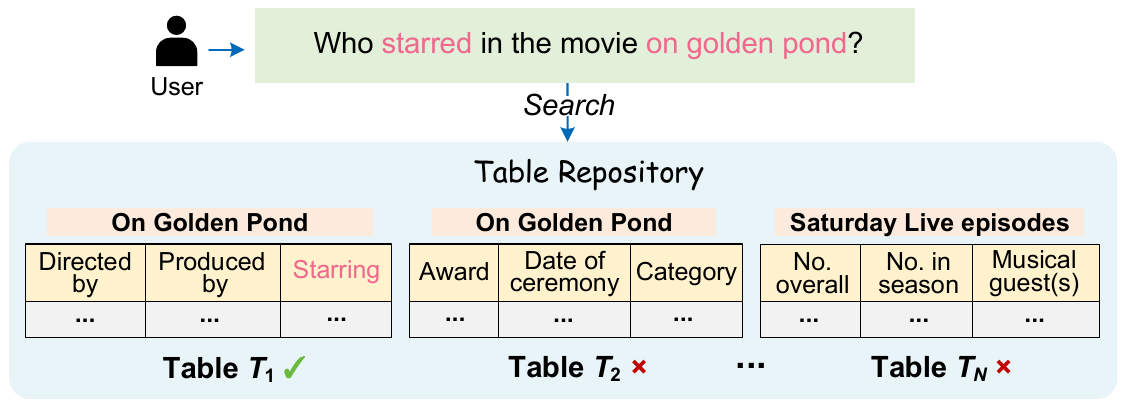}\vspace{-2mm}
  \caption{{An example of NL-driven table discovery.}
  \label{fig:exm1}}
    \vspace{-3mm}
\end{figure}

The success of deep learning techniques in various fields has led to the emergence of the state-of-the-art (SOTA) NL-driven table discovery methods~\cite{Solo, OpenDTR}, which typically follow the traditional dense vector search pipeline involving \textit{representation}, \textit{indexing} and \textit{search}. As illustrated in Figure~\ref{fig:pipeline}(a), a bi-encoder system is trained for representation, comprising a table encoder and a query encoder. The table encoder transforms each table into a fixed-dimensional embedding, and the indexes are constructed on these embeddings. During the search phase, the query embedding is generated using the query encoder, and the nearest neighbor search (NNS) or approximate NNS (ANNS) is conducted to locate table embeddings that closely resemble the query embedding, leveraging pre-constructed indexes. Although this paradigm has significantly outperformed sparse methods (e.g. BM25)~\cite{Solo},  it faces two major limitations.
\begin{itemize} [leftmargin=*]

\item{\textbf{Cumulative Errors from the Multi-Stage Process}.} 
The separation of representation, indexing, and search results in the accumulation of errors from one stage to the next.
First, capturing the rich and complex information within tables using a single vector remains a challenge~\cite{Observatory,TabReps}. Insufficient table representations degrade the effectiveness of search results. Second, the choice of indexing techniques can greatly impact the search results; for instance, discrepancies between the inverted index and ANNS index can negatively affect the accuracy~\cite{starmine}.


\item{\textbf{Insufficient Query-Table Interactions}.}
Encoding the query-table pair using a cross-encoder enables deep query-table interactions,
thus enhancing semantic alignment and improving accuracy~\cite{AdHoc_TR}. However, it is computationally intensive. 
For online search efficiency, current dense search methods typically encode the query and table independently, and rely on basic similarity computations (e.g., cosine similarity) between their embeddings~\cite{Solo,OpenDTR},
which inadequately capture the nuanced query-table interactions necessary for effective retrieval.

\end{itemize}

\begin{figure}[t]
  \centering
  \includegraphics[width=1\linewidth]{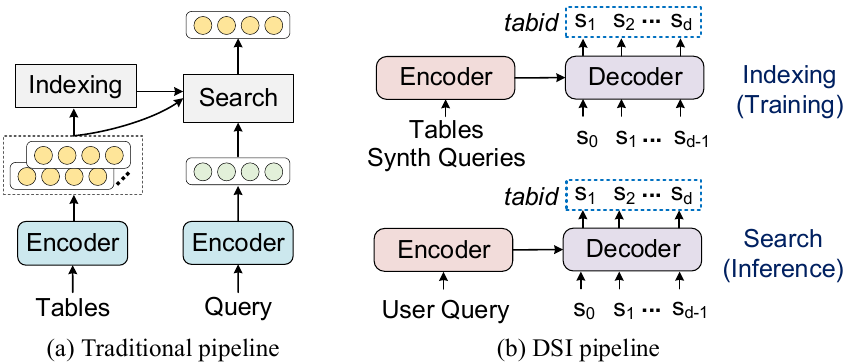}\vspace{-2mm}
  \caption{{ Traditional dense pipeline vs. DSI pipeline.}
  \label{fig:pipeline}}
\end{figure}

To overcome these limitations, this paper introduces \textsc{Birdie}, a novel framework for NL-driven table discovery using a Differentiable Search Index (DSI)~\cite{DSI}. \textsc{Birdie} unifies both indexing and search into 
an encoder-decoder Transformer~\cite{attention} architecture. In this framework, indexing is integrated into the model training process, while search is conducted through model inference, as shown in Figure~\ref{fig:pipeline}(b).
Specifically, each table in the given table repository is assigned a unique table identifier (tabid), represented as a sequence $\mathbf{s}= (s_1, s_2, \dots, s_d)$.
During the training (indexing) phase, the model learns to map each table to its corresponding tabid, and associates generated synthetic queries with the tabids of relevant tables.
This design encodes the table information within the model's parameters, ensuring an end-to-end differentiable training process.
Additionally, training with synthetic queries and tabids fosters deep interactions between queries and tables through encoder-decoder attention~\cite{attention}.
During the inference (search) phase, the trained model receives a query and directly generates tabids token by token.
Despite some successful attempts of DSI in certain applications~\cite{NCI,Tiger,CorpusLM}, three primary challenges emerge in developing an effective table discovery approach based on DSI:

\vspace{0.02in}
\noindent
\textbf{Challenge I: }\textit{How to design prefix-aware tabids to capture the complex semantics in tabular data?} 
The tabid generation during the search phase is autoregressive,
as shown in Figure~\ref{fig:pipeline}(b). Thus, it is crucial to design prefix-aware tabids so that tabids of similar tables share similar prefixes, thereby enhancing accuracy. However, existing ID generation methods~\cite{DSI-QG,Zeng,DSI} are designed for flattened, short documents and are not capable of capturing the complex, hierarchical semantics inherent in tabular data.
To this end, we propose a two-view-based clustering algorithm, i.e., metadata-view and instance-data-view, to generate  tabids for each table.
Through inter-view and intra-view hierarchical modeling, we derive prefix-aware semantic tabids for each table.

\vspace{0.02in}
\noindent
\textbf{Challenge II: }\textit{How to automatically collect NL queries tailored to tabular data for model training?}
Collecting high-quality, diverse, and table-specific queries and associating them with tabids during the training (indexing) phase effectively simulates the subsequent search phase~\cite{DSI-QG}, thus improving search accuracy.
However, collecting queries for large table repositories manually is intractable. Although some synthetic query generation methods have been proposed, they either focus on flattened text paragraphs~\cite{docT5query,NCI} or rely on a two-stage transformation (table-to-SQL and SQL-to-NL)~\cite{Solo}. The former overlooks the structured and non-continual semantics of tabular data, while the latter suffers from quality issues due to the error accumulations across two transformation stages. 
To tackle this, we train a query generator tailored to tabular data using powerful Large Language Models (LLMs),
and design a table sampling strategy to generate diverse and high-quality NL queries. 

\vspace{0.02in}
\noindent
\textbf{Challenge III: }\textit{How to continually index new tables while alleviating the catastrophic forgetting?}
When new tables are added to the repository, they require new tabids and updates to the parameters of the existing model to incorporate new information.
Training the model from scratch using all the tables whenever the repository changes is resource-intensive, while updating model with only new tables can lead to catastrophic forgetting. Although recent studies~\cite{CLEVER, DSI++} suggest replaying some old data during continual training, we still observe several catastrophic forgetting (see experiments in Section~\ref{subsec:index_update}).
To tackle this, we design an incremental algorithm for efficient tabid assignment, and a parameter-isolation method to maintain trained model unchanged while training a memory unit for each new batch of tables.

\vspace{0.02in}
\noindent
\textbf{Solution.} Incorporating techniques that address these challenges, we present \textsc{Birdie}, a novel framework for natural language-driven ta\underline{b}le d\underline{i}scove\underline{r}y via \underline{d}ifferentiable search \underline{i}nd\underline{e}x. The main contributions of this paper are summarized as follows:

\begin{itemize} [leftmargin=*]
\item{} \emph{Differentiable framework.} 
We propose \textsc{Birdie}, an end-to-end differentiable framework for NL-driven table discovery. To the best of our knowledge, it is the first attempt to perform table discovery using differentiable search index,  which completely subverts the convention of previous representation-index-search methods.

\item{} \emph{Prefix-aware tabid construction.} We design a simple yet effective two-view-based clustering approach to model both explicit and implicit hierarchical information embedded in tabular data. Based on this, we assign a prefix-aware tabid to each table, which is well-suited for the autoregressive decoding.

\item{} \emph{LLM-powered query generator.} We continually refine an open-source LLM for better understanding of tabular data, and construct a tailored query generator with a table sampling strategy to create high-quality and diverse NL queries for model training. 

\item{} \emph{Effective continual indexing.} We design an incremental method for tabid assignment that avoids re-clustering, and introduce a parameter isolation-based strategy to effectively index new tables while mitigating catastrophic forgetting.

\item{} \emph{Extensive experiments.}
Our extensive experiments on three benchmark datasets demonstrate the superiority of \textsc{Birdie}, achieving significant accuracy improvements against SOTA methods.
\end{itemize}

The remainder of this paper is organized as follows. Section
~\ref{sec:pre} provides the preliminaries related to our work. Section~\ref{sec:overview} presents the overview of \textsc{Birdie}. 
Section~\ref{sec:ifs} and Section~\ref{sec:index_update} introduce indexing from scratch and index update, respectively.
Section~\ref{sec:exp} reports experimental results and our findings.
Section~\ref{sec:case} provides a case study,
Section~\ref{sec:relatedwork} reviews related works,
and Section~\ref{sec:conlusion} concludes the paper, with directions for future work.

\section{Preliminaries}
\label{sec:pre}
In this section, we first provide the problem statement, and then introduce the backgrounds of large language models and the low-rank adaptation technique.


		 

\begin{figure*}[t]
  \centering
  \includegraphics[width=1\linewidth]{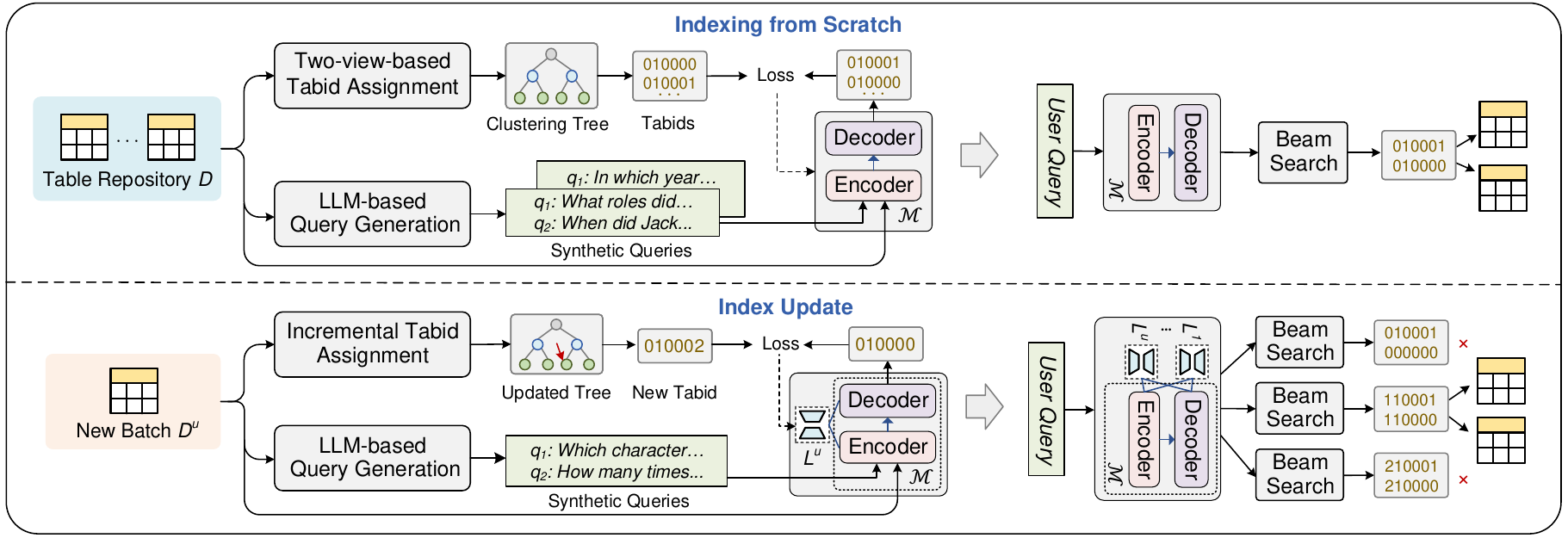}\vspace{-2mm}
  \caption{{The framework of \textsc{Birdie}.
  }
  \label{fig:framework}}
    \vspace{-2mm}
\end{figure*}

\vspace{0.02in}
\noindent
\textbf{Problem Statement.} 
Let $T$ be a relational table with a schema $S = \{A_1, A_2, \cdots, A_n\}$, where each $A_i$ represents an attribute (column). The table may also have a caption (title) $C$, which is a brief description summarizing its content. $T$ consists of $m$ tuples, with each tuple $t_i \in T$ containing $n$ cells, denoted as $e_{ij} = t_i[A_j]$. A table repository $D = \{T_1, T_2, \cdots, T_N \}$ is a collection of such tables.

\begin{myDef}\label{defn:table-discovery}
(NL-Driven Table Discovery). Given a natural language query $q$ and a table repository $D$, NL-Driven Table Discovery aims to search for a table $T^* \in D$ that contains the answer to the query. The answer may be a specific cell within $T^*$ or derived from the reasoning across multiple cells in $T^*$.
\end{myDef}

The objective of NL-driven table discovery is to locate a table relevant to the given query. We leave the scenario where the answer spans multiple tables for future work. Another line of research focuses on ranking a small set of candidate tables based on the query using cross-encoders~\cite{GTR, AdHoc_TR}, which typically serves as a post-retrieval step and can be time-consuming. While Solo~\cite{Solo} proposed a re-ranking model after retrieval, this paper focuses on the retrieval stage, allowing for any existing ranking techniques to be applied after \textsc{Birdie} returns a list of tables. 
 
\begin{myDef}\label{defn:DSI}
(Differentiable Search Index). Given a query space $Q$ and a table repository $D$, a differentiable search index is defined as a function $\mathcal{I}_\theta: Q \rightarrow D$, which is differentiable w.r.t. the parameters $\theta$, allowing optimization via gradient-based methods.
\end{myDef}
 
\vspace{-0.07in}
In traditional search indexes, the mapping $\mathcal{I}(q), q \in Q$, is typically based on predefined, discrete, and non-differentiable operations or structures (e.g., inverted index). In contrast, a Differentiable Search Index parameterizes $\mathcal{I}_\theta$ as a neural model, making the search index differentiable w.r.t. $\theta$ and enabling end-to-end optimization.

\vspace{0.02in}
\noindent
\textbf{Large Language Models.}
Due to the outstanding ability to handle sequential data and capture complex dependencies, Transformer~\cite{attention} has become the main backbone for most large language models (LLMs). 
It consists of an encoder and a decoder, with both components relying on layers of multi-head attention and feed-forward neural (FFN) networks.
Transformer-based LLMs typically adopt one of three architectures: encoder-only, decoder-only, and encoder-decoder. 
Encoder-only models, such as BERT~\cite{BERT}, utilize only the encoder part of the Transformer. These models focus on generating latent embeddings of input text and have found extensive applications in data management~\cite{camper, ZhangMXWX23}.
Decoder-only models, such as GPT-4~\cite{openai_gpt4} and Llama~\cite{llama}, excel in diverse generative tasks, including question answering~\cite{MAR} and NL2SQL~\cite{LLM_NL2SQL}. In this paper, we choose Llama, a widely used open-source decoder-only model, as the base model for query generation.  Finally, encoder-decoder models, also known as sequence-to-sequence models, such as BART~\cite{BART} and T5~\cite{T5}, use an encoder to create a latent representation of the input, which is then passed to the decoder to generate a new sequence. \textsc{Birdie}  uses this encoder-decoder model to construct the differentiable search index, with the encoder processing tables and queries while the decoder generates the corresponding tabids, facilitating a sequence-to-sequence process.

\vspace{0.02in}
\noindent
\textbf{Low-Rank Adaptation}.
As the parameter size of LLMs grows rapidly, full-parameter fine-tuning has become impractical due to the significant resource and time costs involved. To address this, parameter-efficient fine-tuning (PEFT) techniques have been proposed.
Low-Rank Adaptation (LoRA)~\cite{LoRA}, one of the most  recognized PEFT techniques, adjusts the weights of additional rank decomposition matrices and demonstrates comparable effectiveness across various downstream tasks. Specifically, for a pre-trained weight matrix $\mathbf{W}_0 \in \mathbb{R}^{d_1 \times d_2}$, LoRA constrains the update $\Delta \mathbf{W}$ by representing it as a low-rank decomposition $\mathbf{W}_0 + \Delta \mathbf{W} = \mathbf{W}_0 + \mathbf{BA}$, where $\mathbf{B} \in \mathbb{R}^{d_1 \times d_r}$ and $\mathbf{A} \in \mathbb{R}^{d_r \times d_2}$, with $d_r \ll min(d_1, d_2)$. 
Let $\mathbf{h} = \mathbf{W}_0 \mathbf{x}$, then the modified forward pass is: $\mathbf{h}' =  \mathbf{W}_0 \mathbf{x} + \Delta \mathbf{Wx} =  \mathbf{W}_0 \mathbf{x} +  \mathbf{BA}\mathbf{x}$.
%
%
During training, the pre-trained  $\mathbf{W}_0$ remains frozen, with only the low-rank matrices $\mathbf{B}$ and $\mathbf{A}$ being learnable. As $\mathbf{B}$ and $\mathbf{A}$ are much smaller than $\mathbf{W}_0$, the fine-tuning costs for LLMs are significantly reduced. The trained LoRA weights can either be merged with the pre-trained weights or used as a plug-and-play module during inference, without altering the original model.

\section{Overview of \textsc{Birdie}}
\label{sec:overview}

The overview of \textsc{Birdie} is illustrated in Figure~\ref{fig:framework}. \textsc{Birdie} supports both indexing from scratch and updating the index with newly added tables.

Given a table repository $D$, the two-view-based tabid assignment module generates a prefix-aware tabid for each table $T \in D$ by constructing a clustering tree. The tabid $\mathbf{s} = (s_1, s_2, \dots, s_d)$ is a numerical sequence, ensuring that semantically similar tables share similar (or identical) prefixes. 
The tables in the repository are also taken by the LLM-based query generator to produce synthetic queries for each table. Both the generated queries and the serialized tables are input to the encoder-decoder architecture.
The encoder transforms these inputs into latent representations, capturing the essential semantics and information of the sequences. The decoder then uses these latent representations, along with a special beginning-of-sequence token, to generate tabids token by token, until the end-of-sequence token is produced. The model is trained to align the predicted tabids with the true tabids for each table/query using a sequence-to-sequence language modeling loss. After training, we obtain a trained DSI model $\mathcal{M}$. During the search (inference) phase, the trained model takes a user's query as input and generates tabids token by token. To produce a list of probable tabids, \textsc{Birdie} adopts beam search~\cite{BeamSearch}, exploring multiple decoding paths by maintaining a set of the most promising partial sequences at each decoding step.  Ultimately, the corresponding tables are returned to the user.

When a new batch of tables $D^u (u>0)$ arrives, \textsc{Birdie} performs an index update. First, the incremental tabid assignment strategy assigns a tabid for each table $T^u_i \in D^u$,
accounting for the semantics of all previous tables while preserving existing tabids. Next, the query generator generates queries for the new tables. Based on the model $\mathcal{M}$ trained from the original repository $D$, a memory unit $L^u$ is trained on the new tables $D^u$ and the generated queries, using LoRA techniques. During this training, the parameters of $\mathcal{M}$ remain frozen, while only the LoRA weights being trainable. During search, all existing models (including $\mathcal{M}$ and all the memory units) generate tabids in parallel. A query mapping strategy is then employed to select the final tables from the model outputs.

\section{Indexing from Scratch}
\label{sec:ifs}
In this section, we first present the two-view-based tabid assignment mechanism, which assigns a semantic and prefix-aware tabid to each table in the repository. Then, we introduce the LLM-based query generation method to generate synthetic NL queries for DSI training. Finally, we outline the training and inference processes.

\subsection{Two-view-based Tabid Assignment}
\label{subsec:clustering}
Given that a table typically contains many rows and columns, it is impractical for the DSI model to generate the contents of a table directly in response to a given input query. Instead, a practical strategy is to first assign a unique tabid to each table and guide the model to generate the corresponding tabid. The method of tabid assignment is crucial for effective model learning.

Tabids can take the form of numerical sequences or textual descriptions (e.g. a brief abstract of a table). 
While textual tabids align more closely with the Transformer's pre-training process, ensuring their uniqueness can be challenging, especially in large and dynamic table repositories or data lakes. Therefore, we choose numerical tabids to guarantee uniqueness.
A straightforward approach is to assign each table a unique atomic identifier, such as $0, 1, \dots, N-1$. However, this method results in tabids that lack semantic relationships, which limits the Transformer-based model's ability to fully leverage its semantic capture capabilities and complicates the indexing process. 
Additionally, since the tabid generation process of the decoder is autoregressive, each tabid is generated in multiple steps, with each step influenced by previously generated tokens. Considering these factors, we design a two-view-based tabid assignment strategy that ensures (i) tabids encapsulate the semantics of the corresponding tables, and (ii) tables with similar tabid prefixes exhibit higher semantic similarity,  which enhances the autoregressive decoding process.

\vspace{1mm}
\noindent {\textbf{Two Views of Table Semantics.}} We extract two views of each table $T$.
The first view $V_1(T)$ contains high-level semantics about the primary topic of the table. Since the metadata of a table often provides essential context or properties related to its subject or usage, we utilize this metadata as the first view. Specifically, we extract the table's caption $C$ and schema $S = \{A_1, A_2, \dots, A_n\}$, concatenating them to form a sequence: 
$V_1(T) :=   C , A_1, A_2,  \dots ,A_n$.
Note that metadata can sometimes be incomplete in real-world table repositories. To tackle this, one approach could be to train a metadata generation model to augment the tables with additional information. However, in this paper, we will simply skip any missing elements during serializing.

The second view $V_2(T)$ is derived by concatenating the cell values of each table, providing a more fine-grained perspective. To enhance the understanding of different columns, we also include the attribute names in the cell value sequence:
$V_2(T) =  A_1 : e_{11}, \dots A_n : e_{1n} \, \dots, A_1: e_{m1}, \dots, A_n: e_{mn}$.

Subsequently, we employ a pre-trained language model~\cite{Sentence-T5} to encode both $V_1(T)$ and $V_2(T)$, yielding the semantic embeddings $\mathbf{h}^1$ and $\mathbf{h}^2$ for each table.

\begin{algorithm}[!tb]
\DontPrintSemicolon
\small
	\LinesNumbered
    \caption{\small\textbf{Two-view-based Clustering Algorithm (\textsf{TCA})}}
	\label{alg:tca}
	
	\KwIn{the embeddings $\{\mathbf{h}_i^1\}_{i=1}^N$ and $\{\mathbf{h}_i^2\}_{i=1}^N$ of each table $\{T_i\}_{i=1}^N$, the number $k$ of clusters, the  maximum size $c$ of leaf clusters, and the maximum depth $l$ of the first view}
    \label{alg:cluster}
	\KwOut{the root $ \mathcal{C}_R $ of the clustering tree}

    $level \leftarrow 0$,  $\mathcal{C}_R \leftarrow  \{\mathbf{h}_i^1\}_{i=1}^N$\\
    \textsf{HClus} ($\mathcal{C}_R$, $k$, $c$, $level$)\\
 
    \SetKwProg{Fn}{Procedure}{:}{\KwRet}
    \Fn{\textnormal{\textsf{HClus}($\mathcal{C}$, $k$, $c$, $level$)}}{

    \lIf{$level = l$}{
        $\mathcal{C}  \leftarrow \{\mathbf{h}_i^2 | \mathbf{h}_i^1 \in  \mathcal{C}\}$  // view switching
        }
    $\mathcal{C}_{1:k} \leftarrow k$-means$( \mathcal{C}, k)$,  $\mathcal{C}.\textnormal{addChild}(\mathcal{C}_{1:k})$\\
    
        \ForEach{ i $\in \{1, \dots, k\}$}{
            \lIf{$|\mathcal{C}_{i}| > c$}{
                \textsf{HClus}($\mathcal{C}_{i},  k$, $c$, $level + 1$) 
            }
        }
    }
    \Return {$\mathcal{C}_R$}
\end{algorithm}

\vspace{1mm}
    \noindent{\textbf{Clustering Algorithm.}} Building on the two-view embeddings, we design the two-view-based clustering algorithm (\textsf{TCA}). The core idea is to perform clustering based on the semantic embeddings from both views. The first view captures high-level semantics, allowing us to group tables with related topics into the same cluster. We iteratively perform clustering based on $\{\mathbf{h}^1\}$ for $l$ iterations. Then, we switch views, utilizing the fine-grained embeddings $\{\mathbf{h}^2\}$ to further differentiate tables that share similar topics but contain different contents.
The pseudocode of \textsf{TCA} is presented in Algorithm~\ref{alg:tca}.
Given a collection of tables $\{T_i\}_{i=1}^N$ with their semantic embeddings $\{\mathbf{h}_i^1\}_{i=1}^N$ and $\{\mathbf{h}_i^2\}_{i=1}^N$, along with the desired number of clusters $k$, the maximum depth $l$ for the first view, and the maximum size $c$ for leaf clusters, \textsf{TCA} initializes the current level as $0$, sets the root $\mathcal{C}_R$ to $\{\mathbf{h}_i^1\}_{i=1}^N$, and invokes the clustering procedure \textsf{HClus} (lines 1--2). Within \textsf{HClus}, the algorithm performs view switching to use the embeddings $\{\mathbf{h}_i^2\}$ from the second view if the current clustering level reaches $l$ (line 4). Next, tables are grouped into $k$ clusters $\mathcal{C}_{1:k}$ based on their embeddings from the current view,
which are then inserted into the child nodes $\mathcal{C}$ (line 5). For each child $\mathcal{C}_i$, if it contains more than $c$ tables, \textsf{HClus} is recursively applied within that cluster (lines 6--7). 
Finally, \textsf{TCA}  returns the root of the clustering tree (line 8).

This approach enables \textsf{TCA} to construct a clustering tree (as shown in Figure~\ref{fig:tree}) and captures hierarchical relations in two ways: firstly, through view switching, which explicitly incorporates hierarchical structures using metadata and instance data; and secondly, through recursive clustering within each view, which provides an implicit hierarchical structure that refines semantic representations from coarse to fine-grained levels. 

\vspace{1mm}
\noindent{\textbf{Tabid Assignment.}} Based on the two-view clustering tree, we assign a unique tabid to each table.
Specifically, we assign a unique number ranging from $0$ to $k-1$ to each cluster at every level of the clustering tree. For each leaf node containing $c$ tables or fewer, we assign each table within that  cluster a unique number from $0$ to $c-1$. Consequently, each table receives a tabid $\mathbf{s} =( s_1, \dots s_d)$  that represents its unique path from the root to the leaf node in which it resides, where $s_i \in [0, k-1]$ for $i<d$ and  $s_d \in [0, c-1]$.

\vspace{-1mm}
\begin{example}
    Figure~\ref{fig:tree} illustrates an example of a clustering tree where $k=  c = l = 2$. The blue nodes represent clusters obtained using the embeddings $\mathbf{h}_i^1$, while the green nodes represent cluster derived from $\mathbf{h}_i^2$. Each branch is assigned a number within the range $[0, k-1]$, while each table within a leaf node is assigned a number within the range $[0, c-1]$.
Accordingly, $T_2$ is assigned the tabid $01000$, while $T_5$ receives the tabid $01001$.
\end{example}
\vspace{-1mm}

To optimize memory usage, we design  a tree compression (TC) method that uses a compact structure to retain only the essential information for each cluster (node) in the clustering tree. Specifically, each node in the clustering tree maintains the cluster radius $r$,  cluster cohesion $\delta$, cluster center $\mathbf{c}$, and the view $v$ (1 or 2) of the embeddings used to generate this cluster.
The radius $r$ and cohesion $\delta$ of a cluster $\mathcal{C}$ are defined as follows:
\begin{equation}
\small
r(\mathcal{C})=\max\nolimits_{\mathbf{h}_j \in \mathcal{C}} \operatorname{dist}\left(\mathbf{c}, \mathbf{h}_j\right), 
\delta(\mathcal{C}) =\frac{1}{\left|\mathcal{C}\right|} \sum\nolimits_{j=1}^{\left|\mathcal{C}\right|} \operatorname{dist}\left(\mathbf{c}, \mathbf{h}_{j}\right)
\end{equation}
where $\mathbf{c}$ is the center of the cluster, and $\operatorname{dist(\cdot)}$ is a distance function. This approach allows us to manage the tabid assignments dynamically (to be detailed in Section~\ref{subsec:ita}) without the need to store all high-dimensional embeddings $\mathbf{h}^1$ and $\mathbf{h}^2$ for large table repositories.

\begin{figure}[t]
  \centering

  \includegraphics[width=1\linewidth]{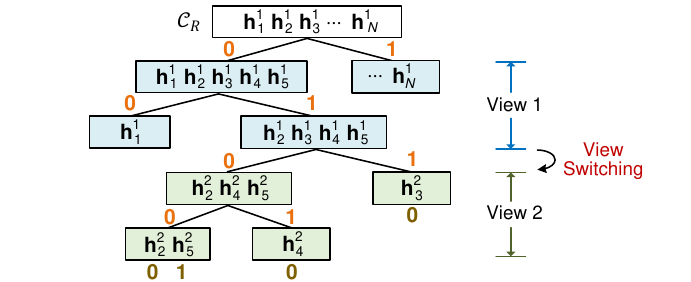}\vspace{-3mm}
  \caption{{ An example of a two-view-based clustering tree. 
  }
  \label{fig:tree}}
    \vspace{-4mm}
\end{figure}

\subsection{LLM-based Query Generation}
\label{subsec:Generator}

Unlike previous query generation methods that either focus on flattened text paragraphs~\cite{docT5query} or rely on manually-designed SQL templates~\cite{Solo},
we leverage the powerful generative capabilities of decoder-only LLMs to directly produce NL queries for tables. While powerful closed-source LLMs like GPT-4~\cite{openai_gpt4} could be used, the costs of API calls and the privacy concerns~\cite{GDPR} often restrict its usage. Consequently, we aim to train a local query generator based on the open-source LLMs.

\begin{table*}[t]
\small
\centering
\caption{Table-related tasks used to continue to train LLaMA3}
\label{tab:table-task}
\vspace{-0.15in}

\renewcommand{\arraystretch}{0.9} 
\setlength{\tabcolsep}{2.8mm}{
\begin{tabular}{|c|c|c|c|c|} 
\hline
    \textbf{Task Name}                   & \textbf{Task Description}                                                                                                   & \textbf{Datasets}                 & \textbf{Difficulty} & \textbf{Size}   \\ 
\hline
Table Size Recognition   & \begin{tabular}[c]{@{}c@{}}Identify the number of rows and columns given a table\end{tabular}                     &       TSR~\cite{MTU}                               & Easy  & 1.5k \\ 
\hline
Table Cell Extraction       & \begin{tabular}[c]{@{}c@{}}Extract the specific cell in a table given the row and column IDs\end{tabular}         &        TCE~\cite{MTU}                              & Easy  & 1.5k \\ 
\hline
Table Row/Column Extraction & \begin{tabular}[c]{@{}c@{}}Extract the specific row (column) in a table given the row (column) ID\end{tabular}   & RCE~\cite{MTU}  & Easy  & 1.5k \\ 
\hline
Table Cell Retrieval         & \begin{tabular}[c]{@{}c@{}}Answer the row and column IDs of a given cell in a table\end{tabular}                  &        TCR~\cite{MTU}                            & Medium & 1.5k \\ 
\hline
Table to Text               & Generate a textual description of a given table                                                                      &         WikiBIO~\cite{Wikibio}                             & Medium & 5k\\ 
\hline
Table Fact Verification     & \begin{tabular}[c]{@{}c@{}}Verify whether a textual hypothesis holds  given tabular data as evidence\end{tabular} &   TABFACT~\cite{Tabfact}                                   & Hard & 5k  \\ 
\hline
Table Question Answering    & Answer the question based on a given a table                                                                                  &       WTQ~\cite{WTQ}                               & Hard  & 5k \\
\hline
\end{tabular} }
\vspace{-2mm}
\end{table*}

\vspace{1mm}
\noindent{\textbf{Training \textsf{TLLaMA3}}}. We adopt the latest LLaMA3 8b~\cite{LLama3} as our base model, as it outperforms other LLMs of similar or larger scales. 
However, since LLMs are mainly pre-trained on textual data, they may struggle to comprehend tabular structures~\cite{TableGPT}. To tackle this, we first fine-tune LLaMA3 on seven fundamental table-related tasks listed in Table~\ref{tab:table-task} that range in difficulty. 
Specifically, we adopt LoRA technique to tune it with a mixture of data derived from these seven tasks. 
We then merge the trained LoRA module with the original pre-trained weights of LLaMA3 to create \textsf{TLLaMA3}.
This process allows \textsf{TLLaMA3} to develop a fundamental understanding of tables, enabling it to perform more complex and specific table tasks through further fine-tuning.

\vspace{1mm}

\noindent{\textbf{Query Generation}}.  Building on \textsf{TLLaMA3}, we construct instruction data to train a query generator tailored to tabular data, as shown in Figure~\ref{fig:instruction}. The instruction requires the model to generate NL queries that can be answered by specific cells in the given table, or derived through reasoning with aggregation operators. It is important to note that some NL queries may lack sufficient specificity for table search tasks. For example, given the table $T_1$ titled ``On Golden Ponds'' (see Figure~\ref{fig:exm1}), the LLM might generate the query ``Who directed this film?''. While this is a common query for a closed-domain QA, it is ambiguous due to the phrase ``this film'', making it less effective for searching tables in repositories. Therefore, we prompt the LLM to  generate more explicit queries by including  necessary table information. The table is transformed into markdown format, and the output consists of a labeled query. 

\begin{figure}[t]
  \centering
  \includegraphics[width=1\linewidth]{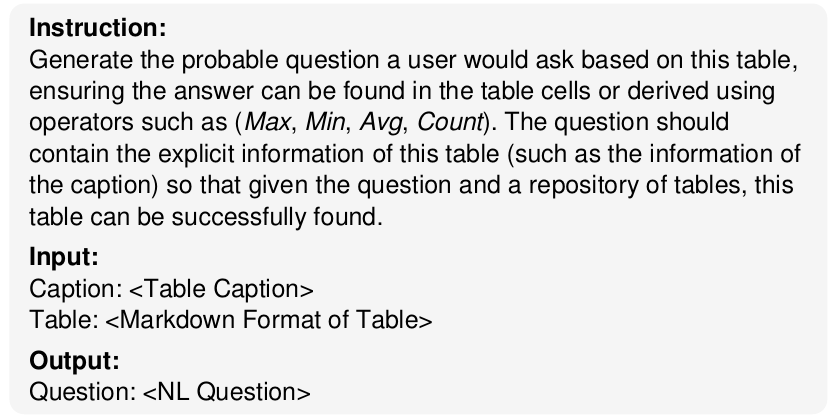}\vspace{-3mm}
  \caption{{Instruction data for query generator training. }
  \label{fig:instruction}}
    \vspace{-5mm}
\end{figure}

Using our constructed instruction data, we adopt LoRA to fine-tune \textsf{TLLaMA3}, resulting in a plug-and-play LoRA module that can be integrated with the weights of \textsf{TLLaMA3} to form our query generator $\mathcal{G}$. 
Subsequently, we use the trained query generator $\mathcal{G}$ to generate multiple NL queries $Q_i$ for each table $T_i \in D$ using the same prompt shown in Figure~\ref{fig:instruction}.  We restrict the maximum input length to 2048 tokens. 
To reduce query generation time,  $\mathcal{G}$ is allowed to generate multiple queries per invocation; however, this can decrease the diversity of the generated queries.
Specifically, we noticed that certain cells in the table were more frequently included in the generated queries, which may be attributed to word frequency bias~\cite{he2019quantifying} during the pre-training phase of LLMs.

To mitigate this issue, we design a table sampling algorithm (\textsf{TSA}) to enrich the generated queries and maximize the coverage of table contents.
The key idea behind \textsf{TSA} is to divide the table into several sub-tables and invoke the generator $\mathcal{G}$ multiple times for each sub-table. This approach allows $\mathcal{G}$ to focus on each sub-table individually.
The pseudo-code for \textsf{TSA} is presented in Algorithm~\ref{alg:tsa}. 
Given a table $T_i$, the number $B$ of required  synthetic queries, and the number $b$ of queries generated per invocation of $\mathcal{G}$, \textsf{TSA} first computes the number $r_s$ of rows for each sampling process, and initializes the row set $R_i$ and query set $Q_i$ (line 1). \textsf{TSA} then iteratively performs table sampling and query generation until $B$ queries are collected.
Specifically, \textsf{TSA} re-initializes $R_i$ to include all rows from $T_i$ if $R_i$ contains fewer than $r_s$ rows (line 3). It next randomly samples $r_s$ rows from $R_i$ and removes the sampled $S(R_i)$ from $R_i$ (lines 4--5) to prevent repeated sampling, which increases the coverage of table information. 
\textsf{TSA} then uses the sampled $S(R_i)$ to invoke the query generator $\mathcal{G}$, generating candidate queries $Q_{cand}$ (line 6).
Given the potential for hallucinations in LLMs, candidate queries may exhibit quality issues. To address this, \textsf{TSA} implements quality checks on $Q_{cand}$ via \textsf{Filter} function that establishes two rules: i) ensure the output format is consistent with ``\texttt{Question: <NL Question>}'' (as shown in Figure~\ref{fig:instruction}), as we found that many low-quality outputs exhibit formatting errors; and ii) check for duplication with existing queries $Q_i$, and perform deduplication if necessary. 
\textsf{TSA} adds the filtered $Q_{cand}$ to the query set $Q_i$ (line 7). Finally, \textsf{TSA} returns the query set $Q_i$ (line 8).

\begin{algorithm}[!tb]
\DontPrintSemicolon
\small
	\LinesNumbered
    \caption{Table Sampling Algorithm}
	\label{alg:tsa}
	
	\KwIn{a table $T_i$,  the number $B$ of required queries, and the number $b$ of generated queries per invocation}

	\KwOut{the generated query set $Q_i$}
    $n_v \leftarrow |B|/|b|$, $r_s \leftarrow |T_i|/n_v$, $R_i \leftarrow T_i$, $Q_i \leftarrow \varnothing$\\
   
    \While{ $|Q_i|$ < B}
    {  
       \lIf{$|R_i| < r_s$}
       {  $R_i \leftarrow T_i$  
       }
       $S(R_i) \leftarrow \textsf{SampleRows}(R_i, r_s)$\\
       $R_i \leftarrow$ remove $S(R_i)$ from $R_i$\\
       $Q_{cand} \leftarrow$ \textsf{Gen}$(\mathcal{G}, S(R_i), b)$\\
       $Q_i \leftarrow Q_i\cup \textsf{Filter}(Q_{cand})$\\
       
    }
    
    \Return {$Q_i$}
\end{algorithm}

\subsection{Model Training and Inference}
After generating the query set $Q_i$ and tabid $\mathbf{s}_i$ for each table $T_i$, we outline the process of constructing differentiable search indexes through model training.
We leverage an encoder-decoder model $\mathcal{M}$ as the backbone. The model is fine-tuned to associate each table $T_i$  with its corresponding tabid $\mathbf{s}_i$, and generate the tabid $\mathbf{s}_i$ based on an input synthetic query derived from table $T_i$.
The model is trained using a standard sequence-to-sequence objective that employs a teacher forcing policy~\cite{Teacher} and minimizes the cross-entropy loss:
\begin{equation}
\footnotesize
\begin{aligned}
\mathcal{L} =\sum\nolimits_{T_i \in D} \big( \log p\left(\mathbf{s}_i \mid \mathcal{M}_\theta\left(T_i\right)\right) +  
\sum\nolimits_{q_{ij} \in Q_i}\log p\left(\mathbf{s}_i \mid \mathcal{M}_\theta (q_{ij})\right) \big)
\label{eq:s2sl}
\end{aligned}
\end{equation}
Here, $Q_i$ represents the set of queries relevant to table $T_i$ generated by our query generator $\mathcal{G}$, and $\theta$ denotes the trainable parameters of model $\mathcal{M}$. For input representation, the table is serialized into a sequence by concatenating its caption, attributes, and cells.

In the search (inference) phase, given an input user query $q$, the fine-tuned model outputs the tabid token by token:
\begin{equation}
\small
p\left(\mathbf{s} \mid q, \theta \right)=\prod\nolimits_{j=1}^d p\left(s_j \mid \mathcal{M}_{\theta}(q, s_0, s_1, \ldots, s_{j-1})\right)
\end{equation}
Here, $\mathbf{s}$ is the output tabid corresponding to the user query $q$, $s_j$ represents the $j$-th token of tabid $\mathbf{s}$, $s_0$ is a  special start token indicating the beginning of the decoding process, and $\theta$ denotes the fine-tuned parameters of model $\mathcal{M}$. 
We employ beam search~\cite{BeamSearch} during the decoding process. This technique explores multiple decoding paths by maintaining a set of the most promising partial sequences at each decoding step, ultimately generating a list of probable tabids. Note that, beam search allows us to attach a probability to each generated tabid, serving as an effective ranking mechanism without incurring the additional computational cost of re-ranking.  With the valid tabid set obtained through two-view-based tabid assignment, we can effectively filter out invalid output tabids during inference.

\section{Index Update}
\label{sec:index_update}

After training on the table repository $D$, we obtain a model $\mathcal{M}$ that encodes all the information of tables in $D$. Since table repositories are typically dynamic~\cite{DataLake_Survey}, it is crucial to continuously index new tables. 
We consider a scenario that $Z$ batches of new tables $\{D^1, D^2, \dots, D^Z\}$ arrive sequentially, with each $D^i$ consisting of newly encountered tables $\{T_1^i, T_2^i, \dots, T_{N_i}^i\}$. The original $D$ is assumed to be significantly larger than the incoming batches, and the new batches generally align with the distribution of $D$. We employ a lazy update strategy, as used in previous studies~\cite{DSI++, CLEVER}, where updates are triggered only when accumulated data exceeds a predefined threshold. 
For clarity, we refer to the original repository as $D^0$ and the model as $\mathcal{M}^0$.

In a dynamic table repository, tables are either inserted or removed. The modifications of tables are naturally handled as deletions followed by insertions.
When a table is removed from the repository, its corresponding tabid is also removed from the valid tabid set. Consequently, any invalid tabids produced by the decoder can be easily filtered out during search. Thus, table deletion is a straightforward operation that does not impact model accuracy, unlike table insertion. Therefore, the remainder of this paper focuses on updating the index for table insertion. In the following, we present the incremental tabid assignment algorithm to attach tabids to new tables without affecting the previous tabids, and illustrate how to index new tables by parameter isolation and how to search for tables under this scenario.

\subsection{Incremental Tabid Assignment}
\label{subsec:ita}
When a new batch $D^u (u>0)$ of tables arrives, the first step is to assign tabids to each new table. \textsf{ReIndex}, a naive method, re-clusters all tables in $\bigcup_{i=0}^{u} D^i$ to account for table semantics and ensure unique tabids using Algorithm~\ref{alg:tca}, and re-assigns tabids. However, \textsf{ReIndex} has two main drawbacks: (i) re-clustering is computationally expensive and time-consuming, and (ii) previously assigned tabids become invalid as clusters change, necessitating a complete retraining of the model from scratch to index all the tables in $\cup_{i=0}^{u} D_i$. To overcome these challenges, we propose an incremental algorithm that assigns tabids to new tables without affecting existing ones. 

We assume that the existing clustering tree is $\mathbb{T}$, and we illustrate the insertion of one new table with embedding $\mathbf{h}_{new}$. First, we find the child node $\mathcal{C}^*$ of the root of $\mathbb{T}$ that is closest to $\mathbf{h}_{new}$:
\begin{equation}
\small
    \mathcal{C}^* = \operatorname{argmin}_{\mathcal{C}} \operatorname{dist}(\mathcal{C}.\mathbf{c}, \mathbf{h}_{new}) 
    \label{eq:closest_cluster}
\end{equation}
where $\mathcal{C}$ represents the child node of the root of $\mathbb{T}$, $\operatorname{dist}(\cdot)$ is a distance metric implemented by Euclidean distance, and $\mathcal{C}.\mathbf{c}$ denotes the center $\mathbf{c}$ of cluster $\mathcal{C}$. We then compute the distance $d$ from $\mathbf{h}_{new}$ to the center $\mathcal{C}^*.\mathbf{c}$. 
For simplicity, we denote the radius, 
cohesion, 
and center of cluster $\mathcal{C}$
as $r$, $\delta$, and $\mathbf{c}$, respectively. We elaborate on the following three cases when inserting $\mathbf{h}_{new}$ into 
$\mathbb{T}$.

\noindent
\textit{\underline{Case I}: Inserting without updates.} If $h_{new}$ is  sufficiently close to cluster $\mathcal{C}^*$, its inclusion has a negligible impact on the properties of $\mathcal{C}^*$.To enhance the efficiency of incremental tabid assignment, updates to the properties can be omitted when $ d \leq  \delta$, i.e., the distance from $h_{new}$ to the cluster center does not exceed the average distance from existing embeddings to the center. In such cases, we directly insert $\mathbf{h}_{new}$ to cluster $\mathcal{C}^*$ without updates.

\noindent 
\textit{\underline{Case II}: Inserting with updates.} When $\delta < d \leq  r$, it suggests that $\mathbf{h}_{new}$ belongs to $\mathcal{C}^*$ but is farther from the cluster center than average.
In this case, we insert $\mathbf{h}_{new}$ into $\mathcal{C}^*$ and update the center $\mathbf{c}$, cohesion $\delta$, and radius $r$ of $\mathcal{C}^*$. We denote the updated values as $\mathbf{c}'$, $\delta'$ and $r'$, with $\mathbf{c}'$ and  $\delta'$ calculated as follows: 
%
\begin{equation}
\small
{\mathbf{c}}^{\prime}=\frac{\left|\mathcal{C}^*\right|  \cdot \mathbf{c} + \mathbf{h}_{new}}{\left|\mathcal{C}^*\right|+1}, \quad\quad {\delta}^{\prime}=\frac{\left|\mathcal{C}^*\right| \cdot \delta + d}{\left|\mathcal{C}^*\right|+1}
\label{eq:c_and_prime}
\end{equation}

\noindent where $|\mathcal{C}^*|$ denotes the size of $\mathcal{C}^*$ after the insertion of  $\mathbf{h}_{new}$.
However,  computing the radius $r'$ by calculating the distance between the new center and all table embeddings is not feasible, as the clustering tree does not retain table embeddings to save storage, as stated in Section~\ref{subsec:clustering}. Therefore, we present the following lemma to estimate the radius $r'$:
%

%
\begin{myLemma}
    The upper and lower bounds of the radius $r'$ are given by $r + \operatorname{dist}(\mathbf{c}, \mathbf{c}^\prime)$ and $\operatorname{dist}(\mathbf{h}_{new}, \mathbf{c}^\prime)$, respectively.
    \label{lemma:r'}
\end{myLemma}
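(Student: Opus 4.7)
The plan is to establish each bound by a short direct argument that uses only the definition of the cluster radius as a maximum of distances and the triangle inequality, while carefully handling the contribution of the newly inserted embedding $\mathbf{h}_{new}$. I will write $\mathcal{C}^*_+ = \mathcal{C}^* \cup \{\mathbf{h}_{new}\}$ for the updated cluster so that by definition $r' = \max_{\mathbf{h} \in \mathcal{C}^*_+} \operatorname{dist}(\mathbf{h}, \mathbf{c}')$.

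First I would prove the lower bound, which is essentially immediate. Since $\mathbf{h}_{new} \in \mathcal{C}^*_+$, the maximum defining $r'$ is taken over a set that includes $\mathbf{h}_{new}$, hence $r' \geq \operatorname{dist}(\mathbf{h}_{new}, \mathbf{c}')$. No geometry beyond the definition is needed here.

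For the upper bound, I would pick an arbitrary $\mathbf{h} \in \mathcal{C}^*_+$ and split into two cases. If $\mathbf{h} \in \mathcal{C}^*$ (the pre-insertion cluster), then by the triangle inequality $\operatorname{dist}(\mathbf{h}, \mathbf{c}') \leq \operatorname{dist}(\mathbf{h}, \mathbf{c}) + \operatorname{dist}(\mathbf{c}, \mathbf{c}') \leq r + \operatorname{dist}(\mathbf{c}, \mathbf{c}')$, where the last inequality uses $\operatorname{dist}(\mathbf{h}, \mathbf{c}) \leq r$ from the definition of the old radius. If $\mathbf{h} = \mathbf{h}_{new}$, the same chain applies because $\operatorname{dist}(\mathbf{h}_{new}, \mathbf{c}) = d \leq r$ is exactly the Case II hypothesis. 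Taking the maximum over all $\mathbf{h} \in \mathcal{C}^*_+$ then yields $r' \leq r + \operatorname{dist}(\mathbf{c}, \mathbf{c}')$.

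I do not anticipate a real obstacle: the argument is a two-line application of the triangle inequality together with the definition of radius, and the only subtlety is remembering to include $\mathbf{h}_{new}$ when bounding the new maximum. The Case II assumption $d \leq r$ is what makes the upper bound clean; without it, one would need $\max(r, d) + \operatorname{dist}(\mathbf{c}, \mathbf{c}')$ instead, so it is worth noting explicitly where that hypothesis is used.
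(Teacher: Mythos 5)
Your proof is correct and follows essentially the same route as the paper's: both establish the lower bound by noting that $\mathbf{h}_{new}$ belongs to the updated cluster, and both obtain the upper bound via the triangle inequality with a case split on whether the farthest point from $\mathbf{c}'$ is an old embedding or $\mathbf{h}_{new}$, using the Case~II hypothesis $d \leq r$ to close the second case. Your write-up is slightly more careful in phrasing the bound as a maximum over all members of the updated cluster, but the substance is identical.
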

 \vspace{-1mm}

The proof can be found in Appendix~\ref{appendix:A}. Accordingly, we can estimate $r'$ by sampling uniformly between its lower and upper bounds:
\begin{equation}
\small
    r^\prime = \operatorname{Uniform}(\operatorname{dist}(\mathbf{h}_{new}, \mathbf{c}'), r + \operatorname{dist}(\mathbf{c}, \mathbf{c}^\prime))
    \label{eq:r}
\end{equation}

\noindent
\textit{\underline{Case III}: Constructing a new cluster.} The final case occurs when $\mathbf{h}_{new}$ is too distant from the center of the nearest cluster, indicating that  $\mathbf{h}_{new}$ should be assigned to a new cluster. Specifically, if $d > r$, we create a new cluster with $\mathbf{h}_{new}$ as its center, initializing its radius and cohesion to the average values of all child nodes of the root, ensuring consistency with the scale and density of existing clusters at the same hierarchical level. While splitting an existing cluster and inserting the new embedding is a viable alternative, it incurs significantly higher computational costs, making it less preferable. 

After inserting the new embedding $\mathbf{h}_{new}$ into an existing node or a newly created node, the insertion process is recursively applied until a leaf node is reached, at which point a tabid is assigned to the new table. This process is repeated for each new table. The detailed incremental tabid assignment procedure (\textsf{ITA}) is outlined in Algorithm~\ref{alg:ita}.
If the current node is a leaf, \textsf{ITA} generates and returns the tabid (lines 1--2). Otherwise, \textsf{ITA} continues with embedding insertion, determining which embedding view to use (lines 3--4), identifying the closest cluster $\mathcal{C}^*$ using Eq.~\eqref{eq:closest_cluster} (line 5), and calculating the distance $d$ from the new embedding to the center of cluster $\mathcal{C}^*$ (line 6). Based on the value of $d$, one of the following actions is taken: i) if $d$ does not exceed the cohesion of $\mathcal{C}^*$, the embedding is inserted directly into $\mathcal{C}^*$ and the insertion process continues recursively (line 7); ii) if $d$ falls between the cohesion and the radius of $\mathcal{C}^*$, the embedding is inserted into $\mathcal{C}^*$ with cluster information updated, followed by recursive insertion (lines 9--11); and iii) otherwise ($d$ exceeds $r$), a new cluster is created and the embedding is inserted (lines 13--14). 

\begin{algorithm}[!tb]
\DontPrintSemicolon
\small
	\LinesNumbered
	\caption{Incremental Tabid Assignment}
	\label{alg:fmrl}
	
	\KwIn{the current node $\mathcal{C}$, the embeddings $\mathbf{h}^1_{new}$ and $\mathbf{h}^2_{new}$ of a new arrival table $T_{new}$}
	\KwOut{tadid of $T_{new}$}
    \label{alg:ita}
    \If{$\mathcal{C}$ is a leaf node}
    {
    tabid $\leftarrow$ path from the root to $\mathcal{C}$, 
    \Return{\textnormal{tabid}}
    }

    \lIf{$\mathcal{C}.child.v = 1$}
    {$\mathbf{h}_{new} \leftarrow \mathbf{h}^1_{new}$
    }
    \lElse
    {
    $\mathbf{h}_{new}  \leftarrow \mathbf{h}^2_{new}$
    }

    find the child node $\mathcal{C}^*$ of $\mathcal{C}$ that is closest to $\mathbf{h}_{new}$ \/// \text{Eq.~\eqref{eq:closest_cluster}}\\

    $d \leftarrow \operatorname{dist}(\mathcal{C}^*.\mathbf{c}, \mathbf{h}_{new}) $\\
    \lIf{$d \leq \mathcal{C}^*.\delta$}
    {
    insert $\mathbf{h}_{new}$ to $\mathcal{C}^*$,
    $ \textsf{ITA}(\mathcal{C}^*,\mathbf{h}^1_{new}, \mathbf{h}^2_{new})$ 
    }
    \ElseIf{$\mathcal{C}^*.\delta < d \leq \mathcal{C}^*.r$}
    {
    insert $\mathbf{h}_{new}$ to $\mathcal{C}^*$\\
    update the information of  $\mathcal{C}^*$ \/// \text{Eq.\eqref{eq:c_and_prime} and Eq.\eqref{eq:r}}\\
    $ \textsf{ITA}(\mathcal{C}^*,\mathbf{h}^1_{new}, \mathbf{h}^2_{new})$
    }
    \Else{
    create a new child node $\mathcal{C}_{new}$ under $\mathcal{C}$ centered at $\mathbf{h}_{new}$\\
    insert $\mathbf{h}_{new}$ to $\mathcal{C}_{new}$
    }

\end{algorithm}

\subsection{Continual Indexing via Parameter Isolation}
\label{subsec:parameter_isolaiton}
When tabids of new tables in $D^u$ are obtained, a straightforward method to update indexes is to use new tables or combine them with key tables from previous batches to continually train the model~\cite{CLEVER}. However, this strategy can lead to catastrophic forgetting, where the model loses information about older tables that are not included in the continual training process. To tackle this issue, we propose a method that isolates the parameters of the model $\mathcal{M}^0$ trained on $D^0$ from those learned continually. 

\vspace{1mm}
\noindent \textbf{Construction of Memory Units.} For each batch of tables $D^u (u>0)$, we train a LoRA module $L^u$ to index the new tables based on the existing model $\mathcal{M}^0$. The intuition behind this approach is that $\mathcal{M}^0$ has already been trained on $D^0$, which we assume to be a relatively large repository compared to the incoming batches. As a result, $\mathcal{M}^0$ has developed the capability to understand, index, and search for tables effectively. Therefore, we only need to train a LoRA module $L^u$ as a memory unit based on $\mathcal{M}^0$, which is more efficient than performing full fine-tuning. 
Unlike previous studies~\cite{FinSQL,DACE} that typically add LoRA matrices to the weights of self-attention layers in each Transformer block to acquire new capabilities, we argue that this is not optimal for our context. Our goal is to encode information of new tables into the LoRA modules rather than to acquire new abilities. Inspired by the earlier research~\cite{FFN} suggesting that the FFN in Transformer operates as a memory, we add LoRA  modules to FFN in each Transformer block. This allows each batch of new tables $D^u (u>0)$ to have its own dedicated  LoRA memory unit $L^u$.

We store all these units along with the initial model $\mathcal{M}^0$ in a memory hub. These model parameters remain isolated, preventing them from affecting one another. 
After training the LoRA module $L^u$ on the newly arrived tables $D^u (u > 0)$, the model hub contains the following models $\mathcal{M}^i$, where each $\mathcal{M}^i$ indexes tables from a sub-repository $D^i (0 \leq i \leq u)$ and $\bigoplus$ denotes the plug-and-play of $L^i$ during inference.  
$$
    \mathcal{M}^i = \begin{cases} \mathcal{M}^0 & \text{ if } {i = 0} \\
    \mathcal{M}^0\bigoplus{L^i} & \text {if } {1\leq i \leq u} \end{cases}
$$
\noindent\textbf{Search Stage.} During the search phase, each model $\mathcal{M}^i$ takes a user query $q$ as input and performs inference using beam search  either serially or in
parallel, depending on available memory resources. After that, each model obtains the candidate tables $\mathcal{T}^i = \{T^i_1, \dots, T^i_K\} \subset D^i$, with $K$ a user-defined parameter that controls the number of tables returned. However, not all results are equally reliable, as relevant tables for the query $q$ may reside in a few sub-repositories. To effectively select candidate tables from $\{\mathcal{T}^0, \dots, \mathcal{T}^u\}$, we implement a simple yet effective mapping strategy. For each $\mathcal{T}^i$, we first compile the synthetic queries generated during training into a query set $\mathcal{Q}^i = Q^i_1 \cup \dots \cup Q^i_K$, where $Q_j^i$ represents the synthetic queries for table $T_j^i \in D^i$. These query sets collectively form a small query pool $\mathcal{Q}^0 \cup \dots \mathcal{Q}^u$. Next, we encode all queries in the query pool, along with the user query $q$, using a pre-trained semantic-aware encoder~\cite{tsb-roberta-base}. We then identify the top-$n_q$ queries of $q$ in the query pool that are most similar to $q$. Finally, we select the query set $\mathcal{Q}^z$ that includes the largest number of queries found among the top-$n_q$ similar queries, and random selection is used as the tiebreaker. The corresponding candidate tables $\mathcal{T}^z \subset D^z$ are then returned as the results for table discovery. 
%
An extremely small $n_q$ may lead to incorrect selection due to the randomness of the synthetic queries, while a large $n_q$ will consider the contributions of many less relevant queries. Therefore, we set $n_q$ to $\lfloor \frac{B}{4}\rfloor$ in our experiments, where $B$ is the number of synthetic queries generated for each table $T_j^i$, as specified in Algorithm~\ref{alg:tsa}. 

\vspace{1mm}
\noindent \textbf{Discussion.}
Current mainstream libraries (e.g., PEFT~\cite{peft}) 
typically merge each LoRA module individually into the base model before inference, leading to multiple full-model copies being loaded into GPU memory during parallel inference.
One potential optimization is to load a single base model alongside all LoRA modules. During inference, for a given input $\mathbf{x}$, the output of the base model $\mathcal{M}^0(\mathbf{x})$ and the output of each LoRA module $L^i(\mathbf{x}) = \mathbf{B}^i \mathbf{A}^i \mathbf{x}$ are computed in parallel. The output of each LoRA module is subsequently added to $\mathcal{M}^0(\mathbf{x})$ to get $\mathcal{M}^i(\mathbf{x})$. A recent study~\cite{S-LoRA} explores this optimization for parallel inference with multiple LoRAs, utilizing tensor parallelism strategy and highly optimized custom CUDA kernels to efficiently support inference of thousands of LoRA modules on a single GPU.  
However, it is currently limited to decoder-only architectures and does not support our encoder-decoder DSI model. Extending it to encoder-decoder architectures is outside the scope of this paper but remains an avenue for future work.
\section{Experiments}
\label{sec:exp}
In this section, we conduct extensive experiments to demonstrate the efficacy of our proposed \textsc{Birdie}.

\subsection{Experimental Setup}
\noindent \textbf{Datasets.} The experiments were conducted on three real-world benchmark datasets, with statistics  summarized in Table~\ref{tab:dataset}. (i) \textbf{NQ-Tables}~\cite{OpenDTR} is a widely used benchmark for table retrieval.
Each table in NQ-Tables includes a brief caption, though many tables with the same caption differ in schema and content. The dataset contains some inconsistencies, making it relatively noisy. Following the previous work~\cite{Solo}, we use only the $952$ test queries that have a single ground truth table. (ii) \textbf{FetaQA}~\cite{fetaqa} contains clean tables and test NL queries designed for free-form table question answering. Each test query is paired with a unique ground truth table containing the answer.
(iii)  \textbf{OpenWikiTable}~\cite{OpenWiki} is built
upon two closed-domain table QA datasets WikiSQL and WikiTableQuestions. Table descriptions (i.e. page title, section title, caption) are manually re-annotated and then added to the original queries, ensuring this dataset can be applicable to table discovery task.
 

\noindent \textbf{Baselines.} To demonstrate the efficacy of our proposed \textsc{Birdie}, we compare it with the following baseline methods. 1) \textsf{BM25} is a widely used sparse retrieval method. 2) \textsf{DPR} (Dense Passage Retriever~\cite{DPR}) is the SOTA dense retrieval method for open-domain QA of passages. We serialize each table to text and use bert-base-uncased~\cite{BERT} for both query and text encoders, training them with in-batch negatives on the training splits. Mean pooling is adopted to obtain both the query and table embeddings. 3) \textsf{DPR-T} is an extension of \textsf{DPR} to incorporate table structure information. Following~\cite{OpenWiki}, we add special tokens like \text{[Caption]}, [Header], [Rows] during table serialization, using the embedding of the first [CLS] token as the table embedding. 4) \textsf{OpenDTR}~\cite{OpenDTR} is a learning-based table discovery method using TAPAS~\cite{TAPAS} for both query and text encoders. 5) \textsf{Solo-Ret} is the first-stage retrieval module of \textsf{Solo}~\cite{Solo}, which is a SOTA learning-based NL-driven table discovery system comprising two stages: retrieval and ranking.  To ensure a fair comparison, we focus solely on the retrieval module Solo-Ret. 6) \textsf{Solo}~\cite{Solo} reranks results from \textsf{Solo-Ret}. We follow its original implementation to retrieve the top-250 triplets, then apply second-stage ranking.

\begin{table}[t] 
\small
\centering
\caption{Statistics of the datasets used in experiments.}
\renewcommand{\arraystretch}{1} 
\vspace{-0.15in}
\label{tab:dataset}
\setlength{\tabcolsep}{2.5mm}{
\begin{tabular}{l|c|c|c|c}

\specialrule{.08em}{.06em}{.06em}
Dataset   &  $\#$ Tables & Quality &  $\#$ Train &  $\#$ Test \\ 
 \hline
NQ-Tables &  169,898      &    dirty    &   9,534     &    952                  \\
FetaQA  &  10,330     &    clean   &   7,326    &   2,003                   \\
OpenWikiTable   &24,680     &  clean &  53,819  &  6,602                   \\
\specialrule{.08em}{.06em}{.06em}
\end{tabular}}
\vspace{-3mm}
\end{table}



\noindent \textbf{Evaluation metrics.}  To evaluate the effectiveness, we follow the previous study~\cite{Solo} to adopt the precision-at-$K$ ($P@K$) metric, aggregated over the test queries. $P@K$ measures the proportion of test queries for which the ground truth table appears among the top-$K$ results returned by each method. 
Since each query in our benchmark dataset has a single ground truth table, precision alone suffices for evaluating the effectiveness~\cite{Solo}.
In addition, we measure runtime to assess efficiency.

\noindent \textbf{Implementation details.} \textsc{Birdie} is implemented in PyTorch. For tabid assignment, we use the encoder of pre-trained Sentence-T5~\cite{Sentence-T5} to obtain the embeddings, and fit as much of the second view as possible within the default 512-token limits. The number $k$ of clusters and the maximum size $c$ of  leaf clusters are set to 32 for NQ-Tables and 20 for FetaQA and OpenWikiTable, while the depth $l$ of the first view is set to 2 for all three datasets. 
For training the query generator, we fine-tune  \textsf{TLLaMA3} using LoRA on the training split of each dataset, adding LoRA to the attention layers with a rank $d_r$ of 8. 
We generate $B = 20$ queries per table by default. We use mT5-base~\cite{mt5-base} as the backbone for \textsc{Birdie},
and set the maximum length of the input data to 64 tokens.
The batch size is set to 64, and the learning rate is set to 5e-4. For training the memory units, we add LoRA to FFNs of both the encoder and decoder with $d_r = 8$.
During inference (search), we adopt beam search with a beam size of 20. 
All experiments were conducted on a server with an Intel(R) Xeon(R) Silver 4316 CPU (2.30GHz), 6 NVIDIA 4090 GPU (24G each), and 256GB of RAM. 
All programs were implemented in Python.

\subsection{Evaluation of Indexing from Scratch}

Table~\ref{tab:overall} summarizes the $P@1$, $P@5$, and the search time (Time) for \textsc{Birdie} and the baseline methods, averaged across all test queries.

\noindent \textbf{Effectiveness.} Several key observations can be made as follows.
(i) \textsf{BM25} exhibits poor accuracy  on NQ-Tables but performs notably better on FetaQA and OpenWikTable. This disparity arises because the queries tested in NQ-Tables are brief, providing insufficient context for effective sparse retrieval. In contrast, queries in the other two datasets, especially OpenWikiTable, are more detailed, which benefits the performance of sparse methods.
(ii) \textsf{OpenDTR} outperforms \textsf{DPR} and \textsf{DPR-T}.
Interestingly, \textsf{DPR-T}, which incorporates special tokens to denote different table structures, performs worse than \textsf{DPR}, suggesting that simply adding special tokens does not enhance the encoder's understanding of tabular data.
(iii) The multi-vector dense method (\textsf{Solo-Ret}) does not consistently outperform single-vector dense methods. For instance, \textsf{Solo-Ret} surpasses \textsf{OpenDTR} on FetaQA but underperforms it on NQ-Tables.
This behavior can be explained by \textsf{Solo-Ret}'s fine-grained representation of cell-attributes-cell triplets, which tends to overemphasize local information of some specific triplets, but sometimes overlooks the important global semantics of the table.
(iv) \textsc{Birdie} outperforms all existing dense methods. For example, \textsc{Birdie} achieves an average improvement of 16.8\% in $P@1$ over the best-performing dense method on NQ-Tables and FetaQA. This demonstrates the effectiveness of \textsc{Birdie}'s differentiable search index, which addresses the limitations of the existing dense methods using traditional representation-index-search pipeline. 
(v) \textsc{Birdie} even outperforms the SOTA retrieval-rerank method \textsf{Solo}. This underscores \textsc{Birdie}'s superiority as an end-to-end solution, where beam search serves as an effective ranking mechanism.
(vi) All methods perform better on OpenWikiTable than the other two datasets. This can be attributed to: 1) the queries of OpenWikiTable contain detailed title-relevant information of ground truth tables, making it easier to locate the relevant table; and 2) a remarkable 99.8\% (24,634 out of 24,680) of tables in OpenWikiTable have unique titles, simplifying the task of identifying the correct table using title information alone.

\begin{table}[t]
\footnotesize
\centering
\caption{Performance of \textsc{Birdie} compared to other baselines.}
\vspace{-0.15in}
\label{tab:overall}
\renewcommand{\arraystretch}{1.2} 
\setlength{\tabcolsep}{0.7mm}{
\begin{tabular}{c|c|ccc|ccc|ccc} 
\specialrule{.08em}{.06em}{.06em}
\multicolumn{2}{c|}{\multirow{2}{*}{Methods}} & \multicolumn{3}{c|}{NQ-Tables} & \multicolumn{3}{c|}{FetaQA} & \multicolumn{3}{c}{OpenWikiTable} \\ 
\cline{3-11}
\multicolumn{2}{c|}{}                         & $P@1$ & $P@5$ & $T$(s)           & $P@1$ & $P@5$ & $T$(s)   & $P@1$ & $P@5$ & $T$(s)       \\ 
\hline
Sparse                 & \textsf{BM25}                 & 17.31 &31.07  &\textbf{0.053}    &45.38 &70.59 &0.034 &49.88 & 75.39 & \textbf{0.039}          \\ 
\hline
\multirow{4}{*}{Dense} & \textsf{DPR}                  & 23.15 & 51.30 &0.296                & 49.93 & 73.69 &0.024 & 64.31 & 87.59 & 0.046             \\
                       & \textsf{DPR-T}                & 13.24 & 36.50 &0.297                & 40.89 & 68.20 &0.022  &57.16 &84.22 &0.045          \\
                       & \textsf{OpenDTR}              & 37.37 & 64.28 &0.296                & 54.22 & 76.93 &\textbf{0.020} & 90.90 & 98.90 &   0.102         \\
                       & \textsf{Solo-Ret}             & 36.18 & 53.70 &3.532                & 79.28 & 88.72 &1.151 & 90.37 & 94.38 &  1.724           \\

\hline
Rerank          & \textsf{Solo}                 & 41.92 & 70.49 &4.398                & 81.43 & 91.51 & 1.857 & 94.71  & 96.59 &  2.541           \\
\specialrule{.08em}{.06em}{.06em}
Ours                  & \textsc{Birdie}           & \textbf{46.64} & \textbf{73.21} &0.097                  & \textbf{86.27} & \textbf{92.56} &0.145  & \textbf{97.20} & \textbf{99.06}   &0.105          \\ 
\specialrule{.08em}{.06em}{.06em}
\end{tabular}}
\vspace{-2mm}
\end{table}

\begin{table} \small
\centering
\caption{Memory/storage usage (GB) on NQ-Tables.}
\renewcommand{\arraystretch}{1} 
\vspace{-0.15in}
\label{tab:memory}
\setlength{\tabcolsep}{1.2mm}{
\begin{tabular}{c|ccccc} 
\specialrule{.08em}{.06em}{.06em}
Phases    & \textsf{DPR} & \textsf{OpenDTR}  & \textsf{Solo-Ret} & \textsf{Solo}  & \textsc{Birdie}  \\ 
\hline
\makecell{Training (GPU Memory)}       & 5.83        & 5.86  &   --   &   7.61  &   10.83   \\
 
\makecell{Indexing (Disk Storage)}        &  0.49    &  0.49   & 16.10   &  16.10   &  2.30    \\
 
\makecell{Inference (GPU Memory)}      &  0.97     &  2.18   & 1.46    & 7.27   & 2.33      \\
\specialrule{.08em}{.06em}{.06em}
\end{tabular}}
\vspace{-3mm}
\end{table}

\noindent \textbf{Efficiency.} In terms of online search efficiency, \textsf{BM25} demonstrates short time across all the datasets, as shown in Table~\ref{tab:overall}. However, for dense methods, search time increases with the size of the table repository.
The difference in search time is less pronounced for \textsf{Solo-Ret}, as its efficiency depends more on the number of triplets extracted from tables rather than the number of tables themselves.
\textsf{Solo} requires the most time due to its reranking process.
\textsc{Birdie} shows an average of $20\times$ and $27\times$ shorter search time on the three datasets compared with \textsf{Solo-Ret} and \textsf{Solo}, respectively. Note that,  \textsc{Birdie} maintains stable efficiency across varying sizes of table repositories, presenting a significant advantage.

\noindent \textbf{Memory/Storage Usage.} Table~\ref{tab:memory} presents the memory/storage usage during the key phases. For fairness, we set the batch size to 1 for all methods during training and inference. \textsc{Birdie} requires 10.83 GB of memory for training, more than baseline methods due to its larger encoder-decoder model. $\textsf{Solo-Ret}$, using a pre-trained model for retrieval, requires no extra training. In contrast, $\textsf{Solo}$ requires training the re-ranker.
For indexing, dense methods require storing
the table/triplet embeddings on disk, while Birdie requires only 2.3 GB to store the DSI model.
For inference, \textsc{Birdie} uses just 2.33 GB of memory, making it highly cost-efficient for deployment.

\noindent \textbf{Scalability.} 
We evaluate the scalability of clustering-based tabid assignment and DSI model training. 
For clustering, we implement mini-batch $k$-means with a batch size of 1k to construct the semantic tree for large repositories.
By expanding the NQ-Tables dataset through table duplication up to 1 million, we analyze memory usage with or without the mini-batch optimization, as shown in Figure~\ref{fig:tabid_scal}. The results indicate that memory usage scales linearly with the repository size, and mini-batch $k$-means effectively reduces memory requirements. 
For model training, memory requirements are directly influenced by batch size and can be reduced using gradient accumulation (GA)~\cite{GA}.
Figure~\ref{fig:memory_bsz} illustrates the GPU memory usage for varying batch sizes, including a scenario with a per-device effective batch size of 16 and GA applied.

\begin{figure}[t]
\vspace{-2mm}
\subfigure[CPU memory for tabid assignment]{
\label{fig:tabid_scal}
\includegraphics[width=0.44\linewidth]{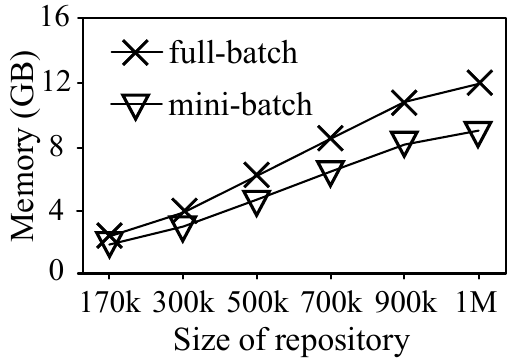}
} 
\subfigure[GPU memory for training]{
\label{fig:memory_bsz}
\includegraphics[width=0.44\linewidth]{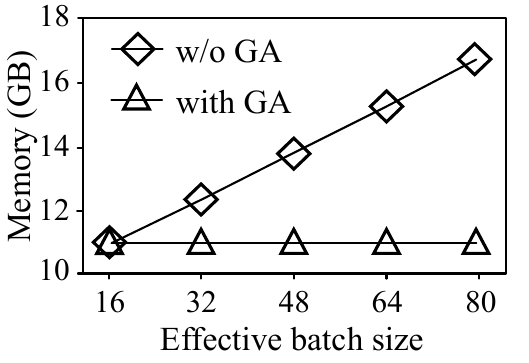}
}
\vspace{-5mm}
\caption{ Scalability of tabid assignment and model training.} 
\label{fig:inctime}
\vspace{-2mm}
\end{figure}

\subsection{Ablation Study}



 


We conduct an ablation study with results presented in  Table~\ref{tab:ablation}.

\noindent \textbf{Table ID Assignment.} First, we replace our semantic tabids with simple atom IDs (0, 1, . . . , $N-1$) for each table. 
This change  results in significant accuracy drop on NQ-Tables.
This is attributed to the atom IDs' lack of semantic context, which complicates the training (indexing) process. The smaller performance drops on FetaQA and OpenWikiTable can be attributed to (i)
their smaller repository sizes compared to NQ-Tables (170k), which reduce the indexing burden on the DSI model; and (ii) the test queries being more specific and semantically rich, making it easier to generate correct IDs and narrowing the performance gap between atom and semantic IDs. Next, we remove the instance view (view2) used in tabid generation, resulting in a decrease in accuracy across three datasets. This underscores the importance of instance data in capturing essential table information. Finally, we remove the metadata view (view1), leaving only the instance data for tabid generation. This change also leads to a decline in accuracy, confirming the critical role of metadata in understanding tabular data.

\noindent \textbf{Query Generation.} First, we replace our LLM-based query generator with the method proposed by \textsf{Solo}~\cite{Solo}, which generates SQL queries based on pre-defined SQL templates and subsequently converts them to NL queries using the pre-trained SQL2Text~\cite{seq2sql} model. This method results in significant accuracy drops on NQ-Tables and FetaQA. The decline is likely due to error accumulation across the two phases of query generation and the lack of fine-tuning of the SQL2Text model for the specific domain of our datasets.
Then, we apply the textual-based method docT5query~\cite{docT5query} to train a query generator using training splits, treating tables as flattened text. This approach results in an average decrease of 26\% in $P@1$ and 10\% in $P@5$ on NQ-Tables and FetaQA, further confirming the superiority of our LLM-based query generator specifically tailored to tabular data. Finally, removing the table sampling (TS) module from the LLM-based query generation process reduces accuracy across datasets, underscoring the importance of TS in generating high-quality synthetic queries.
Note that the performance drops on OpenWikiTable with these ablations are small due to the nature of its test queries. As long as the generated synthetic queries include title information, the accuracy remains high.

\begin{table} \small
\centering
\caption{P@1 and P@5 of \textsc{Birdie} and its variants.}
\label{tab:ablation}
\renewcommand{\arraystretch}{1}
\vspace{-0.15in}
\setlength{\tabcolsep}{0.75mm}{
\begin{tabular}{c|c|cc|cc|cc} 
\specialrule{.08em}{.06em}{.06em}
\multirow{2}{*}{Stages}                                                    & \multirow{2}{*}{Strategies} & \multicolumn{2}{c|}{NQ-Tables} & \multicolumn{2}{c|}{FetaQA} & \multicolumn{2}{c}{OpenWikiTable} \\ 
\cline{3-8}
                                                                           &                          & $P@1$ &  $P@5$                 & $P@1$ & $P@5$     & $P@1$ & $P@5$           \\ 
\hline
                --                                                           & \textsc{Birdie}                   & \textbf{46.64} & \textbf{73.21}                  & \textbf{86.27} & \textbf{92.56}    & \textbf{97.20} & \textbf{99.06}            \\ 
\hline
    \multirow{3}{*}{\begin{tabular}[c]{@{}c@{}}Tabid\\Assignment\end{tabular}} & Atom ID                &  17.75 &	42.54                   & 83.58 & 90.81      & 94.29 & 98.53         \\
                                                                           & w/o view2                &  44.12     &   70.48                     & 86.07 & 92.41  & 94.33 & 97.01              \\
                                                                           & w/o view1                &  45.16     &   70.59                     & 84.62 & 91.91  & 96.91 & 98.99             \\ 
\hline
\multirow{3}{*}{\begin{tabular}[c]{@{}c@{}}Query\\Generation\end{tabular}} & \textsf{Solo}-based                     &16.60     &31.83                      &    24.06   &   36.10  & 93.80  & 97.82     \\
                                                                           & Textual-based                  &  30.15     &    62.19                    & 72.78 & 87.52   & 95.12 & 98.30                   \\
                                                                           & w/o TS                  &  33.19     &    63.34                    & 58.76  & 74.34 &95.75 &98.23                     \\
\specialrule{.08em}{.06em}{.06em}
\end{tabular}}
\vspace{-2mm}
\end{table}



\begin{figure*}[t]
\centering
\includegraphics[width=0.40\linewidth]{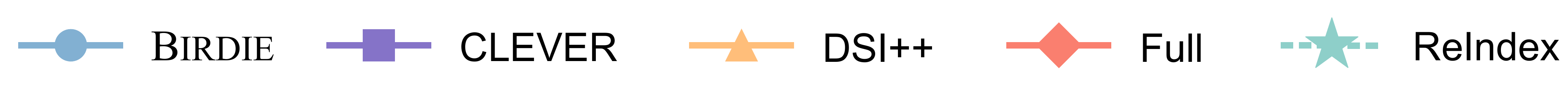}\\
\vspace{-2mm}
\subfigure[\scriptsize $AP@1$ on NQ-Tables]{
\begin{minipage}[t]{0.155\linewidth}
\includegraphics[width=1\linewidth]{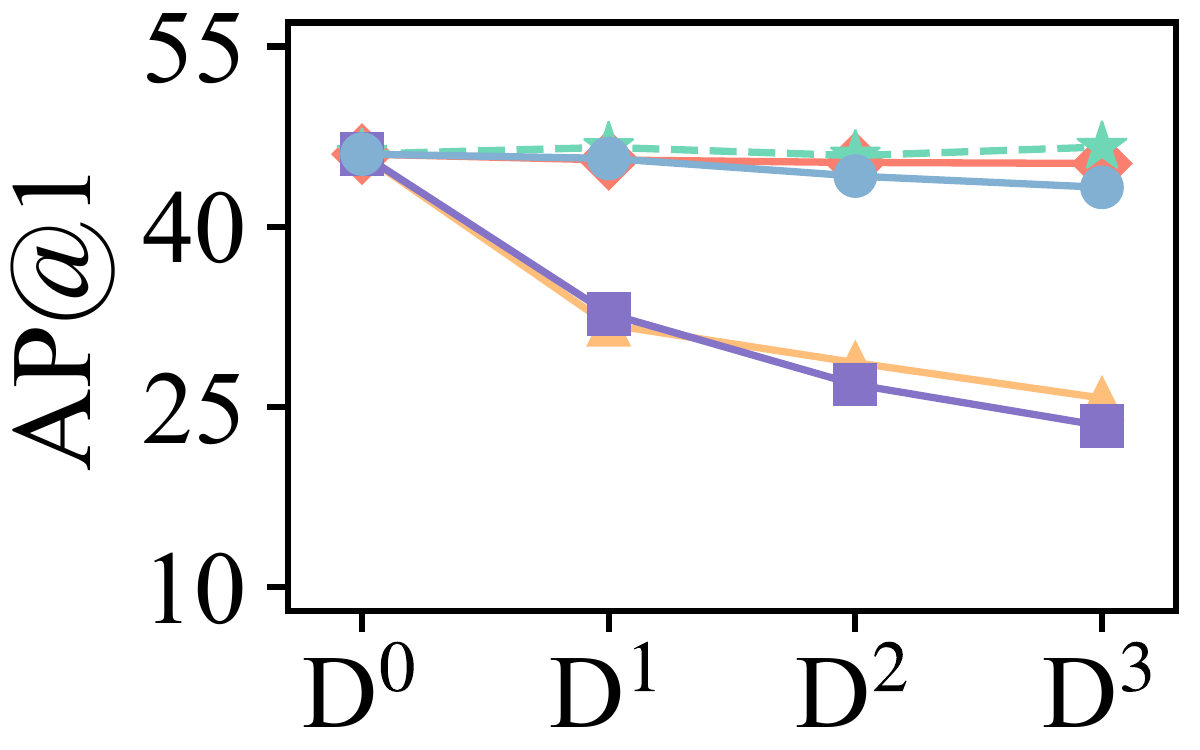 }
\end{minipage}%
}%
\subfigure[\scriptsize $FT@1$ on NQ-Tables]{
\begin{minipage}[t]{0.155\linewidth}
\includegraphics[width=1\linewidth]{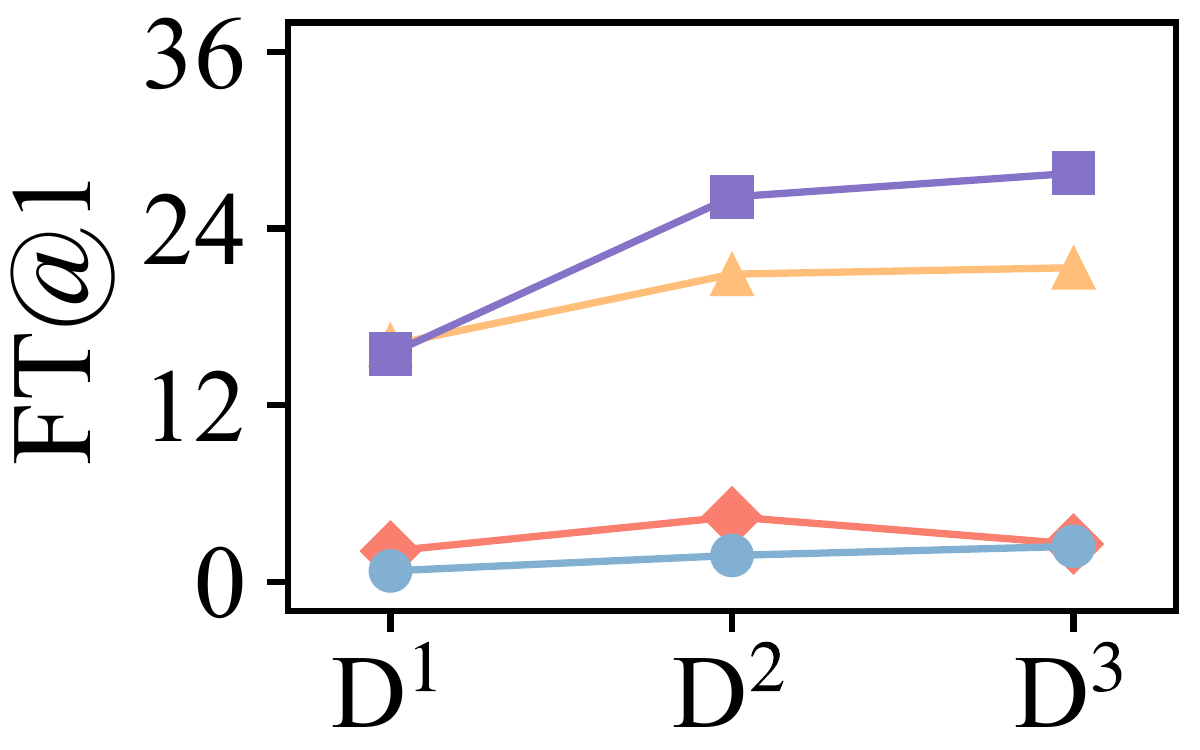}
\end{minipage}%
}%
\subfigure[\scriptsize $LP@1$ on NQ-Tables]{
\begin{minipage}[t]{0.155\linewidth}
\includegraphics[width=1\linewidth]{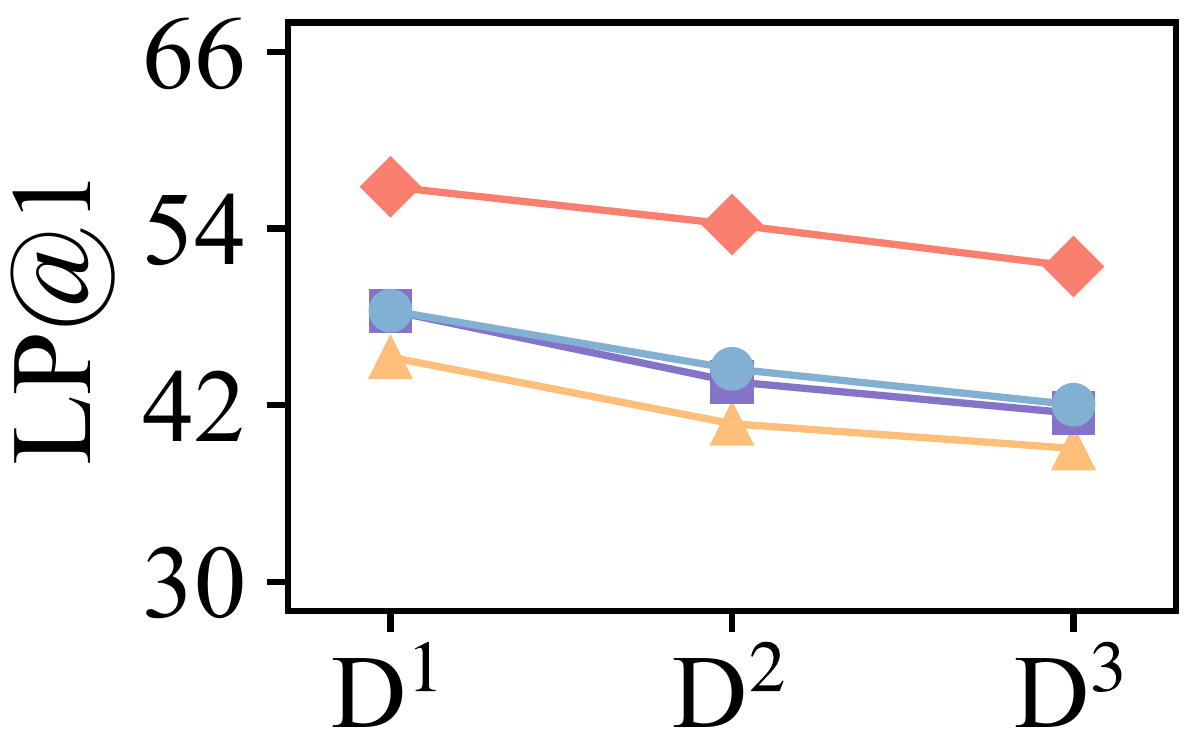}
\end{minipage}%
}%
\subfigure[\scriptsize $AP@5$ on NQ-Tables]{
\begin{minipage}[t]{0.155\linewidth}
\includegraphics[width=1\linewidth]{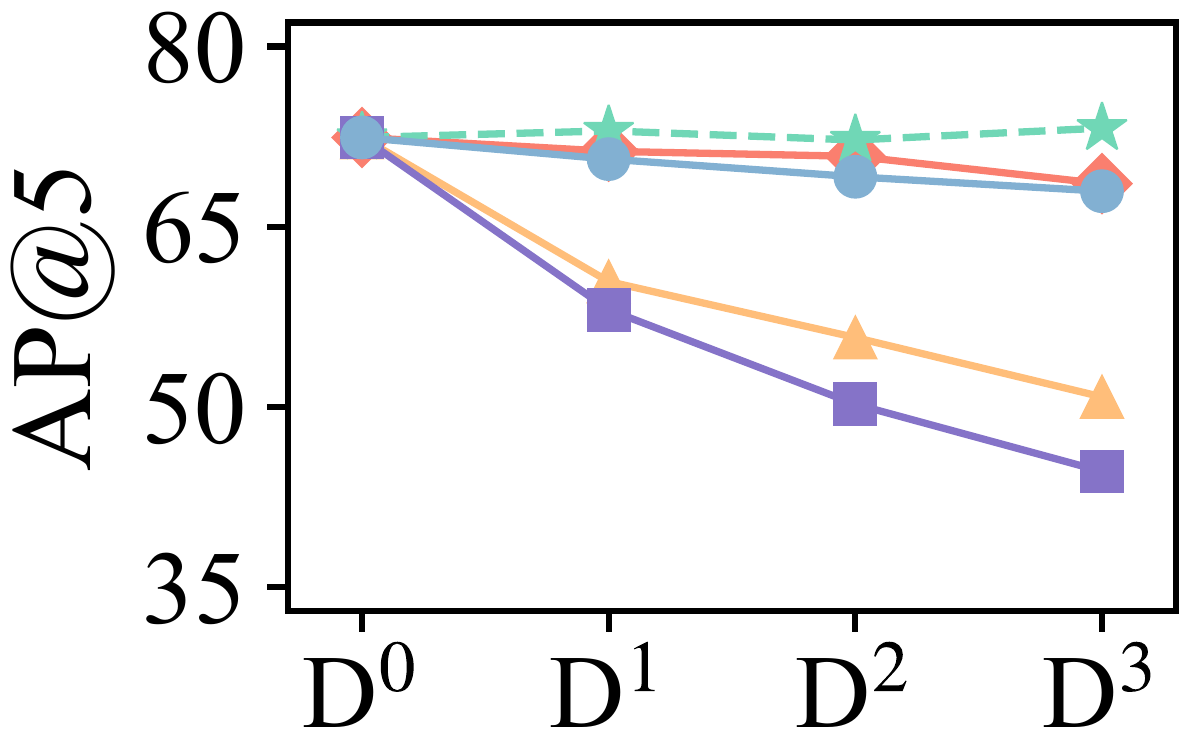}
\end{minipage}%
}%
\subfigure[\scriptsize $FT@5$ on NQ-Tables]{
\begin{minipage}[t]{0.155\linewidth}
\includegraphics[width=1\linewidth]{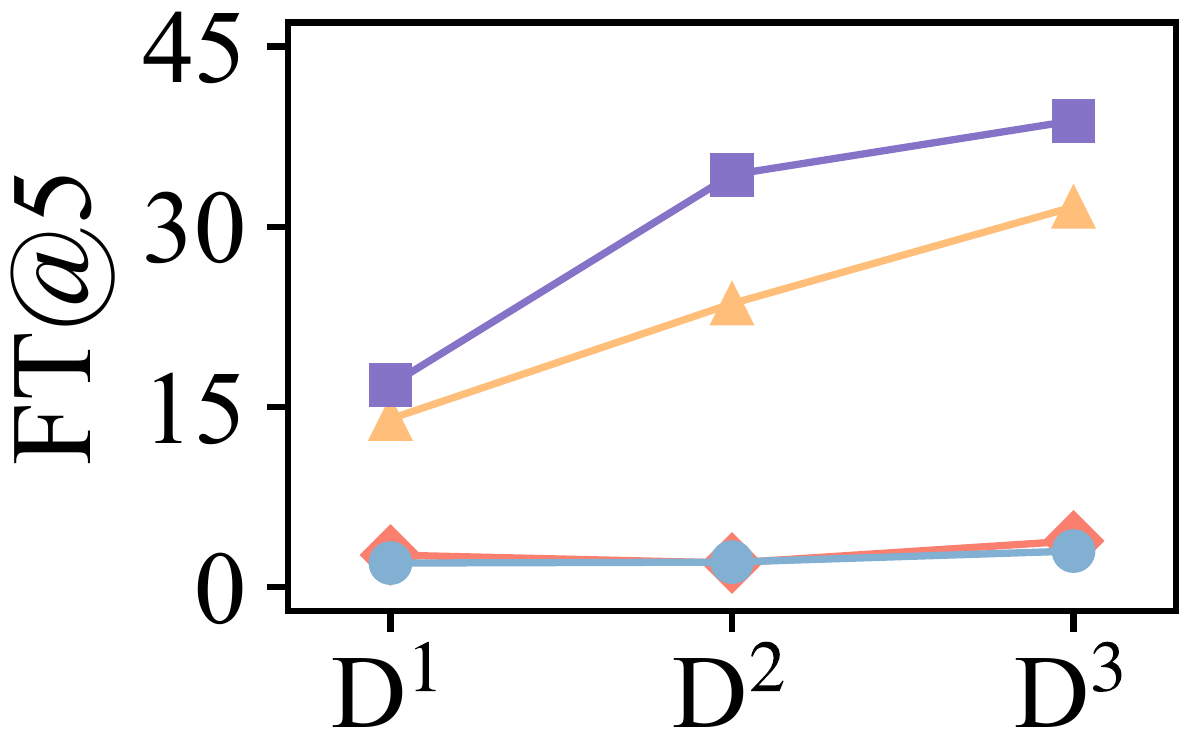}
\end{minipage}%
}%
\subfigure[\scriptsize $LP@5$ on NQ-Tables]{
\begin{minipage}[t]{0.155\linewidth}
\includegraphics[width=1\linewidth]{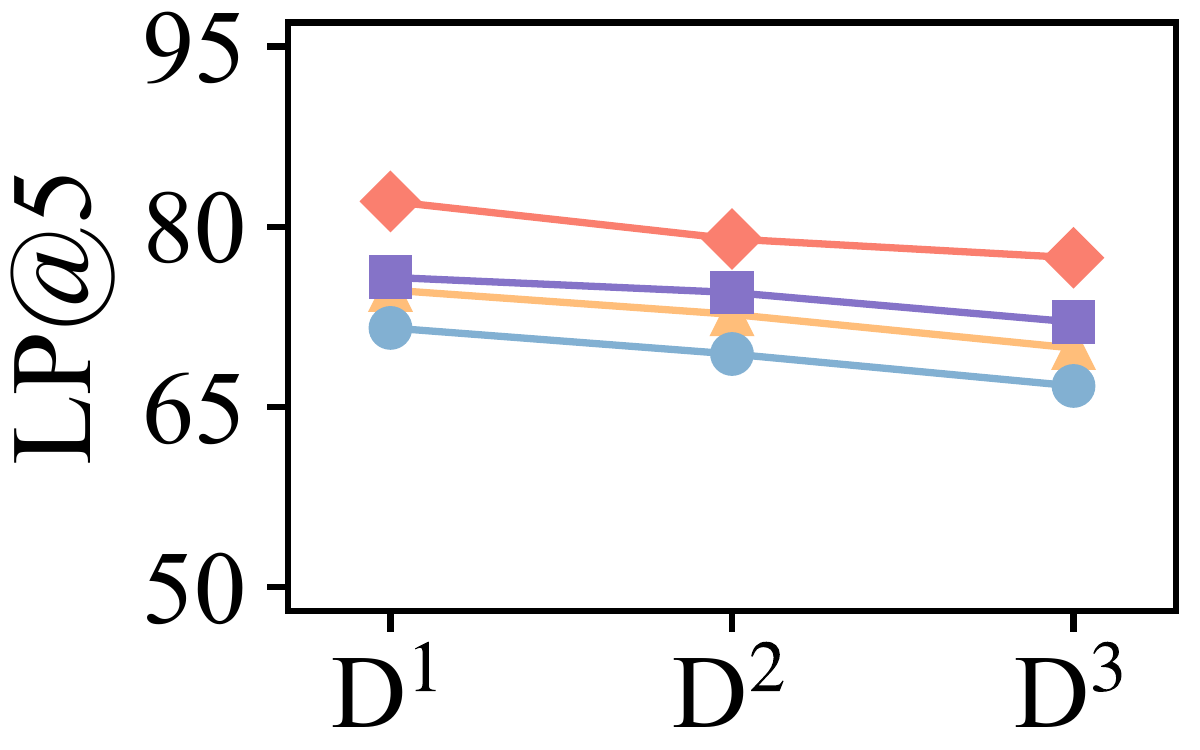}
\end{minipage}
}%
\\ \vspace{-2mm}
\subfigure[\scriptsize $AP@1$ on FetaQA]{
\begin{minipage}[t]{0.155\linewidth}
\includegraphics[width=1\linewidth]{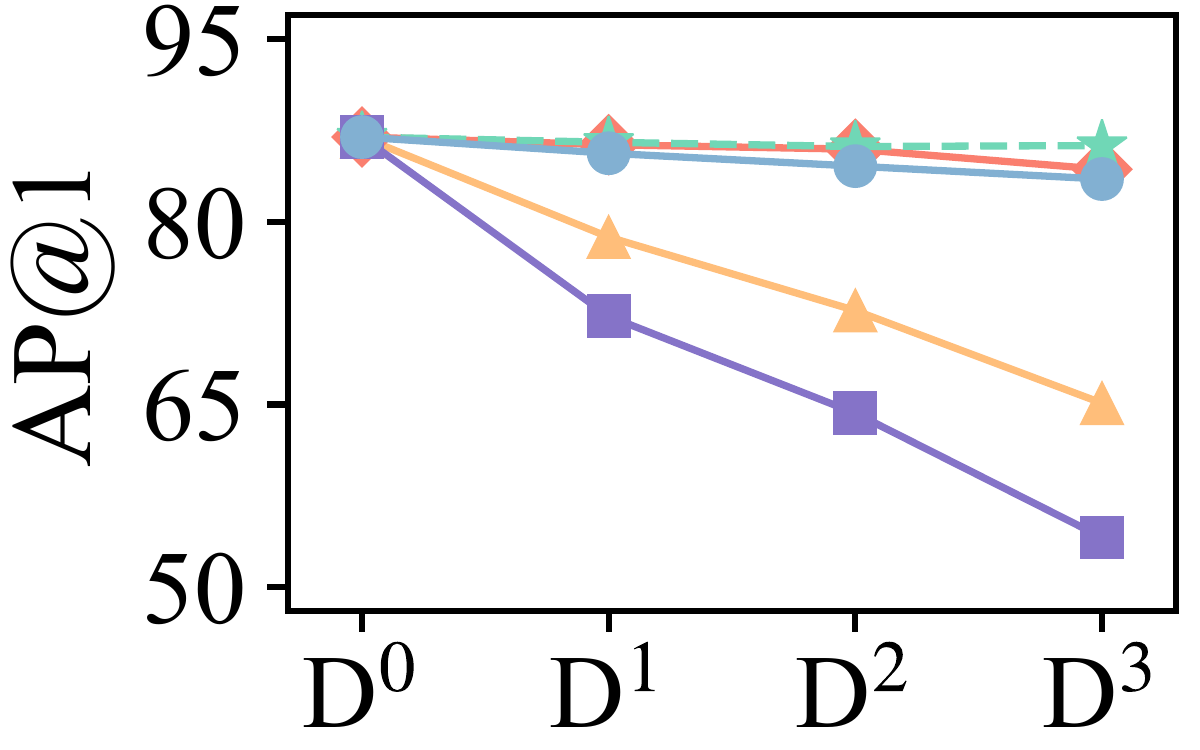 }
\end{minipage}%
}%
\subfigure[\scriptsize $FT@1$ on FetaQA]{
\begin{minipage}[t]{0.155\linewidth}
\includegraphics[width=1\linewidth]{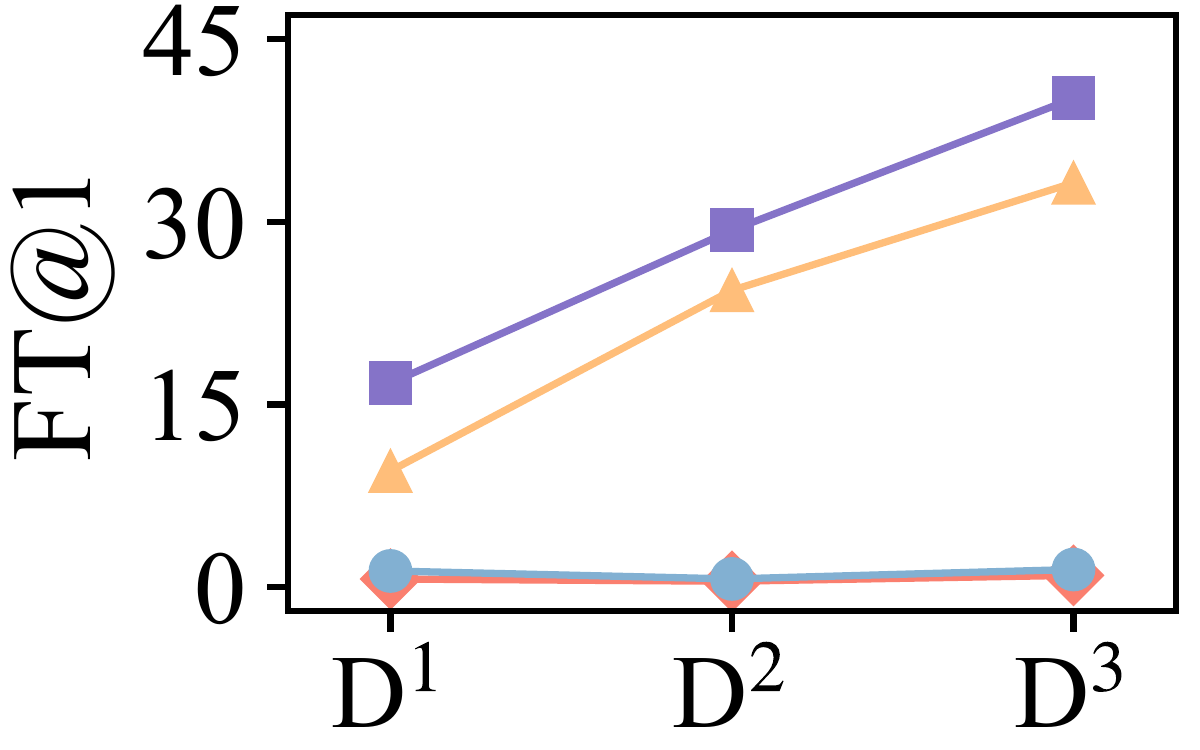}
\end{minipage}%
}%
\subfigure[\scriptsize $LP@1$ on FetaQA]{
\begin{minipage}[t]{0.155\linewidth}
\includegraphics[width=1\linewidth]{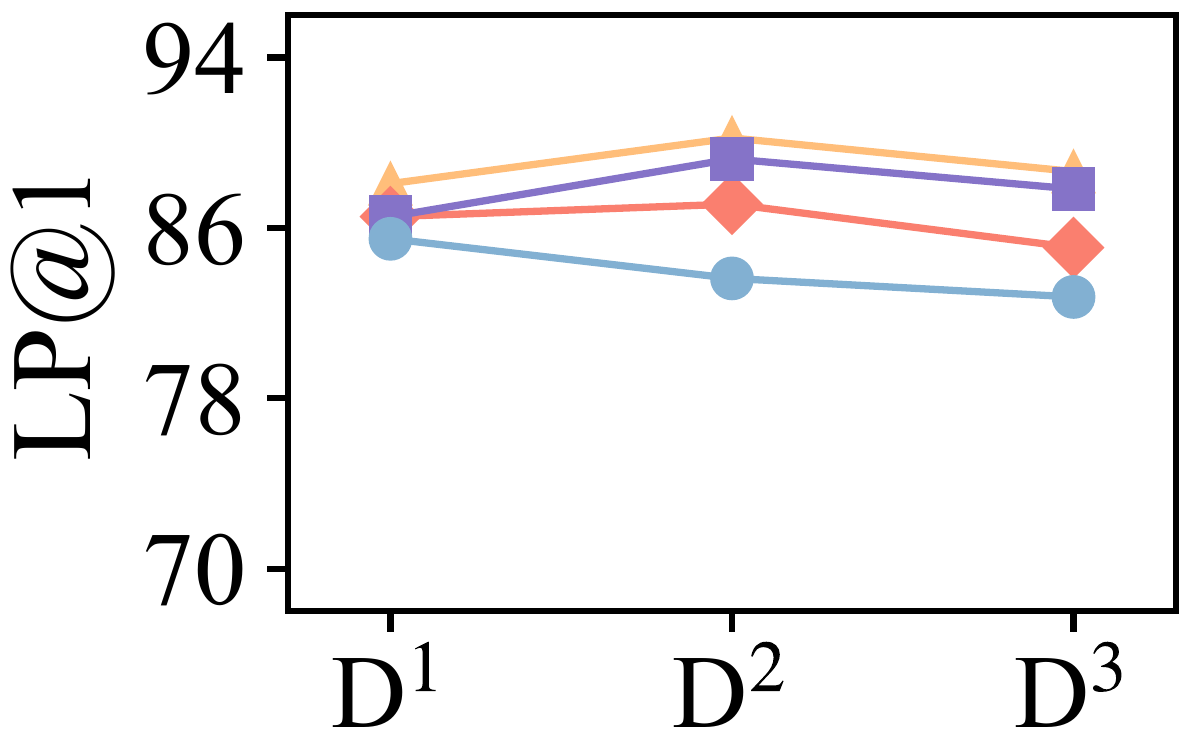}
\end{minipage}%
}%
\subfigure[\scriptsize $AP@5$ on FetaQA]{
\begin{minipage}[t]{0.155\linewidth}
\includegraphics[width=1\linewidth]{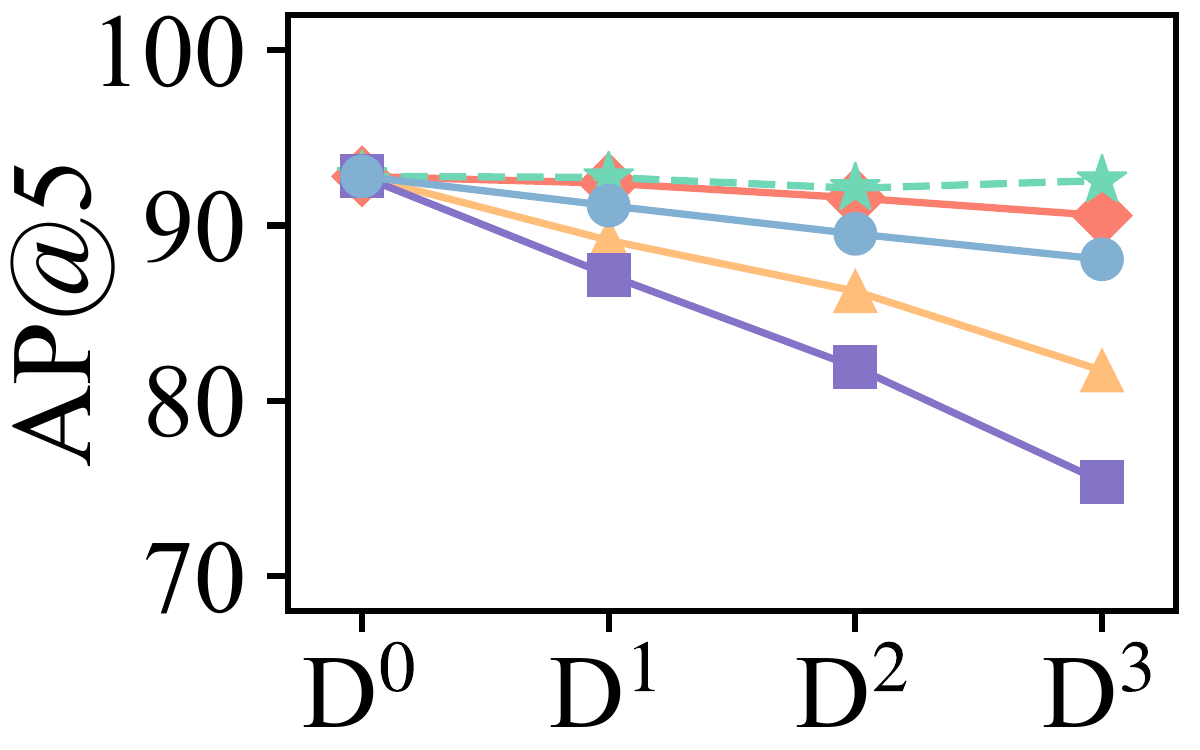}
\end{minipage}%
}%
\subfigure[\scriptsize $FT@5$ on FetaQA]{
\begin{minipage}[t]{0.155\linewidth}
\includegraphics[width=1\linewidth]{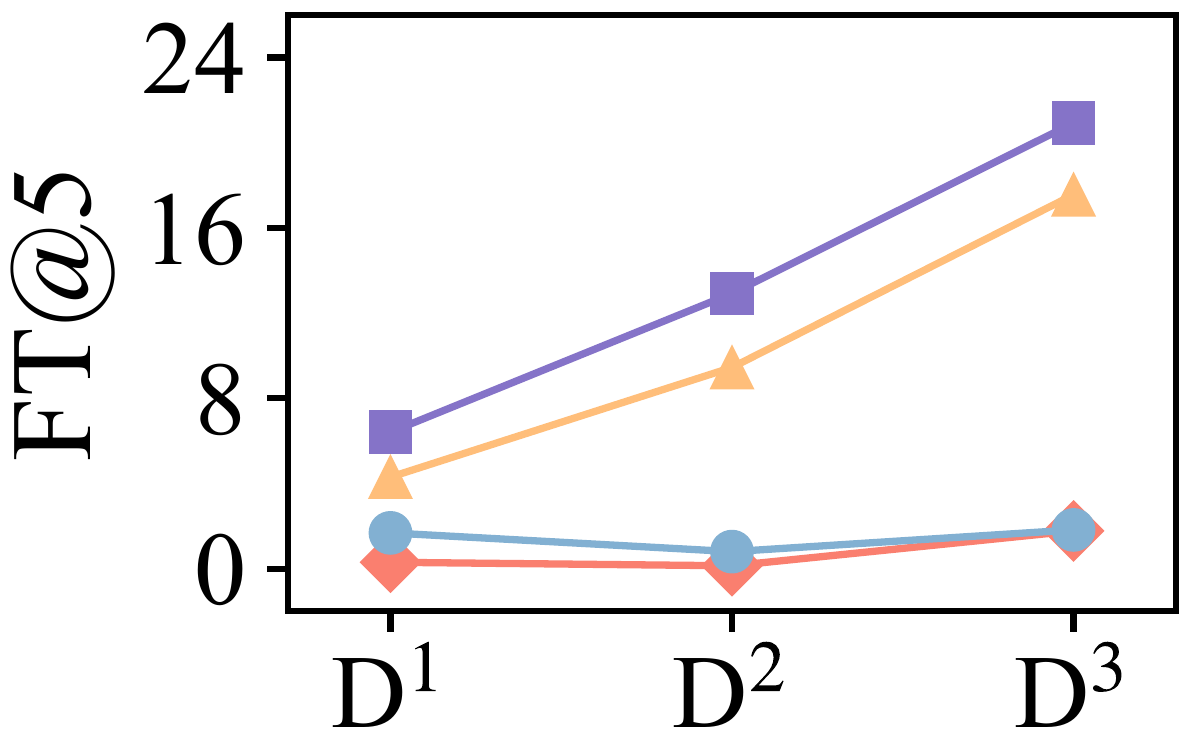}
\end{minipage}%
}%
\subfigure[\scriptsize $LP@5$ on FetaQA]{
\begin{minipage}[t]{0.155\linewidth}
\includegraphics[width=1\linewidth]{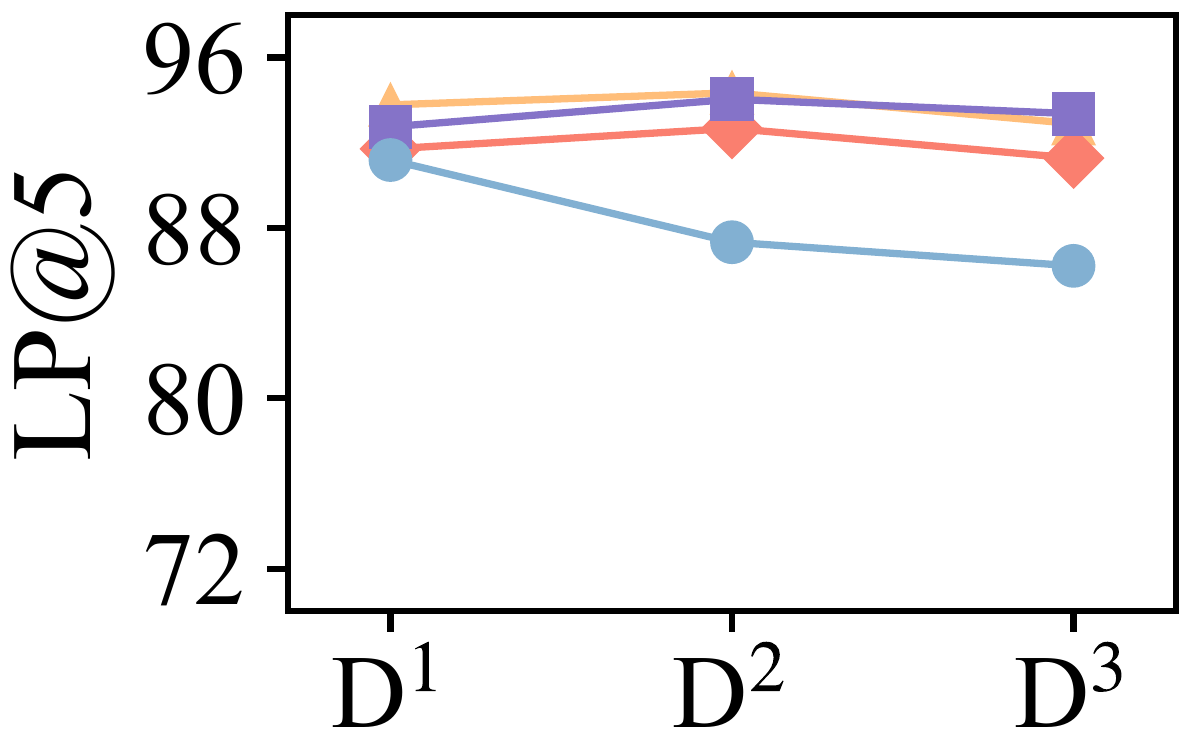}
\end{minipage}
}%
\\ \vspace{-2mm}
\subfigure[\scriptsize$AP@1$ on  OpenWikiTable]{
\begin{minipage}[t]{0.155\linewidth}
\includegraphics[width=1\linewidth]{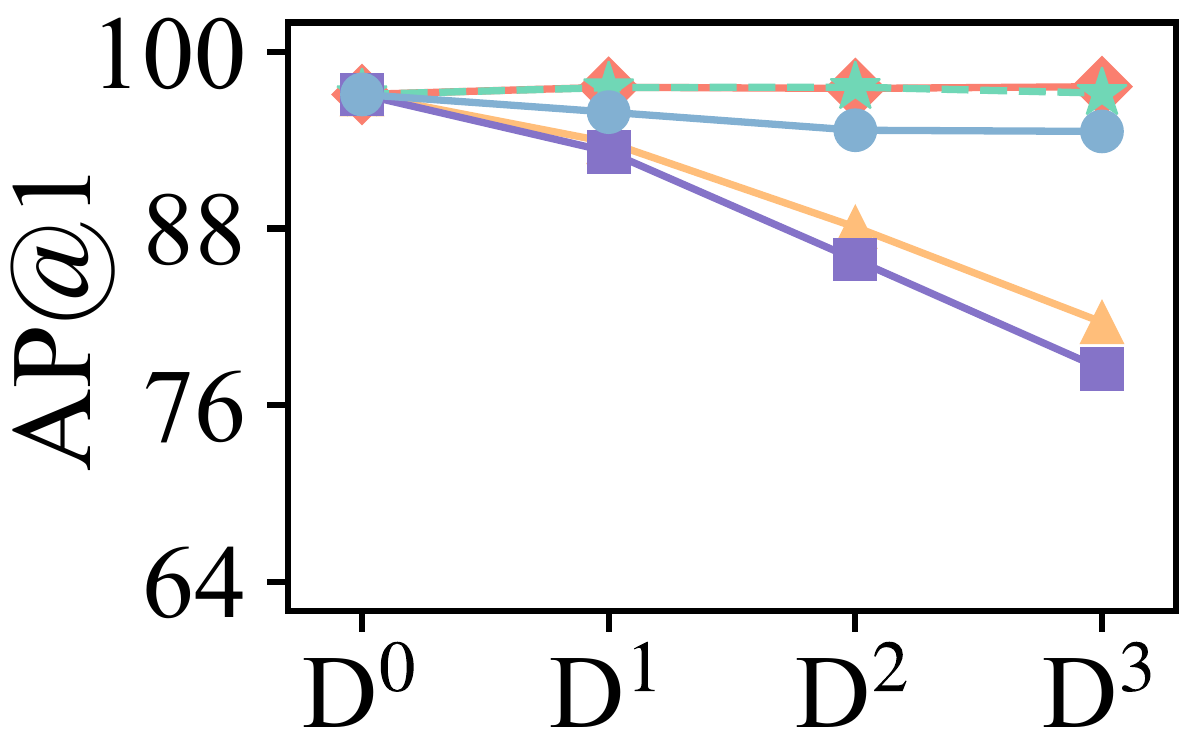 }
\end{minipage}%
}%
\subfigure[\scriptsize $FT@1$ on OpenWikiTable]{
\begin{minipage}[t]{0.155\linewidth}
\includegraphics[width=1\linewidth]{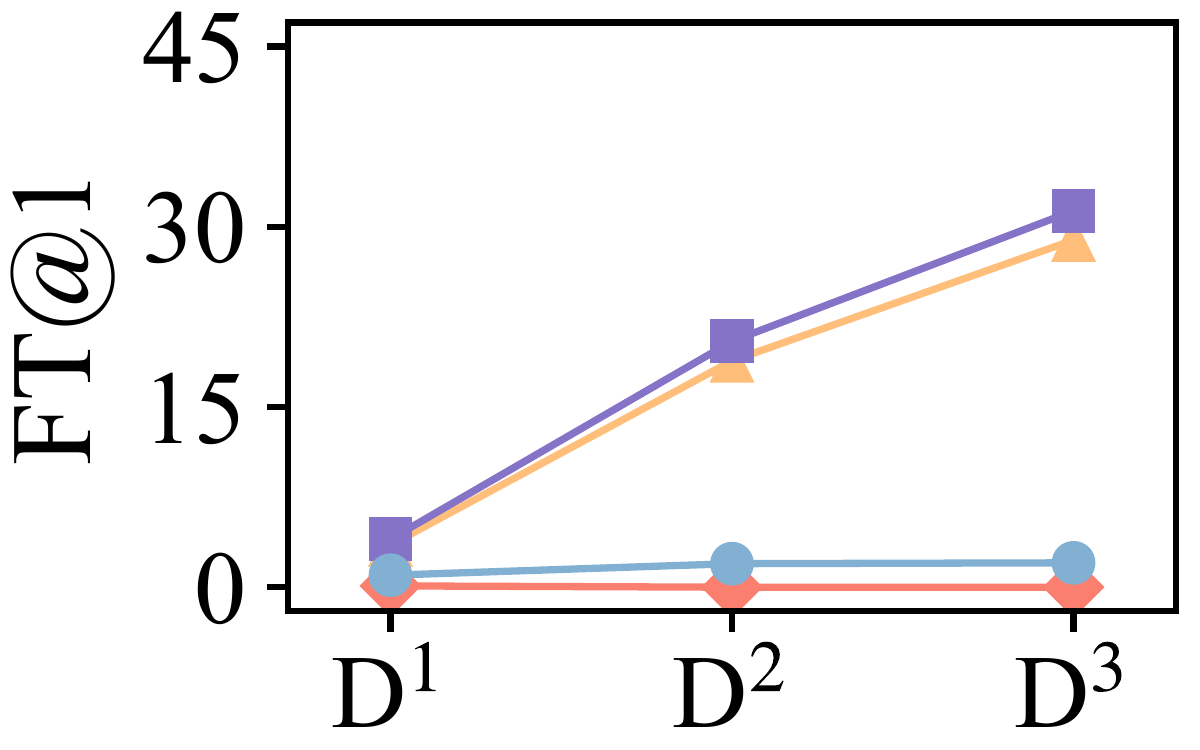}
\end{minipage}%
}%
\subfigure[\scriptsize $LP@1$ on OpenWikiTable]{
\begin{minipage}[t]{0.155\linewidth}
\includegraphics[width=1\linewidth]{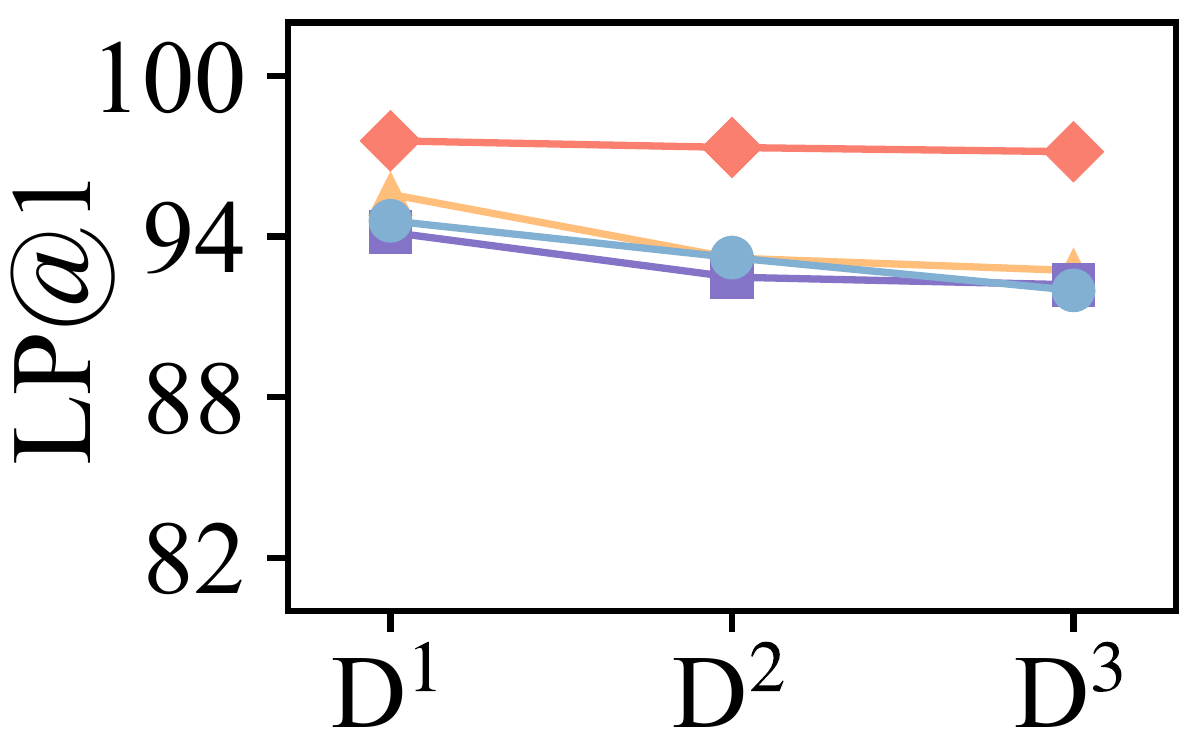}
\end{minipage}%
}%
\subfigure[\scriptsize $AP@5$ on OpenWikiTable]{
\begin{minipage}[t]{0.155\linewidth}
\includegraphics[width=1\linewidth]{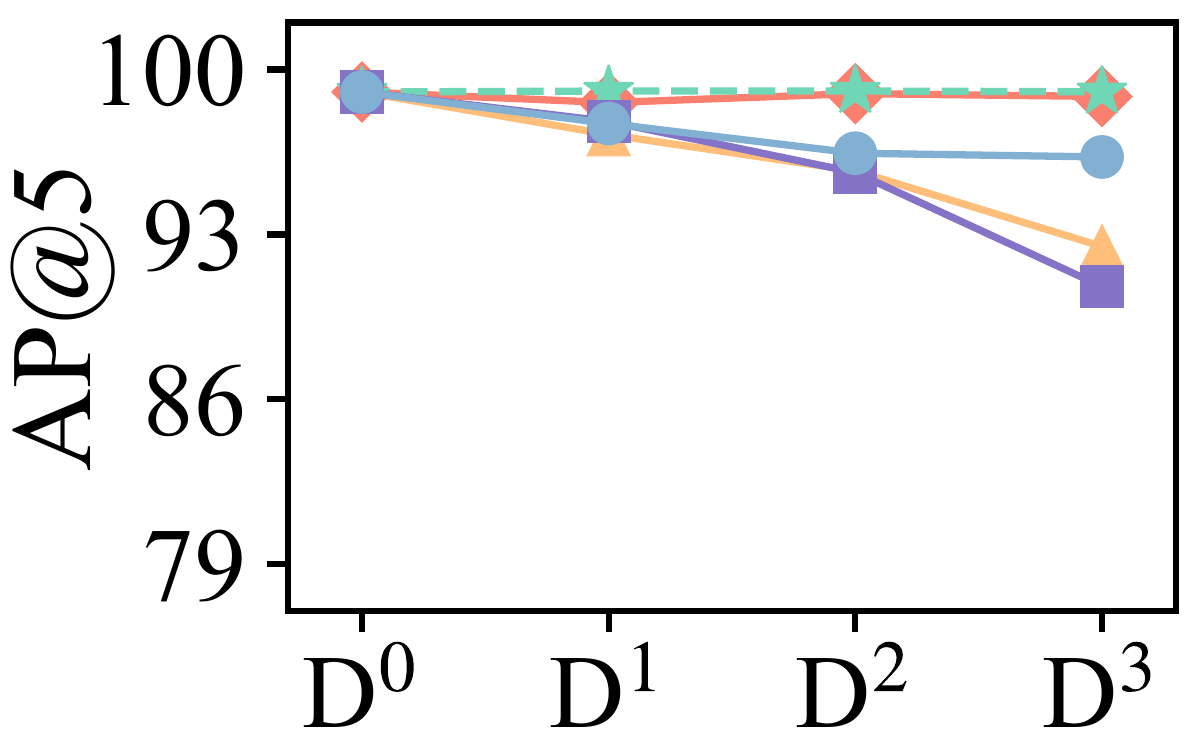}
\end{minipage}%
}%
\subfigure[\scriptsize $FT@5$ on OpenWikiTable]{
\begin{minipage}[t]{0.155\linewidth}
\includegraphics[width=1\linewidth]{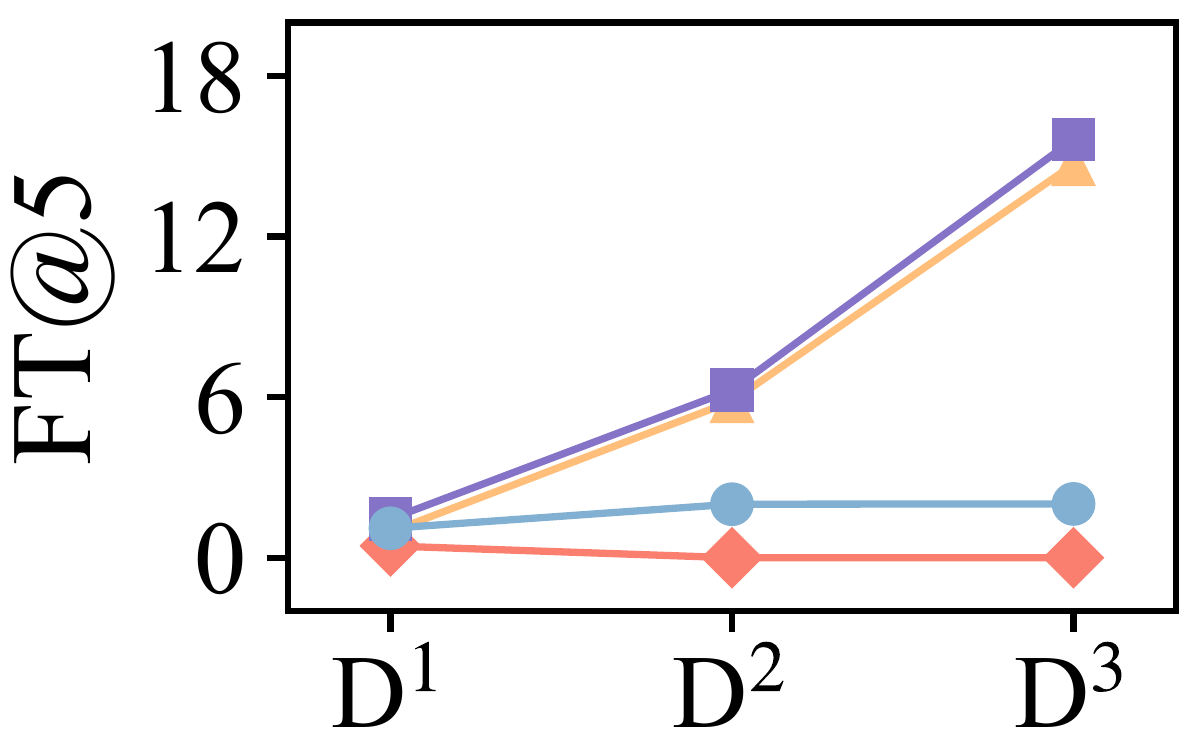}
\end{minipage}%
}%
\subfigure[\scriptsize $LP@5$ on OpenWikiTable]{
\begin{minipage}[t]{0.155\linewidth}
\includegraphics[width=1\linewidth]{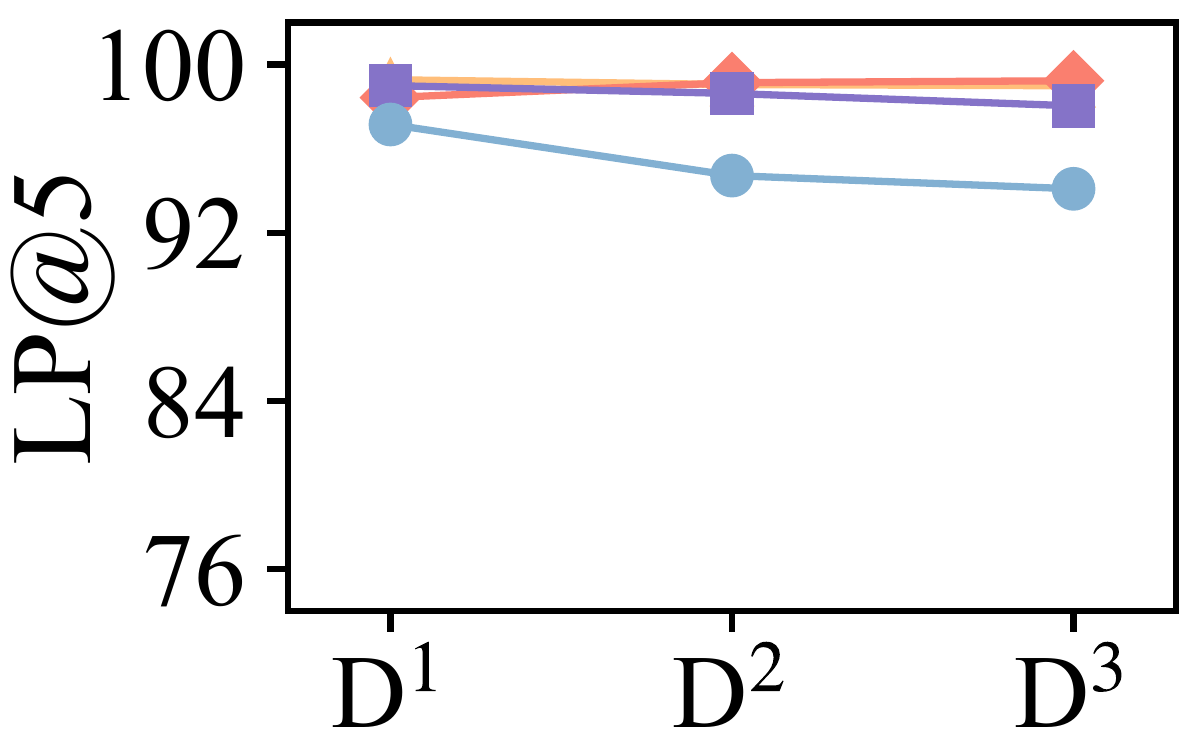}
\end{minipage}
}%
\vspace{-3mm}
\caption{Performance of index update under dynamic scenario.}
\vspace{-2mm}
\label{fig:p-c}
\end{figure*}

\subsection{Evaluation of Index Update}
\label{subsec:index_update}
\noindent \textbf{Setting.} To simulate a dynamic scenario, we randomly sample 70\% of the tables from the original dataset to create the initial repository $D^0$. We then randomly sample 10\% of the remaining tables to form a new batch, repeating this process three times to generate batches $D^1$, $D^2$, and $D^3$. For each dataset $D^i$ ($i\in[0,3]$), we construct the query set $Test^i$ by filtering queries from the original testing queries whose ground truth tables are included in $D^i$.

\noindent \textbf{Competitors.}
We adapt two existing continual learning methods for document retrieval via DSI: \textsf{CLEVER}~\cite{CLEVER} and \textsf{DSI++}~\cite{DSI++}, for table discovery. 
\textsf{CLEVER} samples old tables with similar tabids, and combines them with the new batch $D^u$ to continually train the DSI model. In contrast, \textsf{DSI++} randomly samples an equal number of old tables as \textsf{CLEVER}. Additionally, we compare a naive yet time-consuming solution, \textsf{Full}, which uses all tables in $\cup_{i=0}^{u} D^i$ during continual training. All three methods adopt our incremental tabid assignment strategy for new tables, and continually train the model from the last checkpoint. 
Lastly, we include \textsf{ReIndex}, which re-clusters all tables in $\cup_{i=0}^{u} D^i$ to reassign tabids, and re-trains the model  from scratch, serving as a theoretical upper bound.

\noindent \textbf{Metrics.} Following previous studies~\cite{CLEVER, DSI++}, we employ three metrics: (i) average performance ($AP@K$)  to measure the average performance on all existing tables, (ii) forgetting ($FT@K$) to measure the performance decline on old tables after learning from a new batch $D^u$; and (iii) learning performance ($LP@K$) to measure the ability to learn after indexing a new batch of tables $D^u$.
Formally, we denote  $P@K$ on $D^w$ (tested by $Test^w$) after indexing the new batch $D^u$ as $P_{u, w}@K$, and $P@K$ on $\cup_{w=0}^u D^w$ (tested by $\cup_{w=0}^uTest^w$) after indexing $D^u$ as $AP@K$.
The other two metrics are defined as follows:
$LP@K =\frac{1}{u} \sum\nolimits_{w=1}^u P_{w,w}@K$, and $FT@K=\frac{1}{u} \sum\nolimits_{w=0}^{u-1} \max _{w^{\prime} \in\{0, \cdots, u-1\}}\left(P_{w^{\prime}, w}@K-P_{u, w}@K\right)$.
%

\begin{figure}[t]
\vspace{-3mm}
\centering
 \subfigure{
\includegraphics[width=0.65\linewidth]{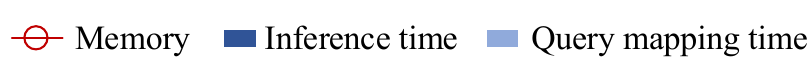}
}\\
\vspace{-4mm}
\subfigure[Serial]{
\includegraphics[width=0.44\linewidth]{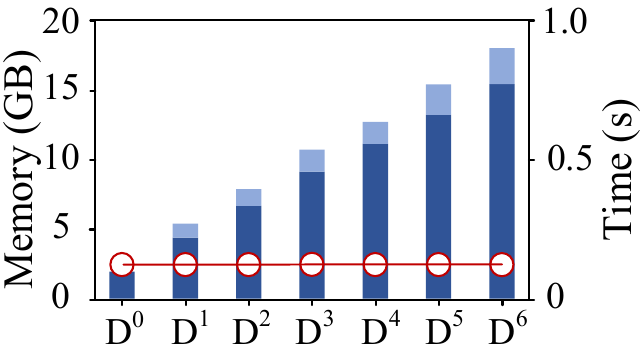}
}
\subfigure[Parallel]{
\includegraphics[width=0.44\linewidth]{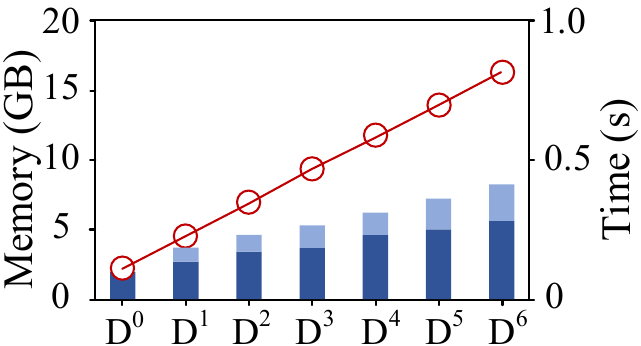}
}
\vspace{-5mm}
\caption{ Runtime and memory usage of online search.} 
\label{fig:scalability}
\vspace{-2mm}
\end{figure}

Note that, since \textsf{ReIndex} re-trains the model from scratch instead of continual learning, we only report its $AP$ values.

\noindent \textbf{Effectiveness.} The results are illustrated in Figure~\ref{fig:p-c}. In terms of $AP$ (the higher the better),  \textsc{Birdie} outperforms both \textsf{CLEVER} and \textsf{DSI++}, approaching the performance of \textsf{Full} and \textsf{ReIndex}, which use all previous tables. Interestingly, \textsf{DSI++}, which employs random sampling of old tables during continual learning, surpasses \textsf{CLEVER}, which utilizes a similarity-based sampling approach. Regarding $FT$ (the lower the better),
\textsc{Birdie} reduces forgetting by over 90\% compared to \textsf{CLEVER} and \textsf{DSI++}.
Additionally, as new batches are introduced, forgetting intensifies for \textsf{CLEVER} and \textsf{DSI++}, since each model update can lead to the forgetting of some old tables, accumulating with ongoing parameter updates. In contrast, \textsc{Birdie} implements parameter isolation, which keeps memory units independent, resulting in reduced forgetting. For $LP$ (the higher the better),
\textsc{Birdie} demonstrates comparable $LP@1$ and slightly lower $LP@5$ compared to  other  continual-learning-based methods.
The higher $LP$ of \textsf{CLEVER} and \textsf{DSI++} indicates that replaying only a portion of old tables during continual learning tends to favor new table retention. 
This prioritization of new tables leads to increased forgetting, ultimately resulting in poorer average performance. In contrast, \textsc{Birdie} strikes a promising balance between $FT$ and $LP$, yielding higher $AP$.  Note that $LP$ measures accuracy in an extreme scenario where the workload consists solely of new memory retrieval, implying that the older data may have become outdated. In Appendix~\ref{appendix:C.2}, we present a simple extension to improve \textsc{Birdie}'s performance as old memories become obsolete.

\begin{figure}[t]
\centering
\quad \quad \includegraphics[width=0.6\linewidth]{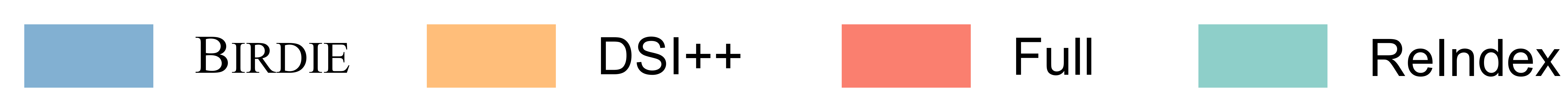}\\
\vspace{-2mm}

\subfigure[Tabid assignment time]{\begin{minipage}[t]{0.42\linewidth}
\includegraphics[width=1\linewidth]{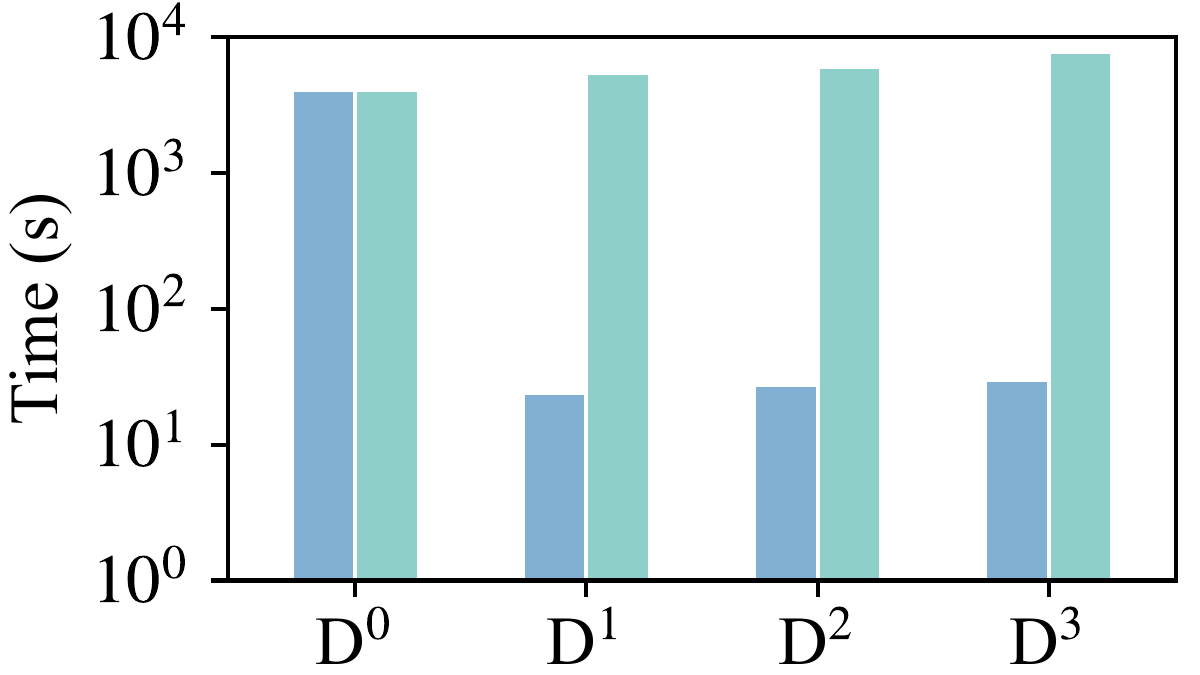}
\end{minipage}
}
\subfigure[Training time]{\begin{minipage}[t]{0.42\linewidth}
    \includegraphics[width=1\linewidth]{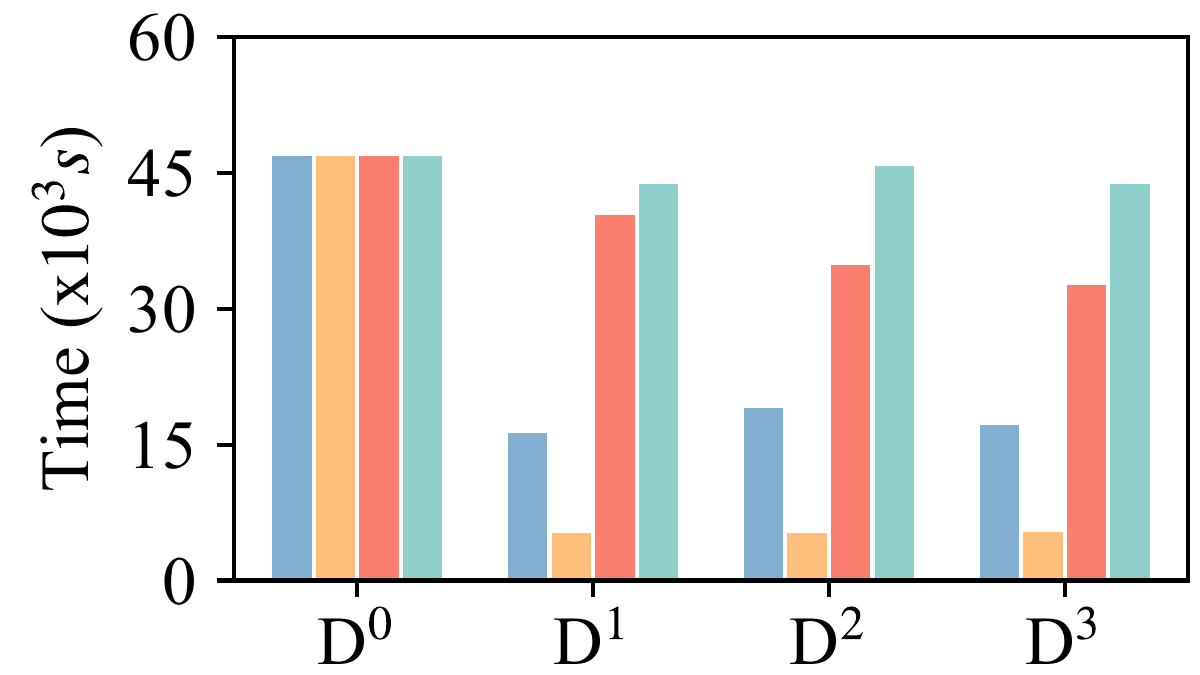}
\end{minipage}
}
\vspace{-5mm}
\caption{Runtime of index update on NQ-Tables.}
\label{fig:inc_time}
\vspace{-2mm}
\end{figure}

\noindent \textbf{Runtime.}  Figure~\ref{fig:inc_time}  illustrates the time taken for index update across sequential batches $D^0$ to $D^3$ on NQ-Tables. Runtime for other datasets can be found in Appendix~\ref{appendix:C.1}. For tabid assignment, \textsf{CLEVER},  \textsf{DSI++}, and \textsf{Full} follow \textsc{Birdie}, which explains why we only report the time taken by \textsc{Birdie} and \textsf{ReIndex}. It is observed that \textsc{Birdie}'s running time is 1-2 orders of magnitude shorter than that of \textsf{ReIndex}. For training time, \textsf{DSI++} is the fastest (the runtime of \textsf{CLEVER} is similar and omitted), utilizing less data than \textsf{Full} and \textsf{ReIndex} and fine-tuning the model from the last checkpoint, though its effectiveness is limited. \textsc{Birdie} demonstrates greater efficiency than \textsf{Full} and \textsf{ReIndex}, while achieving comparable average performance.

\begin{figure*}
\vspace{-10mm}
  \centering
  \includegraphics[width=1\linewidth]{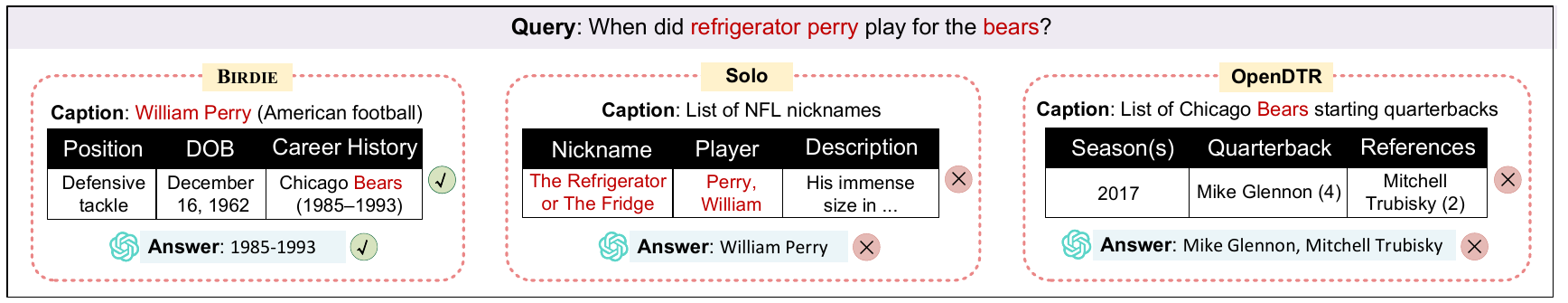}\vspace{-3mm}
  \caption{{A case of NL-driven table discovery and the results of table QA using GPT-4o.}
  \label{fig:case}}
    \vspace{-2mm}
\end{figure*}

\begin{table}
\small
\centering
\caption{Memory usage (GB) on NQ-Tables for index update.}
\renewcommand{\arraystretch}{1} 
\vspace{-0.15in}
\label{tab:memory_inc}
\setlength{\tabcolsep}{2.6mm}{
\begin{tabular}{c|cccc } 
\specialrule{.08em}{.06em}{.06em}
 Phases     & $D^0$ & $D^1$ & $D^2$ &  $D^3$    \\ 
\hline
ITA (CPU Memory)   &   1.31      & 0.46   &    0.48   &  0.51   \\

ITA w/o TC (CPU Memory)    &  1.31       & 2.16    &   2.52  &  2.74    \\

Training (GPU Memory)  &   15.19     &  8.04   &    8.04  &  8.04    \\

\hline
Inference-Serial (GPU Memory)  &   2.33      &  2.33    &  2.33     &  2.33    \\ 
Inference-Parallel (GPU Memory)  &   2.33      & 4.67    & 7.00     & 9.47    \\ 
\specialrule{.08em}{.06em}{.06em}
\end{tabular}}
\vspace{-5mm}
\end{table}

\noindent \textbf{Memory Usage.} Table~\ref{tab:memory_inc} reports memory usage on NQ-Tables.
ITA w/o TC refers to ITA without TC optimization presented in Section~\ref{subsec:clustering}. Results show that our TC optimization effectively reduces memory usage by 5$\times$. For training, we evaluate memory usage with the default batch size of 64. The training process on $D^1$–$D^3$ is memory-efficient with LoRA techniques. Depending on available memory resources, \textsc{Birdie} supports two inference paradigms: the serial paradigm, which requires constant memory, and the parallel paradigm, which scales linearly with the number of updates. 

\noindent \textbf{Scalability.} To evaluate the scalability during online search,
we use the NQ-Tables dataset to create six incremental batches.
Specifically, we sample 40\% of the tables from NQ-Tables as $D^0$, and the remaining 60\% tables are evenly divided into six batches $\{D_i\}^6_{i=1}$.
Figure~\ref{fig:scalability} illustrates the memory usage and runtime. 
In low-resource scenarios with serial inference, memory usage remains constant, while runtime scales linearly with the number of models.
Conversely, in rich-resource scenarios with parallel inference, runtime remains relatively stable, while memory usage scales linearly with the number of models. This is because multiple copies of the base model are loaded into memory simultaneously, as discussed in Section~\ref{subsec:parameter_isolaiton}.
The number of synthetic queries per candidate table is a small constant $B$ (e.g., 20 in our implementation), and each model generates only a limited number of candidate tables, making query mapping process both time and memory efficient.


\section{Case Study}
\label{sec:case}
We present a case of using \textsc{Birdie}, \textsf{Solo}, and \textsf{OpenDTR} to retrieve the top table from NQ-Tables in response to a given NL query. GPT-4o is then employed as a reasoning tool through the OpenAI API to answer the query based on the contents in the retrieved table. Figure~\ref{fig:case} illustrates
the case. Note that we only show the row relevant to the answer due to space limitation.

It is observed that only \textsc{Birdie} successfully retrieves the correct table, and GPT-4o provides the correct answer only when it receives the correct table.
\textsf{Solo} encodes each cell-attribute-cell triplet as an embedding and measures similarity between the query and triplet embeddings. In this instance, the triplet in the table ``List of NFL nicknames'' exhibits high similarity to the query. However, it lacks time information, rendering it incapable of answering the user's query. This highlights the limitation of dense methods that do not facilitate deep query-table interactions.
\textsf{OpenDTR}, which encodes the entire table as a single vector, struggles to capture detailed table information. As a result, the table ``Chicago Bears'' shows a higher similarity to the query than the correct table.
In contrast, \textsc{Birdie} optimizes both the indexing and search processes jointly and engages in deep query-table interactions during model training. Therefore, it comprehensively understands the given NL query and successfully identifies the correct table.
\section{Related Work}
\label{sec:relatedwork}

\noindent
\textbf{{Table Discovery. }}
Table discovery has been extensively researched within the data management community~\cite{TabelDiscovery,DataLake_Survey}. A prevalent line of table discovery is query-driven discovery,  which includes: (i) keyword-based table search~\cite{AdelfioS13,GoogleSearch} that aims to identify web tables related to specified keywords, utilizing metadata such as table headers and column names; 
(ii) table-driven search which locates target tables within a large data lake that can be joined~\cite{JOSIE,Deepjoin,Snoopy} or unioned~\cite{starmine,santos,TUS} with a given query table; and (iii) NL-query-driven table search~\cite{Solo,OpenDTR,OpenWiki}.

NL-driven table discovery offers a user-friendly interface that allows users to express their needs more precisely.
Existing NL-driven table discovery methods~\cite{Solo,OpenDTR,OpenWiki} typically follow a traditional representation-index-search pipeline.
The encoder in the representation phase plays a crucial role in search accuracy. For instance, OpenDTR~\cite{OpenDTR} uses TAPAS~\cite{TAPAS} as the backbone for its bi-encoder.
OpenWikiTable~\cite{OpenWiki} offers various options for query and table encoders. However, representing a table as a single vector can sometimes be insufficiently expressive. To address this, Solo~\cite{Solo} encodes each cell-attributes-cell triplet within the table into a fixed-dimensional embedding and retrieves similar triplet embeddings to the query embedding, followed
by aggregation of triplets-to-table. However, the lack of deep query-table interactions during retrieval hinders further performance improvements.

Another line of NL-based table search literature focuses on the re-ranking~\cite{GTR,AdHoc_TR,TableSearch}.
Utilizing a cross-encoder, they input both the query and candidate table to obtain embeddings for each query-table pair. This process enhances accuracy due to the deep query-table interactions but lacks of scalability for first-stage retrieval.

\vspace{1mm}
\noindent
\textbf{Differentiable Search Index.}
Differentiable search index (DSI)~\cite{DSI} sparks a novel search paradigm that unifies the indexing and search within a single Transformer architecture. It was initially proposed for document retrieval~\cite{NCI, DSI, DSI-QG} and has been applied in scenarios like retrieval-augmented generation (RAG)~\cite{CorpusLM}, recommendation systems~\cite{Tiger}, etc.  To the best of our knowledge, \textsc{Birdie} is the first attempt to perform table discovery using DSI, taking into account the unique properties of tabular data to automate the collection of training data. Real-world applications often involve dynamically changing corpora. However, in DSI, which encodes all corpus information into model parameters, indexing new corpora inevitably leads to the forgetting of old ones. To mitigate catastrophic forgetting, some recent studies~\cite{DSI++, CLEVER} propose replay-based solutions that sample some old data and combine it with new data for continual learning. However, these methods often struggle to balance indexing new data and retaining old memories, resulting in suboptimal average performance.
In contrast, \textsc{Birdie} designs a parameter isolation method that ensures the independence of each memory unit, thus achieving a promising average performance.

\section{Conclusions}
\label{sec:conlusion}

We present \textsc{Birdie}, an effective NL-driven table discovery framework using a differentiable search index. We first introduce a two-view-based tabid assignment method to assign a unique table identifier to each table, considering the semantics of both the metadata and instance data of tables. Then, we propose a LLM-based query generation method tailored for tabular data to construct synthetic NL queries for DSI model training. To accommodate the continual indexing of dynamic tables, we design an index update strategy via parameter isolation. Comprehensive experiments confirm that \textsc{Birdie} significantly outperforms the SOTA  methods, and our parameter isolation strategy alleviates catastrophic forgetting and achieves better average performance than competitors. In the future, we plan to develop acceleration methods to speed up the offline training phase of \textsc{Birdie} and reduce costs.




\bibliographystyle{ACM-Reference-Format}

\balance
\bibliography{refer}
\clearpage
\newpage
\appendix
\section*{\huge{Appendix}}
\vspace{0.1in}
\setcounter{myLemma}{0}
\renewcommand\themyProp{A.\arabic{myLemma}}

\section{Proof of Lemma}
\label{appendix:A}
\begin{myLemma}
The upper and lower bounds of the cluster radius $r'$ after the insertion of $\mathbf{h}_{new}$ are given by $r + \operatorname{dist}(\mathbf{c}, \mathbf{c}^\prime)$ and $\operatorname{dist}(\mathbf{h}_{new}, \mathbf{c}^\prime)$, respectively. 
\end{myLemma}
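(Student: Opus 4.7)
The plan is to bound $r' = \max_{\mathbf{h}_j \in \mathcal{C}^* \cup \{\mathbf{h}_{new}\}} \operatorname{dist}(\mathbf{c}', \mathbf{h}_j)$ from both sides, working within the Case~II hypothesis $\delta < d \le r$ under which Lemma~1 is invoked. The lower bound is essentially by definition: since $\mathbf{h}_{new}$ lies in the updated cluster, the maximum that defines $r'$ is at least the single term $\operatorname{dist}(\mathbf{h}_{new}, \mathbf{c}')$, giving $r' \ge \operatorname{dist}(\mathbf{h}_{new}, \mathbf{c}')$ immediately.

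For the upper bound, I would split the maximum into the contributions from the original members of $\mathcal{C}^*$ and from $\mathbf{h}_{new}$, and apply the triangle inequality to each. For any pre-existing $\mathbf{h}_j \in \mathcal{C}^*$,
\begin{equation*}
\operatorname{dist}(\mathbf{c}', \mathbf{h}_j) \le \operatorname{dist}(\mathbf{c}', \mathbf{c}) + \operatorname{dist}(\mathbf{c}, \mathbf{h}_j) \le \operatorname{dist}(\mathbf{c}, \mathbf{c}') + r,
\end{equation*}
using the old-cluster radius bound $\operatorname{dist}(\mathbf{c}, \mathbf{h}_j) \le r$. For the newly inserted point,
\begin{equation*}
\operatorname{dist}(\mathbf{c}', \mathbf{h}_{new}) \le \operatorname{dist}(\mathbf{c}', \mathbf{c}) + \operatorname{dist}(\mathbf{c}, \mathbf{h}_{new}) = \operatorname{dist}(\mathbf{c}, \mathbf{c}') + d \le \operatorname{dist}(\mathbf{c}, \mathbf{c}') + r,
\end{equation*}
where the final inequality uses $d \le r$, which is exactly the Case~II premise. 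Taking the maximum over all members yields $r' \le r + \operatorname{dist}(\mathbf{c}, \mathbf{c}')$.

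The main obstacle, or rather the only subtle point, is recognizing that the triangle-inequality bound for $\mathbf{h}_{new}$ would not automatically land below $r + \operatorname{dist}(\mathbf{c}, \mathbf{c}')$ without the hypothesis $d \le r$: if $\mathbf{h}_{new}$ were outside the original radius, the upper bound claim could fail. This is precisely why the lemma is tied to Case~II and why Case~III (where $d > r$) instead spawns a new cluster. I would flag this dependency explicitly in the write-up so the reader sees that the bound is not an unconditional geometric fact but rests on the case discrimination used in Algorithm~\ref{alg:ita}. Combining the two bounds gives $\operatorname{dist}(\mathbf{h}_{new}, \mathbf{c}') \le r' \le r + \operatorname{dist}(\mathbf{c}, \mathbf{c}')$, as claimed, justifying the uniform sampling estimator in Eq.~\eqref{eq:r}.
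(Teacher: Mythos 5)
Your proof is correct and follows essentially the same route as the paper's: both establish the lower bound by noting $\mathbf{h}_{new}$ itself is a member of the updated cluster, and both obtain the upper bound by splitting into the old-member and new-member cases, applying the triangle inequality through $\mathbf{c}$, and invoking the Case~II premise $d \le r$ for the new point. Your explicit flagging of the dependence on $d \le r$ is a slightly more careful presentation of a step the paper uses implicitly, but the argument is the same.
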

 
\begin{proof}
    After inserting $\mathbf{h}_{new}$ into the cluster, the farthest embedding from the new center $\mathbf{c}^\prime$ could either be an old embedding $\mathbf{h}_{old}$ within this cluster, or the newly inserted  $\mathbf{h}_{new}$. In the first case, applying the triangle inequality yields $r^\prime = \operatorname{dist}(\mathbf{h}_{old}, \mathbf{c}^\prime) \leq \operatorname{dist}(\mathbf{h}_{old}, \mathbf{c}) + \operatorname{dist}(\mathbf{c}, \mathbf{c}^\prime) 
    \leq r + \operatorname{dist}(\mathbf{c}, \mathbf{c}^\prime)$. In the second case, we can also infer that $r^\prime = \operatorname{dist}(\mathbf{h}_{new}, \mathbf{c}^\prime) \leq \operatorname{dist}(\mathbf{h}_{new}, \mathbf{c}) + \operatorname{dist}(\mathbf{c}, \mathbf{c}^\prime) = d + \operatorname{dist}(\mathbf{c}, \mathbf{c}^\prime)$. Since $ \ \delta < d \leq  r$,  we have $r' \leq  r + \operatorname{dist}(\mathbf{c}, \mathbf{c}^\prime)$. 
    
    The distance from the newly inserted $\mathbf{h}_{new}$ to the new center $\mathbf{c}^\prime$ provides a minimum bound for $r'$.
\end{proof}

\section{Sensitivity Study}
\label{appendix:B}

We provide some practical guidelines for the key hyperparameters and perform a sensitivity study.

For model training, most hyperparameters are set based on either default recommended values or commonly used configurations from previous studies. For example, the rank $d_r$ of LoRA is set to 8 following the default setting in PEFT library~\cite{peft}; the learning rate and beam size are chosen based on previous studies~\cite{DSI, DSI-QG}. For query generation, the size $B$ of queries for each table is a tunable hyperparameter. We explore the impact of the number $B$ of generated queries per table on \textsc{Birdie}'s performance by varying $B$ in the range of  \{5, 10, 20, 50\}.  
Figure~\ref{fig:vary_q} illustrates the performance changes over the training steps. As expected, generating more queries generally improves accuracy by providing more comprehensive coverage of table information. However, performance does not continuously increase with the number of queries; for instance, generating 20 queries yields similar results to generating 50. Thus, we generate 20 queries for each table by default.

During tabid assignment, key hyperparameters include the number $k$ of clusters, the maximum size $c$ of leaf clusters, and the depth $l$ of the first view. Recall that each tabid is represented as $\mathbf{s} =( s_1, \dots s_d)$, where $s_i \in [0, k-1]$ for $i<d$ and $s_d \in [0, c-1]$. Thus, $k$ and $c$ determine the value ranges for $s_1 .. s_{d-1}$ and $s_d$, respectively. We recommend setting $k = c$ to ensure the consistency in the range of each token  within tabids. To configure $l$ and $k$, a larger $l$ allows finer differentiation of first-view embeddings during the clustering. Preliminary experiments suggest that $l=2$ is generally sufficient. The clustering tree is expected to have a depth of $2l$ to ensure balanced clustering for both views, resulting in a maximum tabid length $d=2l+1$. In a perfect $k$-ary clustering tree, where each leaf node contains $k$ embedding vectors, the maximum number of tables $N$ it can represent is $k^{2l+1}$ if both views consist of $l$ levels, or $k^{l+1}$ if only the first view is considered. To ensure balanced clustering across two views, we recommend setting $k$ within the range $\left(N^{\frac{1}{2l+1}}, N^{\frac{1}{l+1}}\right)$. Additionally, we perform a sensitivity study to explore impact of varying $k$ within this range.
As shown in Table~\ref{tab:vary_k}, search accuracy remains stable, demonstrating \textsc{Birdie}'s robust performance.

\begin{table}[H]
\small
\centering
\caption{The accuracy of \textsc{Birdie} under different $k$ within the recommended range.}
\label{tab:vary_k}
\renewcommand{\arraystretch}{1.0} 
\vspace{-0.1in}
\setlength{\tabcolsep}{2.2mm}{
\begin{tabular}{ccc|ccc|ccc} 
\specialrule{.12em}{.06em}{.06em}
\multicolumn{3}{c|}{NQ-Tables} & \multicolumn{3}{c|}{FetaQA} & \multicolumn{3}{c}{OpenWikiTable}  \\ 
\hline
$k$ & $P@1$ & $P@5$ &            $k$ & $P@1$ & $P@5$ &                  $k$ & $P@1$ & $P@5$                            \\ 
\hline
16  &  45.10 &  70.23            &8   & 84.77  & 90.86                     &12   & 95.71  &  97.88                            \\
32  &  46.64 &  73.21            &14  & 86.32 &  92.76                    &20    & 97.20  & 99.06                              \\
48  &  45.06 &  71.22                    &20  & 86.27 & 92.56           &28 & 96.32 & 98.11                             \\
\specialrule{.12em}{.06em}{.06em}
\end{tabular}}
\end{table}

\begin{figure}[h]
\centering
\vspace{-2mm}
\subfigure[$P@1$ on NQ-Tables]{
    \includegraphics[width=0.47\linewidth]{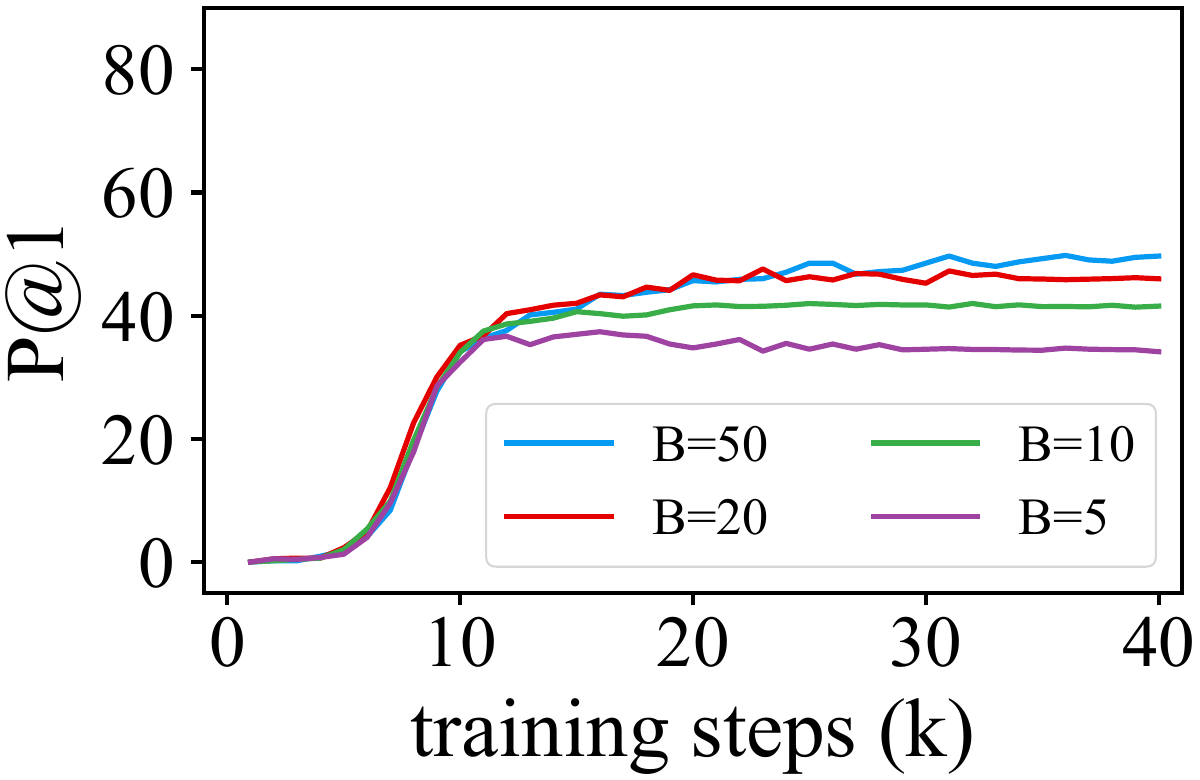}
}
\subfigure[$P@5$ on NQ-Tables]{
\includegraphics[width=0.47\linewidth]{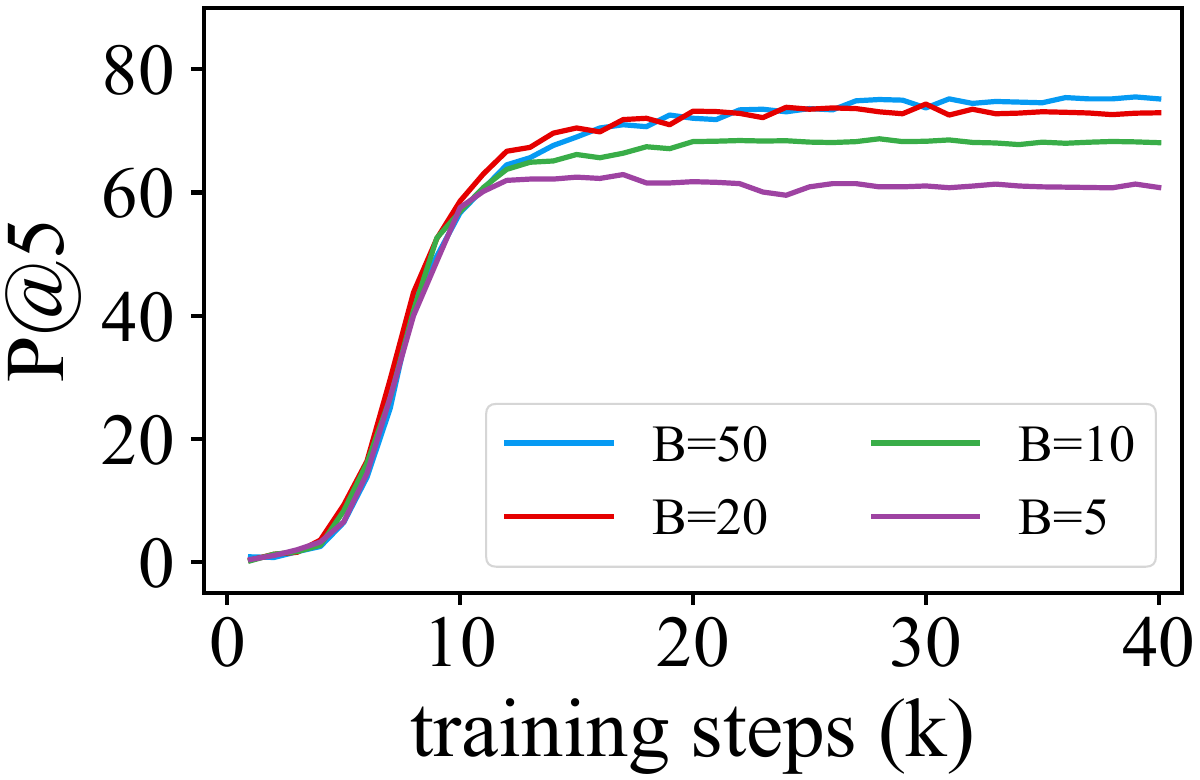}
}\\
\vspace{-0.03in}
\subfigure[$P@1$ on FetaQA]{
    \includegraphics[width=0.47\linewidth]{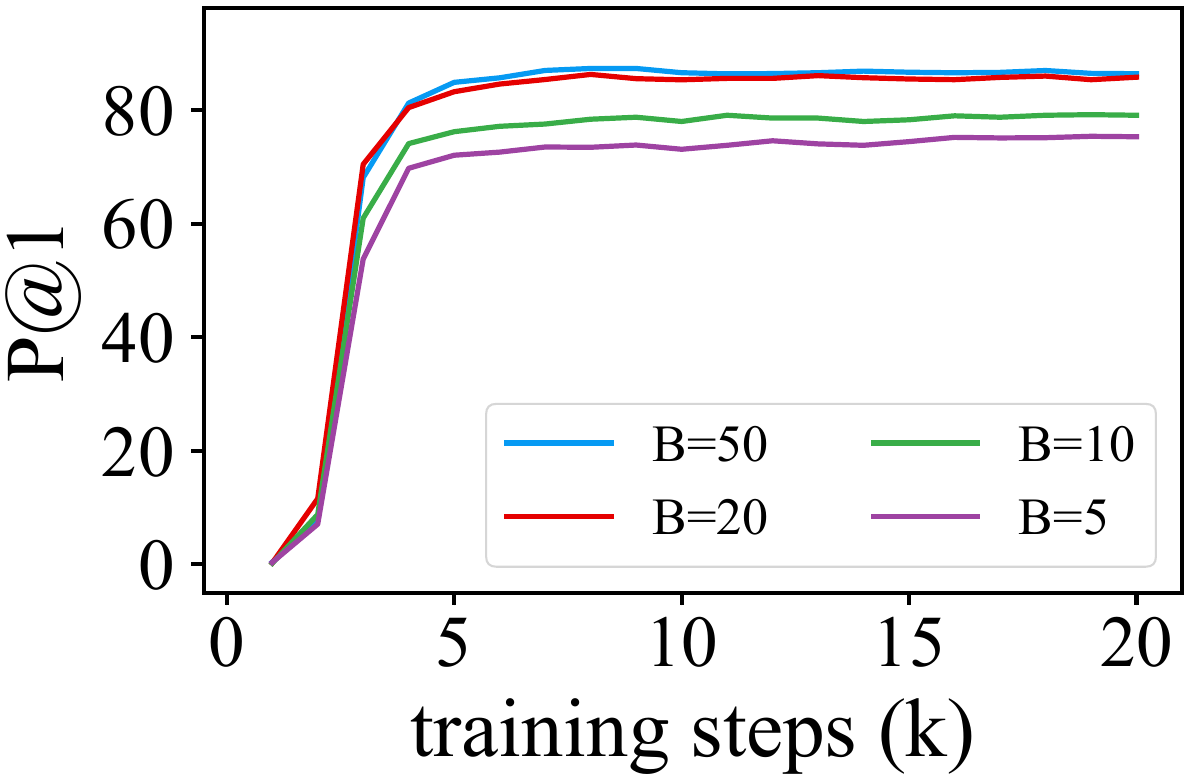}
}
\subfigure[$P@5$ on FetaQA]{
    \includegraphics[width=0.47\linewidth]{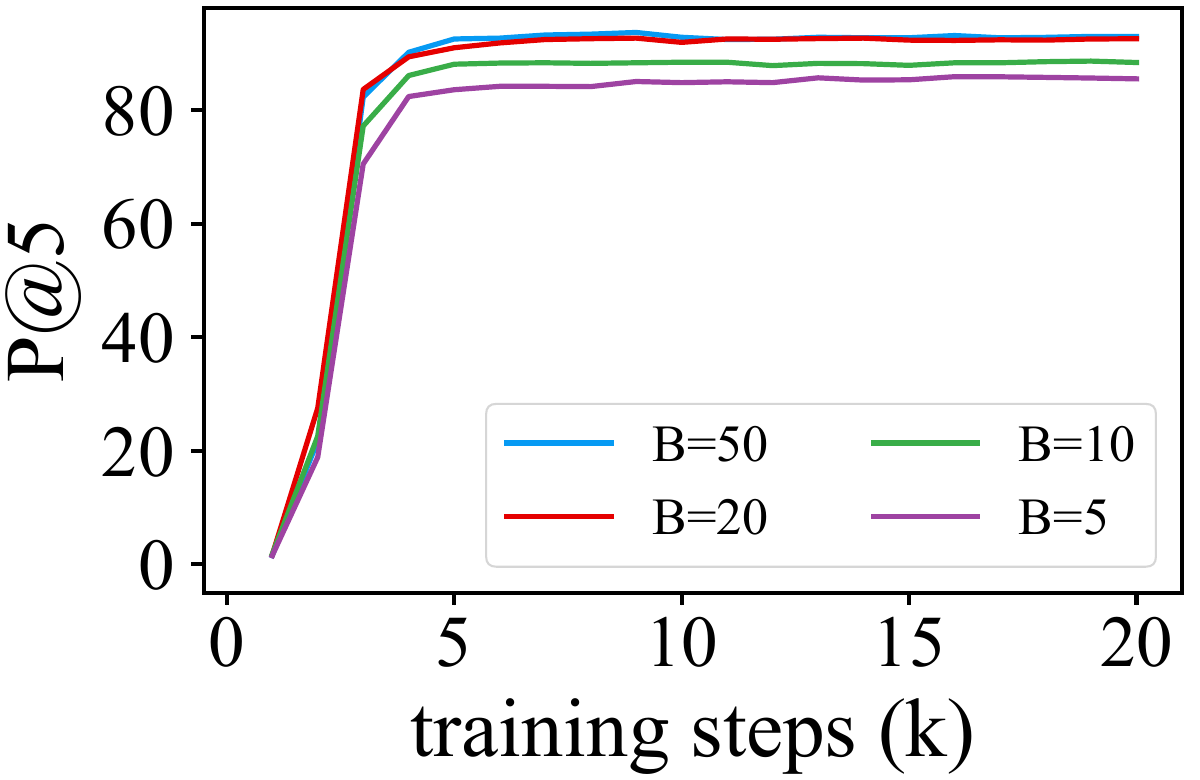}
}\\
\vspace{-0.03in}
\subfigure[$P@1$ on OpenWikiTable]{
    \includegraphics[width=0.47\linewidth]{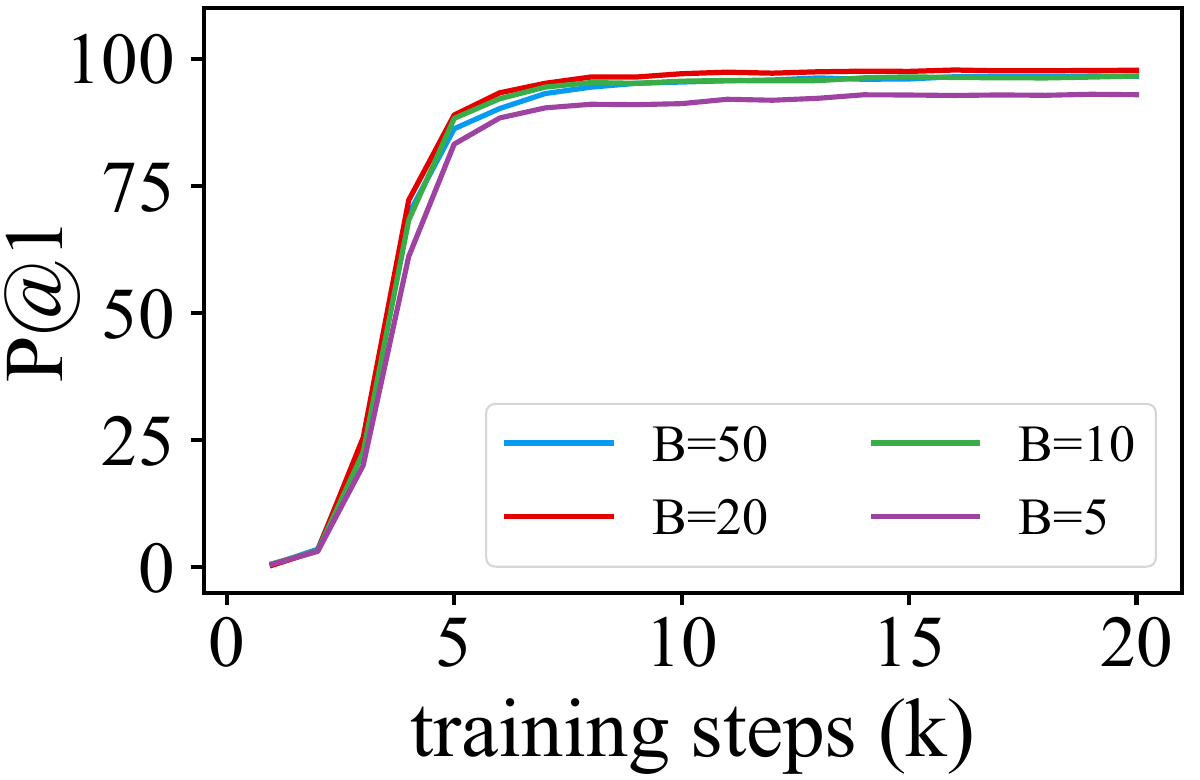}
}
\subfigure[$P@5$ on OpenWikiTable]{
    \includegraphics[width=0.47\linewidth]{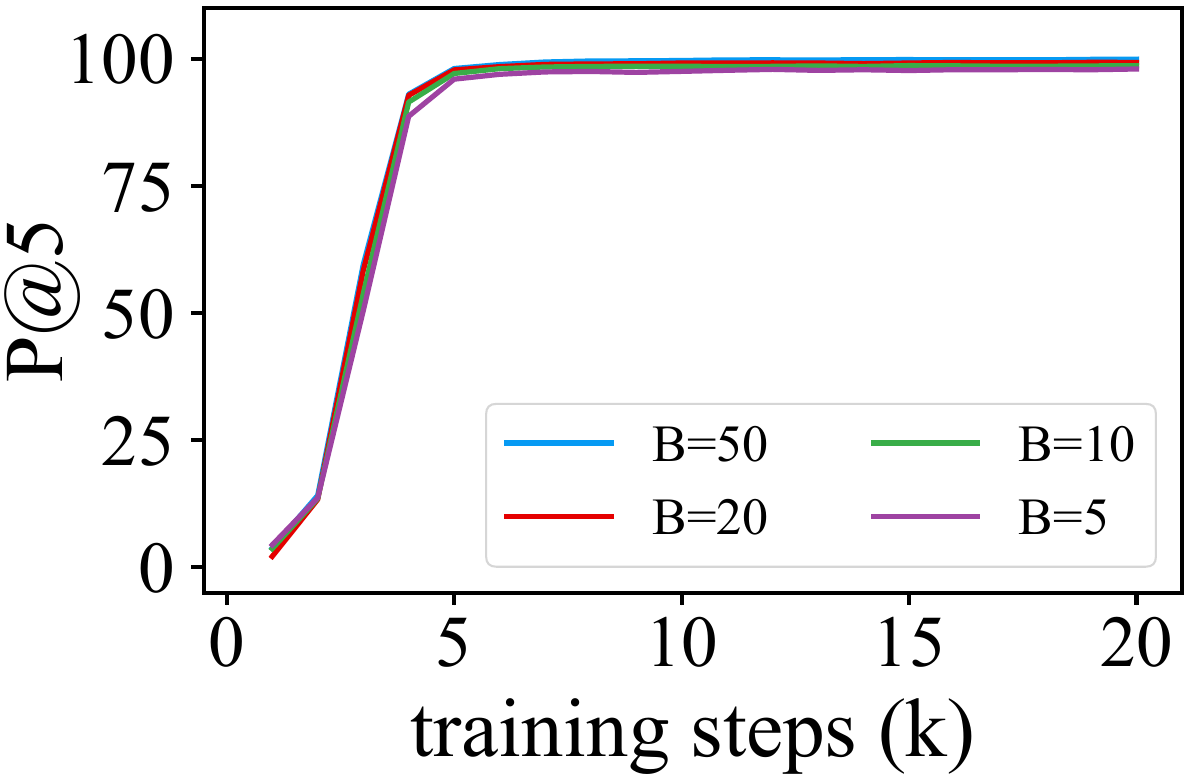}
}\\

\caption{Performance with different \# of generated queries.}
\label{fig:vary_q}

\end{figure}

\section{Supplementary Experimental Results}

\subsection{Runtime of index update}
\label{appendix:C.1}
Figure~\ref{fig:inc_time_appen1} and Figure~\ref{fig:inc_time_appen2}  illustrate  the time taken for index update across sequential batches $D^0$ to $D^3$ on FetaQA and OpenWikiTable. The observation is consistent across different datasets. Specifically, for tabid assignment, \textsc{Birdie}'s running time is 1-2 orders of magnitude shorter than that of \textsf{ReIndex}. This is because \textsf{ReIndex} needs to assign ids for all the tables from scratch, while \textsc{Birdie} only assigns tabids to new tables without affecting the ids of old  tables. For training time, \textsf{DSI++} is the fastest (the runtime of \textsf{CLEVER} is similar and omitted), utilizing less data than \textsf{Full} and \textsf{ReIndex} and fine-tuning the model from the last checkpoint, though its effectiveness is limited. \textsc{Birdie} demonstrates greater efficiency than \textsf{Full} and \textsf{ReIndex}, while achieving comparable average performance.

\begin{figure}[t]
\centering
\quad \quad \includegraphics[width=0.6\linewidth]{exp_fig/legend_time1.pdf}\\
\vspace{-2mm}

\subfigure[Tabid assignment time]{\begin{minipage}[t]{0.42\linewidth}
\includegraphics[width=1\linewidth]{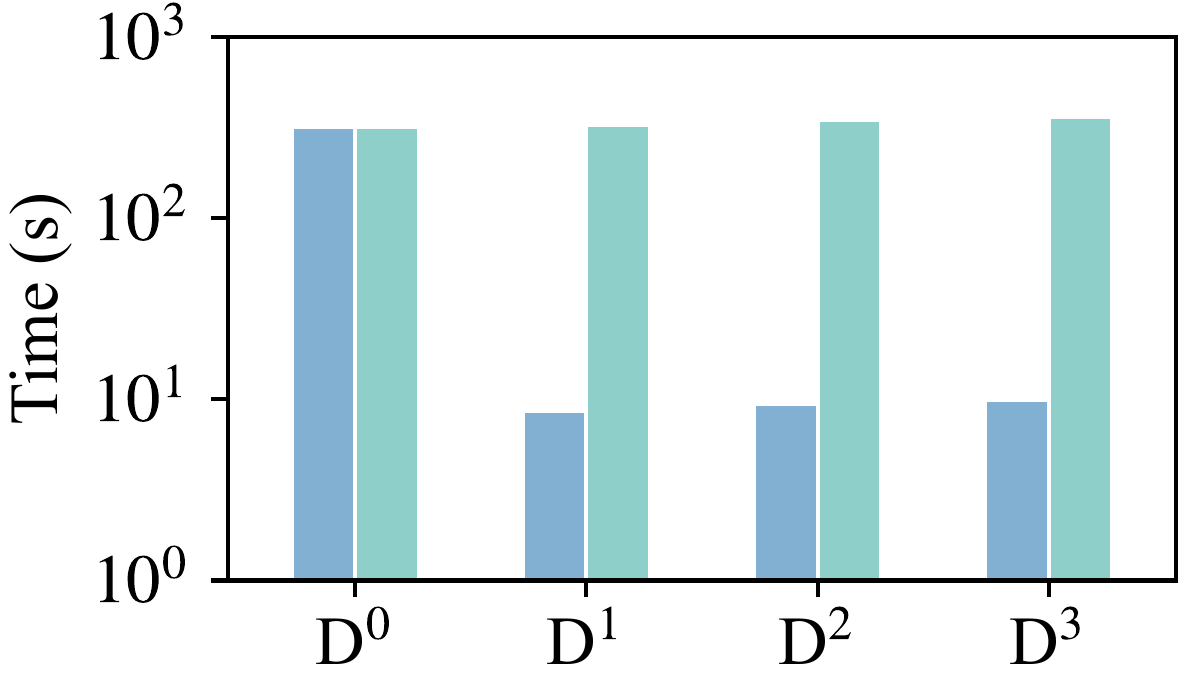}
\end{minipage}
}
\subfigure[Training time]{\begin{minipage}[t]{0.42\linewidth}
    \includegraphics[width=1\linewidth]{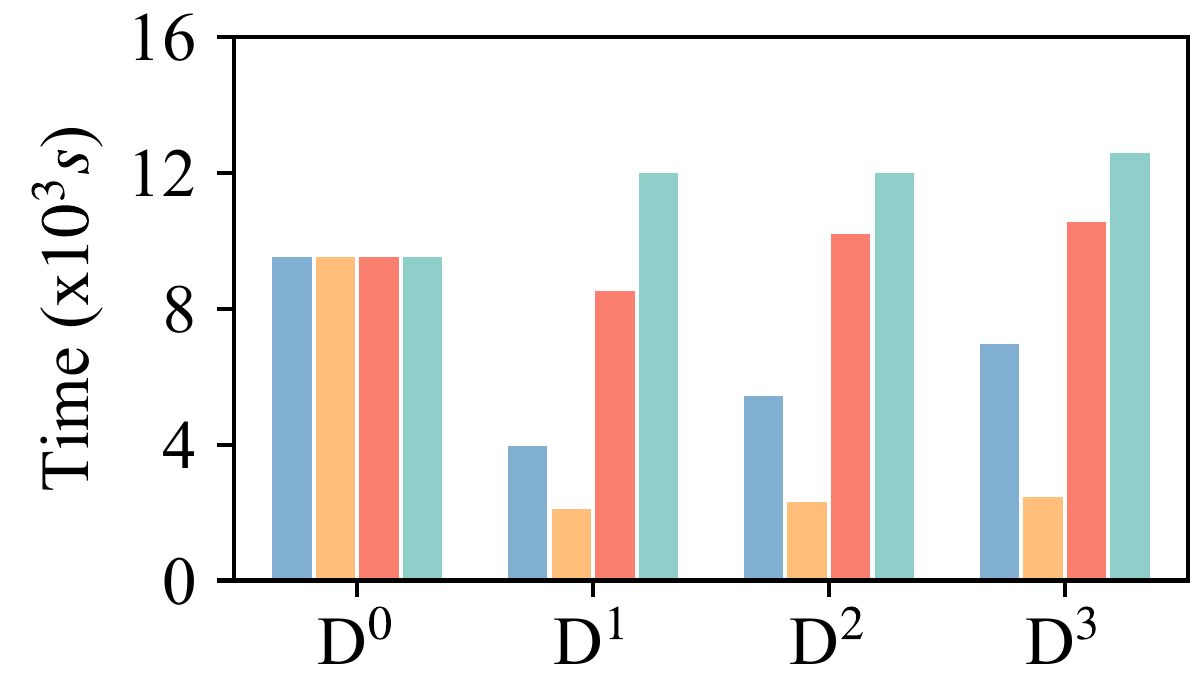}
\end{minipage}
}

\vspace{-4mm}
\caption{Runtime of index update on FetaQA.}
\label{fig:inc_time_appen1}
\vspace{-3mm}
\end{figure}

\begin{figure}[t]
\centering
\quad \quad \includegraphics[width=0.6\linewidth]{exp_fig/legend_time1.pdf}\\
\vspace{-2mm}

\subfigure[Tabid assignment time]{\begin{minipage}[t]{0.42\linewidth}
\includegraphics[width=1\linewidth]{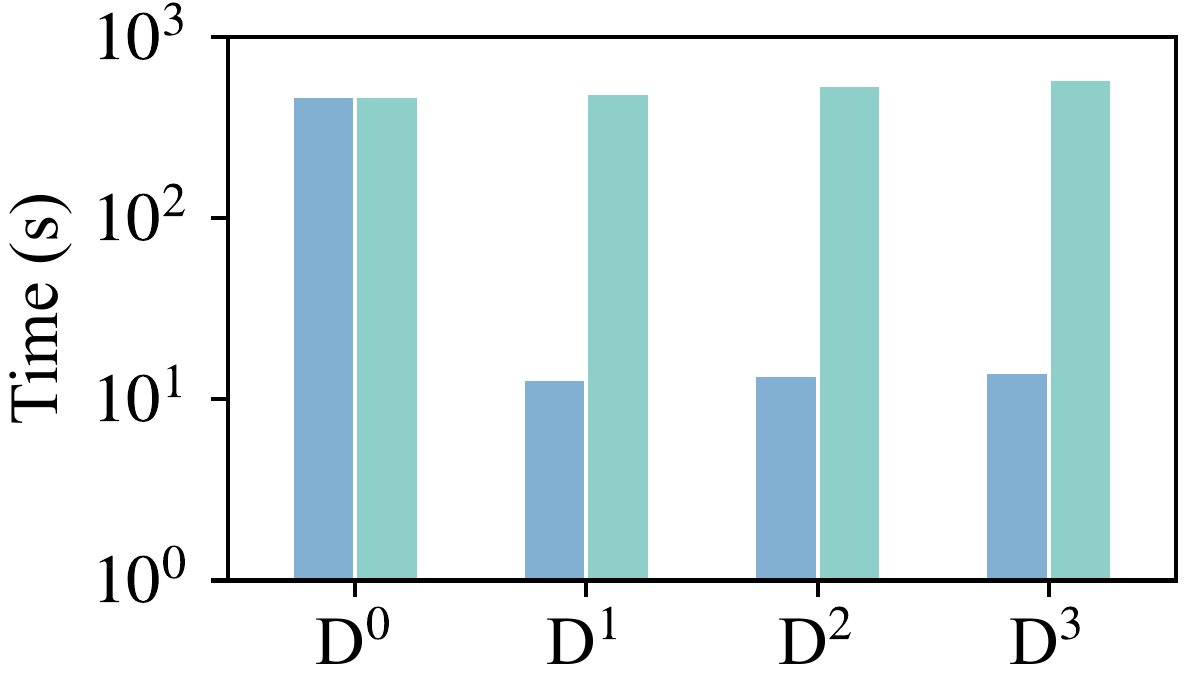}
\end{minipage}
}
\subfigure[Training time]{\begin{minipage}[t]{0.42\linewidth}
    \includegraphics[width=1\linewidth]{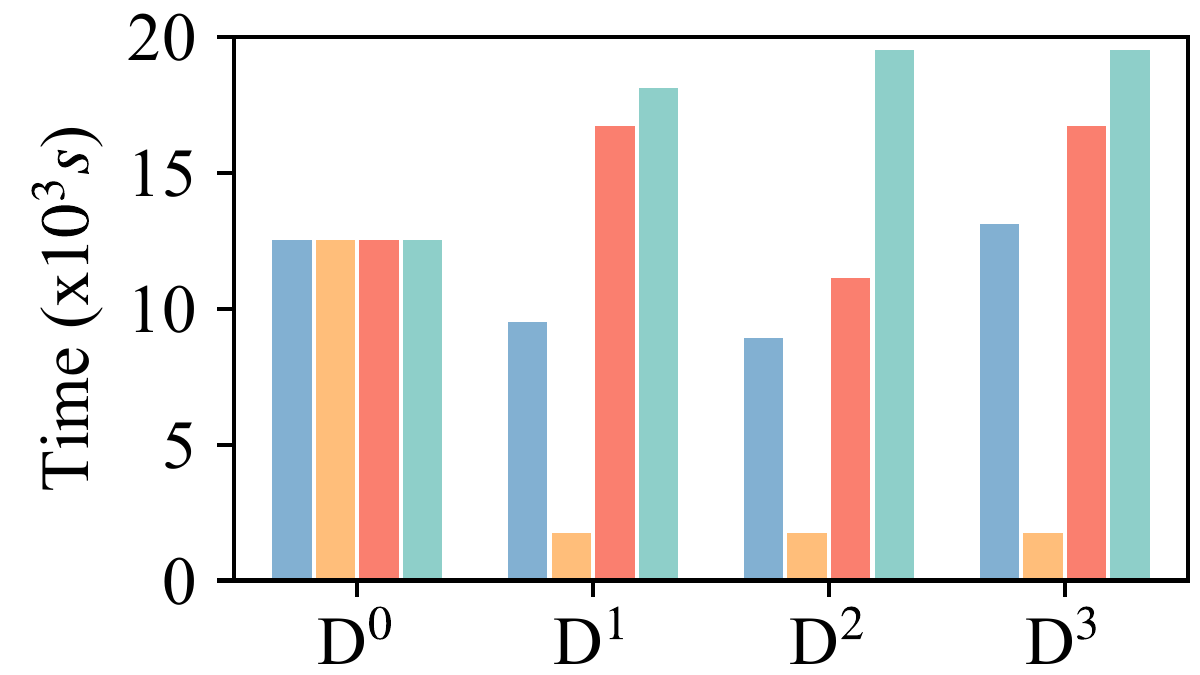}
\end{minipage}
}
\vspace{-4mm}
\caption{Runtime of index update on OpenWikiTable.}
\label{fig:inc_time_appen2}
\vspace{-2mm}
\end{figure}

\subsection{Effectiveness in the absence of old retrieval}
\label{appendix:C.2}
In the extreme case where old memory has expired, \textsc{Birdie} can easily adapt by adjusting the selection probability of candidate tables $\{\mathcal{T}^0, \dots, \mathcal{T}^u\}$ to improve its performance, where $\mathcal{T}^i = \{T^i_1, \dots, T^i_K\} \subset D^i$ represents tables generated by model $\mathcal{M}^i$ given an input query. Specifically, when old memory has been rarely retrieved over a period, \textsc{Birdie} can increase the probability of selecting $\mathcal{T}^{w} \subset D^w$ as the final result, where $D^w$ is a new batch. 
To demonstrate the effectiveness of this adjustment, we have implemented a straightforward strategy. After query mapping, we define $\mathcal{Q}^{z}$ as the set with the largest number of queries among the top-$n_q$ similar queries, and $\mathcal{Q}^{w}$ as the second largest. We prioritize $\mathcal{T}^{w}$ over $\mathcal{T}^{z}$ if $D^{w}$ is a new batch and $D^z$ is an old sub-repository. The results, presented in Table~\ref{tab:LP@1} and Table~\ref{tab:LP@5}, demonstrate that this simple extension allows \textsc{Birdie} to outperform baselines in most cases, highlighting its robustness in extreme scenarios.

\begin{table}[t]
\small
\centering
\caption{ $LP@1$ results of \textsc{Birdie} with new memory prioritization and baseline methods.} 
\renewcommand{\arraystretch}{1.1}  
\vspace{-0.1in}
\label{tab:LP@1}
\setlength{\tabcolsep}{0.8mm}{
\begin{tabular}{c|ccc|ccc|ccc} 
\specialrule{.12em}{.06em}{.06em}
\multirow{2}{*}{Methods} & \multicolumn{3}{c|}{NQ-Tables} & \multicolumn{3}{c|}{FetaQA} & \multicolumn{3}{c}{OpenWikiTable}  \\ 
\cline{2-10}
                        & $D^1$ & $D^2$ & $D^3$                     & $D^1$ & $D^2$ & $D^3$                  & $D^1$ & $D^2$ & $D^3$                          \\ 
\hline

\textsf{DSI++}                  &45.26   &40.74   &39.06                        & 88.08  & \textbf{90.23}  & 88.65                    & 95.58  &93.20   & 92.73                           \\
\textsf{CLEVER}                   & 48.42   & 43.60   & 41.45                     & 86.53 & 89.24  & 87.83                      &94.16   &92.48   & 92.2                           \\
\textsc{Birdie}                   &\textbf{53.68}   & \textbf{49.69}  &  \textbf{47.41}                     &\textbf{89.12}   & 89.48   & \textbf{88.98}                   & \textbf{98.43}  & \textbf{97.69}   &\textbf{97.36}                            \\
\specialrule{.12em}{.06em}{.06em}

\end{tabular}}
\end{table}

\begin{table}[t]
\small
\centering

\caption{ $LP@5$ results of \textsc{Birdie} with new memory prioritization and baseline methods.} 
\renewcommand{\arraystretch}{1.1}
\vspace{-0.1in}
\label{tab:LP@5}
\setlength{\tabcolsep}{0.8mm}{
\begin{tabular}{c|ccc|ccc|ccc} 
\specialrule{.12em}{.06em}{.06em}
\multirow{2}{*}{Methods} & \multicolumn{3}{c|}{NQ-Tables} & \multicolumn{3}{c|}{FetaQA} & \multicolumn{3}{c}{OpenWikiTable}  \\ 
\cline{2-10}
                         & $D^1$ & $D^2$ & $D^3$                      & $D^1$ & $D^2$ & $D^3$                  & $D^1$ & $D^2$ & $D^3$                         \\ 
\hline

\textsf{DSI++}                  &74.73   &72.71   & 69.90                       & 93.78  & \textbf{94.34}  & 92.90                    &99.28   &99.05   & 98.99                           \\
\textsf{CLEVER}                   & 75.79  &74.53   &72.07                        & 92.75   & 94.05   & 93.36                    & 99.00  &98.63   & 98.04                           \\
\textsc{Birdie}                   &\textbf{84.21}   &\textbf{80.90}   &\textbf{77.74}                        &\textbf{95.34}   & 93.86  & \textbf{93.40}                    & \textbf{99.90}  & \textbf{99.34}   & \textbf{98.87}                           \\
\specialrule{.12em}{.06em}{.06em}
\end{tabular}}
\end{table}

\begin{table*}[h]
\small
\centering
\caption{Examples of query and its corresponding table's title information in OpenWikiTable  }
\label{tab:case_openwiki}
\vspace{-2mm}
\setlength{\tabcolsep}{0.7mm}{
\begin{tabular}{|l|l|} 
\hline
\multicolumn{1}{|c|}{\textbf{Query}} & \multicolumn{1}{c|}{\textbf{Title of the corresponding table}}  \\ 
\hline
 \makecell[c]{What is the pinnacle height for Metcalf, Georgia among the \\  \colorbox{green!15}{list of tallest structures in the world}  \colorbox{green!15}{(past or present) 600 m and taller (1,969 ft)} ?}  &  \makecell[l]{\texttt{Page Title}: \colorbox{green!15}{List of tallest structures in the world} \\ \texttt{Caption}: Structures \colorbox{green!15} {(past or present) 600 m and taller (1,969 ft)}}             \\
\hline
  \makecell[c]{From the  \colorbox{green!15}{2009 Supersport World Championship season}, if the \\ Autodromo Nazionale Monza is the circuit, what's the report?}    &   \makecell[l]{\texttt{Page Title}: \colorbox{green!15}{2009 Supersport World Championship season} \\   \texttt{Caption}: Race calendar and results }\\ 
\hline
  \makecell[c]{What was the outcome of the district represented by Thomas R. Gold \\ in the  \colorbox{green!15}{United States House of Representatives elections of 1812} ?}                          &   \makecell[l]{\texttt{Page Title}: \colorbox{green!15}{United States House of Representatives elections, 1812} \\   \texttt{Caption}: New York }                           \\ 
\hline
\makecell[c]{What is the type of electronics with the Gamecube platform under \\ \colorbox{green!15}{Harry Potter and the Chamber of Secrets} ?} & \makecell[l]{\texttt{Page Title}: \colorbox{green!15}{Harry Potter and the Chamber of Secrets} \\ \texttt{Caption}: Video game}\\
\hline
\end{tabular}}
\vspace{3mm}
\end{table*}

\section{Supplementary Case Study}
\subsection{Examples of query-table pairs from OpenWikiTable}
All methods perform better on OpenWikiTable compared to the other two datasets. This can be attributed to: (i) the queries in OpenWikiTable benchmark contain detailed descriptions of the ground truth table (i.e. page title, section title, caption), making it easier to locate the relevant table using textual similarity, even using a dense passage retriever; and (ii) a remarkable 99.8\% (24,634 out of 24,680) of tables in OpenWikiTable have unique titles, simplifying the task of identifying the correct table using title information alone. While reasoning across multiple rows and columns in OpenWikiTable is challenging for QA tasks, this does not make the table discovery process equivalently difficult. Instead, table discovery is relatively straightforward for OpenWikiTable due to the substantial overlap between the queries and the titles of ground truth tables. 
Table~\ref{tab:case_openwiki} presents examples of query-table pairs from OpenWikiTable, highlighting why table discovery in this dataset is simpler.

\begin{figure*}[t]
\vspace{2mm}
  \centering
  \includegraphics[width=1\linewidth]{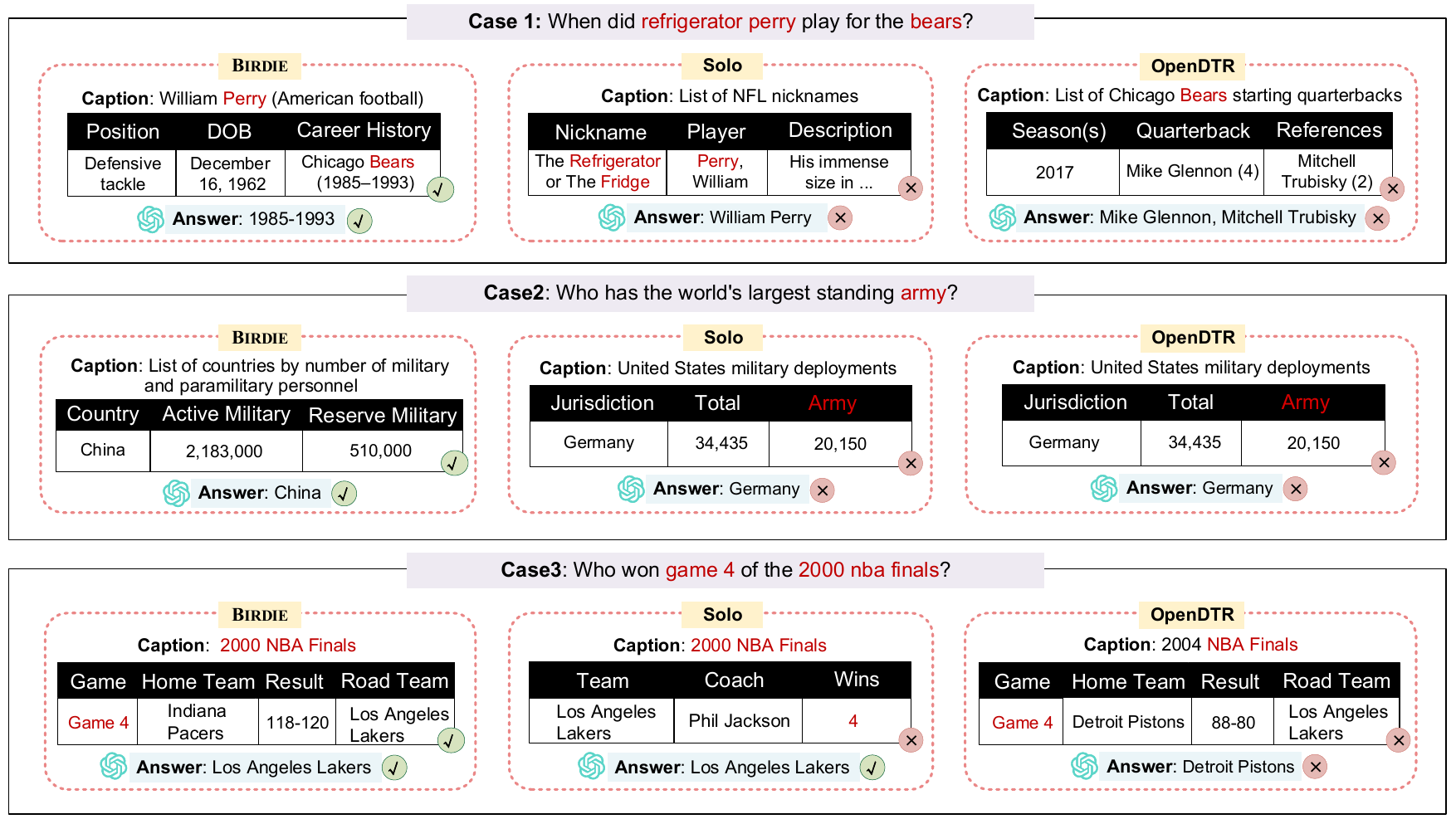}\vspace{-2mm}
  \caption{{NL-driven table discovery and QA results using GPT-4o, showing only the row serving as evidence for the QA result.}
  \label{fig:case_appendix}}
    \vspace{-3mm}
\end{figure*}

\subsection{Cases of table discovery and table QA}

We include additional two representative cases  to highlight the advantages of $\textsc{Birdie}$, illustrated in Figure~\ref{fig:case_appendix}. In Case 2, $\textsc{Birdie}$ successfully captures the key intent of the query rather than blindly matching overlapping terms (e.g., ``army") between the query and table attributes. In Case 3, despite the presence of a specific sports term (e.g., ``NBA Finals'') in the query, identifying the correct table remains challenging, as multiple tables contain identical or similar titles with this jargon. 
Nevertheless, only \textsc{Birdie} identifies the correct table. 
While the table QA result from the table retrieved by \textsf{Solo} is correct, it is coincidental.
Specifically, \textsf{Solo} retrieves a table showing the total number of wins for each team in the 2000 NBA Finals but fails to specify the winner of individual games, such as Game 4. The correct answer is coincidental because the team with 4 total wins happened to win Game 4.
\end{document}